\newtheorem{Thm}{Theorem}
\newtheorem{Lem}[Thm]{Lemma}
\newtheorem{Prop}[Thm]{Proposition}
\theoremstyle{definition}
\newtheorem{Rem}[Thm]{Remark}
\newcommand{\ketbra}[2]{\left|{#1}\right\rangle\left\langle{#2}\right|}
\newcommand{\sbigotimes}
{%
\mathop{\mathchoice{\textstyle\bigotimes}{\bigotimes}{\bigotimes}{\bigotimes}}%
}
\newcommand{\Tr}{\mathop{\mathrm{Tr}}\nolimits}
\begin{document}

\title{Quantum state rotation: Circularly transferring quantum states of multiple users}

\author{Yonghae Lee} \email{yonghaelee@khu.ac.kr}
\affiliation{
Department of Mathematics and Research Institute for Basic Sciences,
Kyung Hee University, Seoul 02447, Korea}

\author{Hayata Yamasaki} \email{hayata.yamasaki@gmail.com}
\affiliation{
Photon Science Center, Graduate School of Engineering, The University of Tokyo, 7-3-1 Hongo, Bunkyo-ku, Tokyo 113-8656, Japan}
\affiliation{Institute for Quantum Optics and Quantum Information --- IQOQI Vienna, Austrian Academy of Sciences, Boltzmanngasse 3, 1090 Vienna, Austria}

\author{Soojoon Lee} \email{level@khu.ac.kr}
\affiliation{
Department of Mathematics and Research Institute for Basic Sciences,
Kyung Hee University, Seoul 02447, Korea}

\pacs{
03.67.Hk, 
89.70.Cf, 
03.67.Mn 
}
\date{\today}

\begin{abstract}
Quantum state exchange is a quantum communication task for two users
in which the users faithfully exchange their respective parts of an initial state
under the asymptotic scenario.
In this work,
we generalize the quantum state exchange task
to a quantum communication task for $M$ users
in which the users circularly transfer their respective parts of an initial state.
We assume that
every pair of users may share entanglement resources,
and they use local operations and classical communication
in order to perform the task.
We call this generalized task the (asymptotic) quantum state rotation.
First of all,
we formally define the quantum state rotation task and its optimal entanglement cost,
which means the least amount of total entanglement
required to carry out the task.
We then present lower and upper bounds
on the optimal entanglement cost,
and provide conditions for
zero optimal entanglement cost.
Based on these results,
we find out a difference between the quantum state rotation task for three or more users
and the quantum state exchange task.
\end{abstract}

\maketitle

\section{Introduction}

In quantum information theory,
some quantum communication tasks~\cite{BBCJPW93,HOW05,HOW06,D06,O08,DY08,YD09,ADHW09},
such as quantum teleportation~\cite{BBCJPW93}
and quantum state merging~\cite{HOW05,HOW06},
commonly deal with a two-user setting in which a quantum state is transmitted 
from one user to the other.
In these quantum communication tasks,
the users are determined as either a sender or a receiver, as depicted in Fig.~\ref{fig:SendingQSE}(a),
and it is assumed that the users are in each other's laboratories far apart.
So, in order to successfully perform the tasks, 
it is required to consume non-local resources, such as ebits and bit channels.

One of the research topics related to quantum communication tasks is 
to find out the minimal amounts of the non-local resources consumed during the tasks.
Such research is considered to be important in quantum information theory, since the minimal amounts can often be represented as entropic quantities, such as the von Neumann entropy and the quantum conditional entropy~\cite{W13}, and hence provides a way to interpret these quantities from an operational viewpoint.
For example, the quantum conditional entropy $H(A|B)$ of a quantum state $\rho_{AB}$ can be operationally interpreted as
the minimal amount of entanglement needed in the quantum state merging~\cite{HOW05,HOW06}
in which Alice and Bob share parts $A$ and $B$ of the quantum state $\rho_{AB}$,
respectively, and Alice's part $A$ is merged to Bob
via entanglement-assisted local operations and classical communication (LOCC).
Note that, in the quantum state merging task, Bob does not transmit his part $B$ to Alice, but he can use it as quantum side information .

\begin{figure}
\includegraphics[clip,width=\columnwidth]{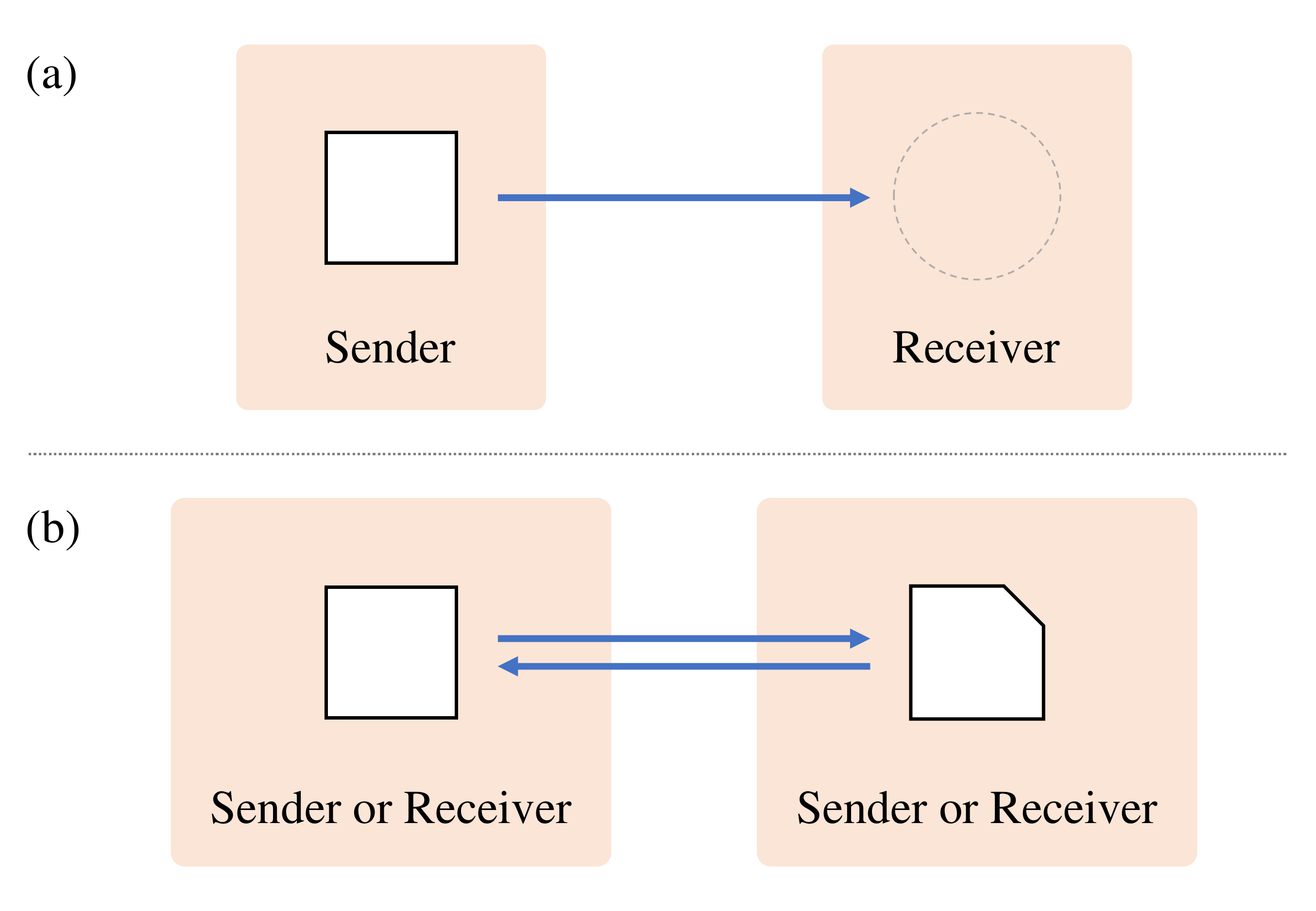}
\caption{
In these illustrations, quantum states are represented as polygons,
and the users transmit their quantum states in the directions of arrows;
(a) Illustration of transmitting a quantum state from a sender to a receiver;
(b) Illustration of exchanging two quantum states of two users:
In this case, each user is not only a sender but also a receiver.
}
\label{fig:SendingQSE}
\end{figure}

Quantum state exchange (QSE)~\cite{OW08,LTYAL19,LYAL19},
on the other hand,
is a more complex quantum communication task
in which two users share parts $A$ and $B$ of a quantum state $\rho_{AB}$,
and they exchange their respective parts with each other
by means of entanglement-assisted LOCC.
Thus, 
the users of the QSE task do not take one of the roles of a sender and a receiver, but both, as depicted in Fig.~\ref{fig:SendingQSE}(b).
The main concern of the QSE is to figure out the minimal amount of entanglement between the users
under the assumption that classical communication is free.
The authors of the original QSE task~\cite{OW08} named the minimal amount of entanglement ``uncommon information.''
Unlike other quantum communication tasks~\cite{BBCJPW93,HOW05,HOW06,D06,O08,DY08,YD09,ADHW09},
an exact value of the uncommon information is unknown to date.

In this work,
we introduce a new quantum communication task involving three or more users,
which is similar to the rotation in volleyball.
In a volleyball game, players rotate on the court when their team makes a serve.
Similarly to this rotation,
one may think of users of the new task and their quantum states
as locations of the court and the players, respectively.
More specifically,
$M$ users of the new task transmit their respective quantum states
from the $i^{\mathrm{th}}$ user to the $(i+1)^{\mathrm{th}}$ user
via entanglement-assisted LOCC, while keeping entanglement with an environment system.
We call this task quantum state rotation (QSR).
We provide a simple illustration of the QSR task for three users in Fig.~\ref{fig:ThreeQSR}(a).
Note that the QSR task for two users,
i.e., $M=2$, is nothing but the QSE task described in Fig.~\ref{fig:SendingQSE}(b),
since the $1^{\mathrm{st}}$ user transmits his/her quantum state to the $2^{\mathrm{nd}}$ user,
and the $1^{\mathrm{st}}$ user also receives the $2^{\mathrm{nd}}$ user's quantum state.
So the QSR task can be regarded as one possible generalization of the QSE task.

\begin{figure}
\subfloat[]{%
  \includegraphics[clip,width=.735\columnwidth]{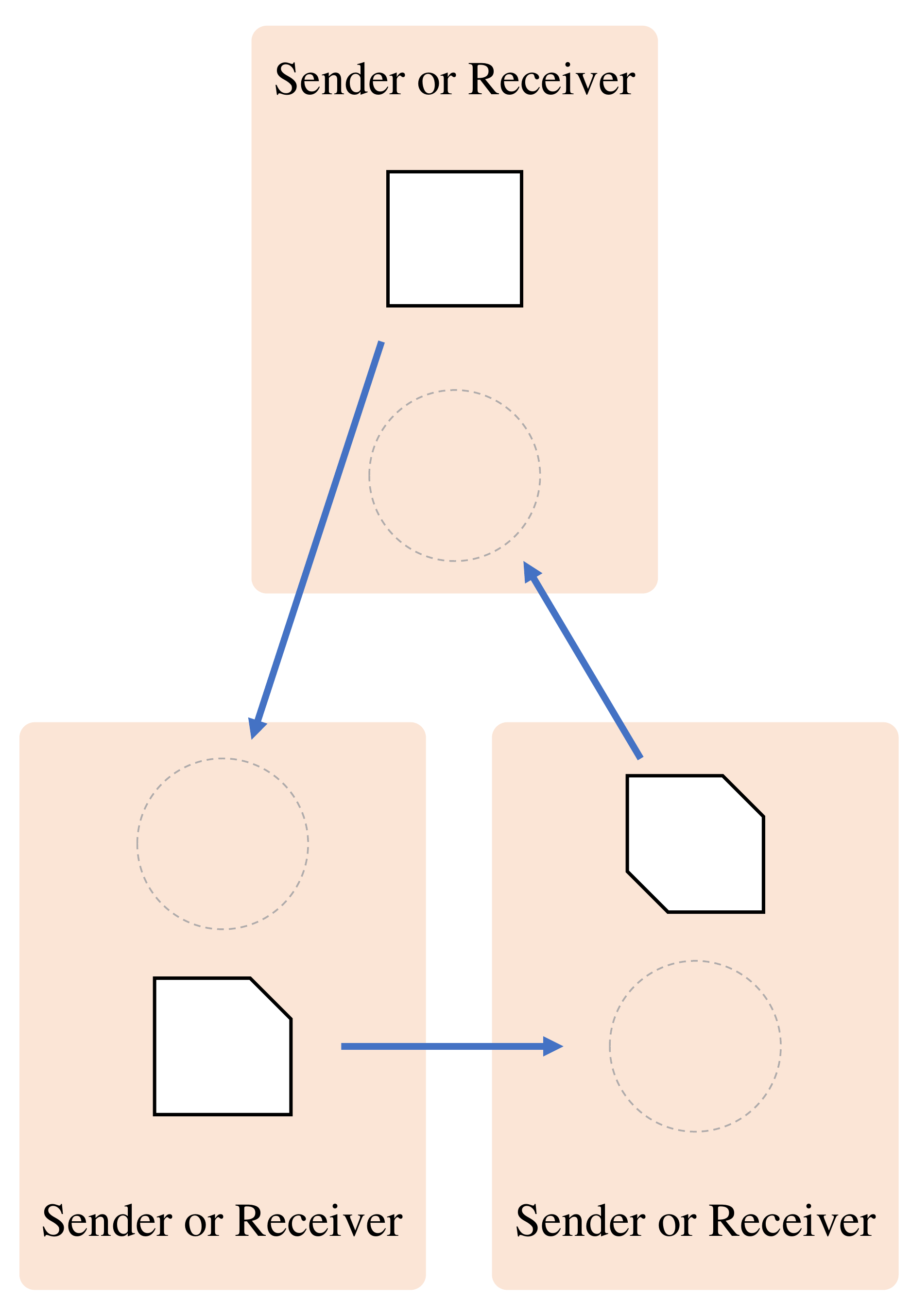}%
}

\subfloat[]{%
  \includegraphics[clip,width=.735\columnwidth]{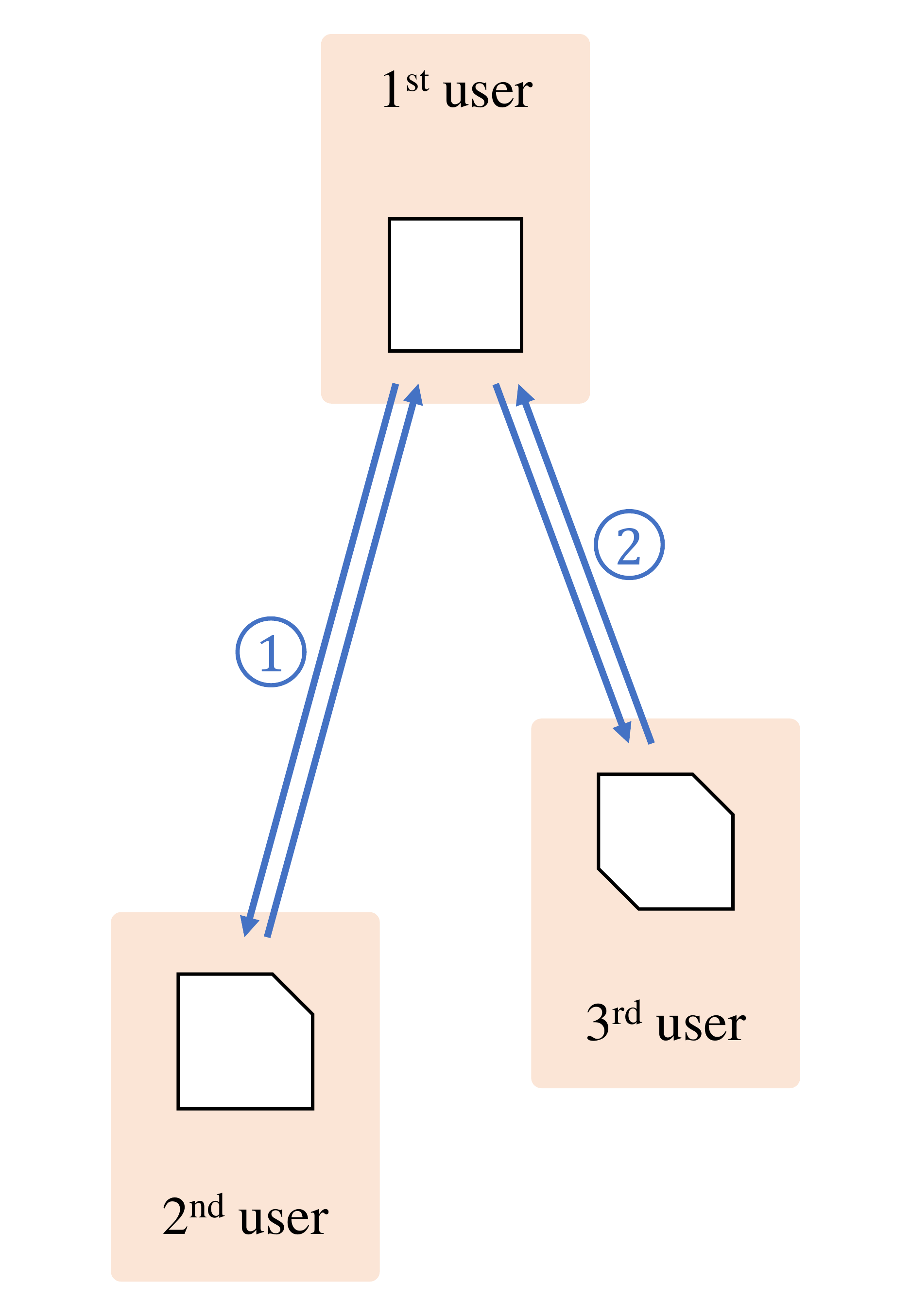}%
}
\caption{In each illustration, quantum states are represented as polygons,
users transmit their quantum states in the directions of arrows;
(a) Concept of the quantum state rotation task for three users:
The users are not only a sender but also a receiver;
(b) Performing the quantum state rotation task for three users with a combination of the quantum state exchange tasks:
The $1^{\mathrm{st}}$ user and the $2^{\mathrm{nd}}$ user firstly exchange their quantum states,
and then the $1^{\mathrm{st}}$ user and the $3^{\mathrm{rd}}$ user perform the quantum state exchange task again.
Consequently, these two quantum state exchange tasks carry out the quantum state rotation task
for three users.
}
\label{fig:ThreeQSR}
\end{figure}

Intuitively,
the QSR task for three or more users may be achievable
by sequentially performing the QSE tasks for several pairs of the users.
For example,
as depicted in Fig.~\ref{fig:ThreeQSR}(b),
two QSE tasks among three users can carry out the QSR task for three users.
So one may think it is not necessary to conduct a study on the QSR task.
However,
from the perspective of entanglement resources, it is unclear whether the combination of QSE tasks gives the minimal amount of total entanglement needed in the QSR task,
since the uncommon information of the QSE task is unknown.
Moreover,
the QSR for three or more users might exhibit intrinsic properties of multi-partite entanglement, which cannot be understood only by a straightforward generalization of the analysis of the QSE for two users.
On this account,
the main parts of our work focus on analyzing the minimal amount of total entanglement
consumed among the $M$ users for achieving the QSR task.

This paper is organized as follows.
In Sec.~\ref{sec:DEFs},
we provide formal descriptions of the QSR task,
the achievable total entanglement rate,
and the optimal entanglement cost.
In Sec.~\ref{sec:LB} and Sec.~\ref{sec:AUB},
we present lower and upper bounds,
respectively,
on the optimal entanglement cost of the QSR task.
In Sec.~\ref{sec:CZATER},
we present conditions obtained by zero optimal entanglement costs
and zero achievable total entanglement rates.
Based on these results, in Sec.~\ref{sec:DbRaE},
we show that a property of the QSE task does not hold
in the QSR task for three or more users.
In Sec.~\ref{sec:Exmpls},
we consider two settings of the QSR tasks and investigate what the users should do to reduce the optimal entanglement cost in each setting.
In the first setting,
some users do not have to participate in the QSR task.
In the second,
some users can cooperate to achieve the task by performing global operations over the users.
In Sec.~\ref{sec:conclusion},
we summarize and discuss our results.

\section{Definitions} \label{sec:DEFs}

In this section,
we explain notations used throughout this paper,
and we describe formal definitions of the QSR task
and its optimal entanglement cost.

\subsection{Notations: Systems, states, channels, and entropies} \label{subsec:Ns}

We assume that all Hilbert spaces $\mathcal{H}$ in this paper are finite-dimensional,
and let $d_X$ denote the dimension of the Hilbert space $\mathcal{H}_X$ representing a quantum system $X$.
A composite quantum system of two quantum systems $X$ and $Y$
is described by the tensor product $\mathcal{H}_X\otimes \mathcal{H}_Y$ of
the Hilbert spaces $\mathcal{H}_X$ and $\mathcal{H}_Y$.
For the sake of convenience,
the composite quantum system is denoted by $X\otimes Y$ or $XY$,
and $d_X$ is called the dimension of the quantum system $X$.

Let $\mathcal{D}(\mathcal{H})$ be the set of density operators on a Hilbert space $\mathcal{H}$,
i.e.,
$\mathcal{D}(\mathcal{H})
=\{\rho\in\mathcal{L}(\mathcal{H}) : \rho\ge0, \Tr\rho=1\}$,
where $\mathcal{L}(\mathcal{H})$ is the set of linear operators on $\mathcal{H}$.
For a Hilbert space $\mathcal{H}_X$ representing a quantum system $X$,
we use notations $\mathcal{D}(X)$ and $\mathcal{L}(X)$ instead of $\mathcal{D}(\mathcal{H}_X)$ and $\mathcal{L}(\mathcal{H}_X)$, respectively, in order to emphasize the quantum system $X$.
The elements of the set $\mathcal{D}(\mathcal{H})$ are called quantum states.
If a quantum state $\rho$ is a rank-1 projector, i.e., it is represented as
\begin{equation}
\psi \coloneqq\ket{\psi}\bra{\psi},
\end{equation}
where $\ket{\psi}$ is a normalized vector on the Hilbert space $\mathcal{H}$, the quantum state is said to be pure.
For pure quantum states $\ket{\phi}\bra{\phi}$, we also call the unit vector $\ket{\phi}$ a pure quantum state.

For quantum systems $X$ and $Y$,
a map $\mathcal{N}\colon\mathcal{L}(X)\to\mathcal{L}(Y)$ is called a quantum channel
if it is linear, completely positive, and trace-preserving~\cite{W13}.
As a special case of quantum channels, $\mathrm{id}_{\mathcal{L}(X)}$ is the identity map on $\mathcal{L}(X)$.
When $\mathcal{L}(X)=\mathcal{L}(Y)$,
$\mathrm{id}_{\mathcal{L}(X) \to \mathcal{L}(Y)}$ means the identity map from $\mathcal{L}(X)$ to $\mathcal{L}(Y)$.
For reference, $\mathds{1}_{X}$ is the identity matrix on the quantum system $X$.

The von Neumann entropy $H(\rho)$ of a (pure) quantum state $\rho$ on a quantum system $X$
is defined as $H(\rho)=H(X)_{\rho}=-\Tr\rho\log\rho$.
For a (pure) quantum state $\sigma$ on a bipartite quantum system $XY$,
the von Neumann entropy $H(X)_{\sigma}$ of $\sigma$ on the subsystem $X$ is calculated as
$H(X)_{\sigma}=H(\Tr_{Y}\sigma)$.
Then the quantum conditional entropy $H(X|Y)_{\sigma}$ and the quantum mutual information $I(X;Y)_{\sigma}$ 
of the bipartite quantum state $\sigma$ are given by
\begin{eqnarray}
H(X|Y)_{\sigma}
&=& H(XY)_{\sigma}-H(Y)_{\sigma}, \\
I(X;Y)_{\sigma}
&=& H(X)_{\sigma}+H(Y)_{\sigma}-H(XY)_{\sigma}.
\end{eqnarray}

Finally, the number of users of the QSR task is denoted
by a natural number $M \ge 2$.
If the $i^{\mathrm{th}}$ user has a quantum system $X_i$ for each $i$
in the set $[M]=\{1,2,\ldots,M\}$,
then the addition of two indices is defined modulo $M$, with offset $1$.

\subsection{Formal description of quantum state rotation} \label{subsec:FDQSR}

Before describing definitions of the QSR task,
we briefly explain a conception of the QSR task.
The QSR is a quantum communication task for $M$ users.
The users initially share an $M$-partite quantum state,
and circularly transfer their respective quantum states
from the $i^{\mathrm{th}}$ user to the $(i+1)^{\mathrm{th}}$ user
via entanglement-assisted LOCC.

\begin{figure}
\includegraphics[clip,width=\columnwidth]{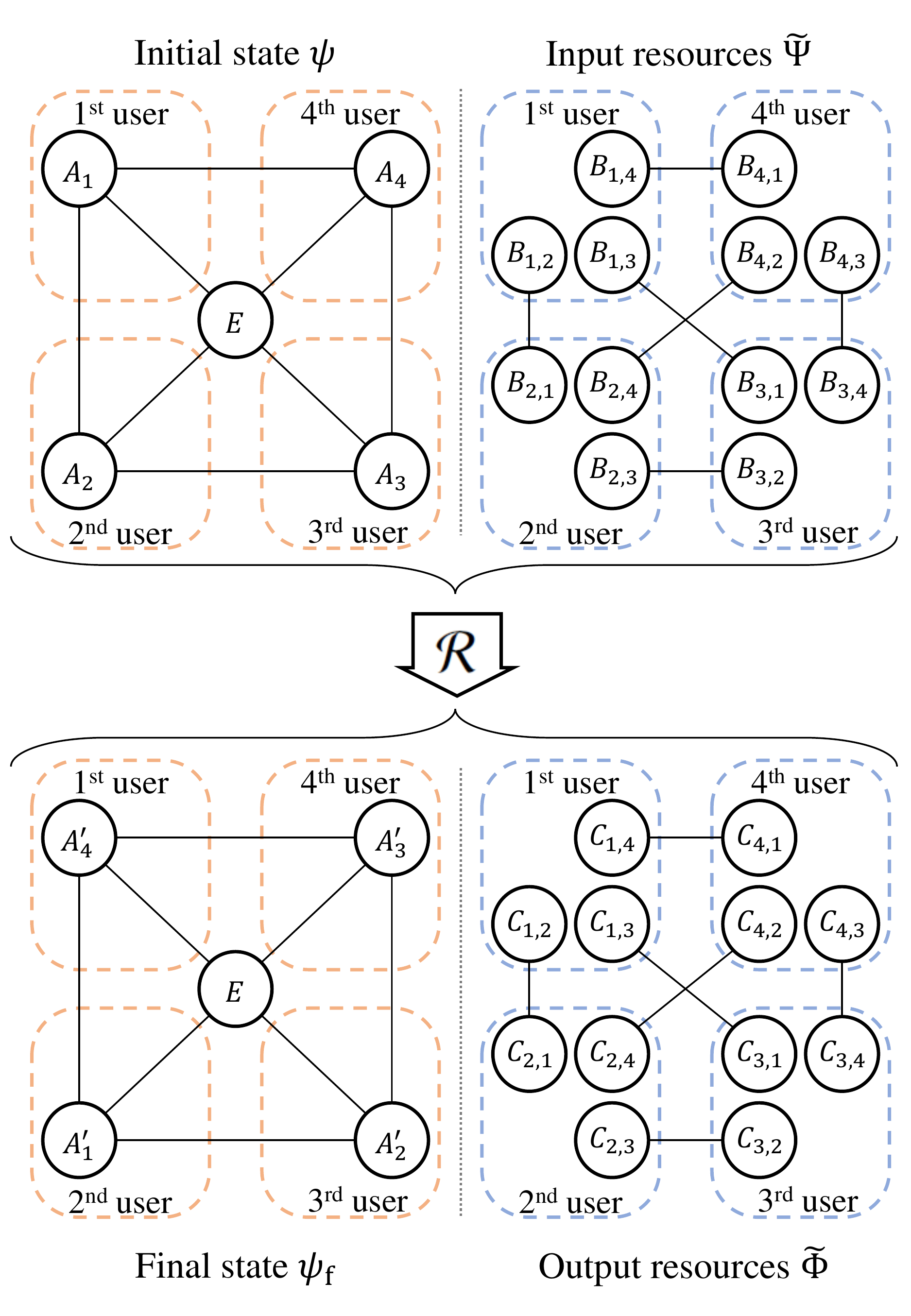}
\caption{
Illustration of the quantum state rotation task for four users:
Circles indicate quantum systems for the task, and correlations among the quantum systems are represented 
by lines connecting them.
In this task,
four users want to transform an initial state $\ket{\psi}_{A_1A_2A_3A_4E}$
into a final state $\ket{\psi_{\mathrm{f}}}_{A'_1A'_2A'_3A'_4E}$.
To carry out this task,
they apply LOCC $\mathcal{R}$
to the initial state $\psi$ and input entanglement resources $\tilde{\Psi}$,
while they cannot apply any operations on the environment system $E$.
After the task,
the $i^{\mathrm{th}}$ user's quantum state is transmitted to the $(i+1)^{\mathrm{th}}$ user's quantum system $A'_i$,
and they can gain output entanglement resources $\tilde{\Phi}$ from the task.
The allocation of entanglement resources, such as $\tilde{\Psi}$ and $\tilde{\Phi}$, is called
the complete entanglement allocation.
}
\label{fig:QSR_task}
\end{figure}

To be specific,
let $\ket{\psi}_{AE}$ be the \emph{initial state}
of the QSR task,
where $A$ is an $M$-partite quantum system with $A = A_1A_2\cdots A_M$,
and $E$ is the environment system.
Assume that the $i^{\mathrm{th}}$ user holds a quantum subsystem $A_i$ of $A$,
so that the $i^{\mathrm{th}}$ user has the $i^{\mathrm{th}}$ part of the initial state $\ket{\psi}_{AE}$.
Let us now consider an $M$-partite quantum system $A'$ with $A' = A'_1A'_2\cdots A'_M$ and $\mathcal{H}_{A'_i}=\mathcal{H}_{A_i}$,
and assume that the $i^{\mathrm{th}}$ user holds a quantum subsystem $A'_{i-1}$ of $A'$.
Then
the \emph{final state} $\ket{\psi_{\mathrm{f}}}_{A'E}$ of the QSR task is defined by using $\ket{\psi}_{AE}$
as follows:
\begin{equation}
\psi_{\mathrm{f}}
=
\left(
\sbigotimes_{i=1}^{M} \mathrm{id}_{\mathcal{L}(A_i)\to \mathcal{L}(A'_i)} \otimes \mathrm{id}_{\mathcal{L}(E)}
\right)
(\psi),
\end{equation}
which means that
the $i^{\mathrm{th}}$ user's quantum state on the quantum system $A_i$ is transferred to the $(i+1)^{\mathrm{th}}$ user's quantum system $A'_i$.
The initial state and the final state for four users,
i.e.,
$M=4$, are presented in Fig.~\ref{fig:QSR_task}.

In the QSR task,
the users make use of LOCC assisted by shared entanglement,
in order to transform the initial state $\ket{\psi}_{AE}$
into the final state $\ket{\psi_{\mathrm{f}}}_{A'E}$.
In this work,
we assume that every two of the $M$ users of the QSR task
may share an entanglement resource of varying dimensions.
More specifically, for each $i\neq j$,
the $i^{\mathrm{th}}$ user and the $j^{\mathrm{th}}$ user have additional quantum systems $B_{i,j}$ and $B_{j,i}$,
respectively, whose dimensions are the same, and the two users share a bipartite maximally entangled state $\ket{\Psi_{i,j}}$ on the quantum systems $B_{i,j}B_{j,i}$ given by
\begin{equation}
\ket{\Psi_{i,j}}
=\frac{1}{\sqrt{d_{B_{i,j}}}}\sum_{k=0}^{d_{B_{i,j}}-1}
\ket{k}_{B_{i,j}}
\otimes
\ket{k}_{B_{j,i}}.
\end{equation}

As in other quantum communication tasks~\cite{HOW05,HOW06,D06,O08,DY08,YD09,ADHW09,OW08,LTYAL19,LYAL19},
the users of the QSR task may gain extra entanglement resources from the QSR task.
To describe these entanglement resources,
we also assume that, for each $i\neq j$,
the $i^{\mathrm{th}}$ user and the $j^{\mathrm{th}}$ user have quantum systems $C_{i,j}$ and $C_{j,i}$,
respectively, with $d_{C_{i,j}}=d_{C_{j,i}}$,
and they share a bipartite maximally entangled state $\ket{\Phi_{i,j}}$ on the quantum systems $C_{i,j}C_{j,i}$
as an entanglement resource after the QSR task,
i.e.,
\begin{equation}
\ket{\Phi_{i,j}}
=\frac{1}{\sqrt{d_{C_{i,j}}}}\sum_{k=0}^{d_{C_{i,j}}-1}
\ket{k}_{C_{i,j}}
\otimes
\ket{k}_{C_{j,i}}.
\end{equation}

Let $\tilde{\Psi}$ and $\tilde{\Phi}$ be global quantum states representing all entanglement resources
shared among the $M$ users before and after the QSR task,
respectively,
which are defined as
\begin{equation} \label{Eq:ERs} 
\tilde{\Psi} = \sbigotimes_{\substack{ i,j\in[M] \\ i<j }} {\Psi}_{i,j}
\quad \mathrm{and} \quad
\tilde{\Phi} = \sbigotimes_{\substack{ i,j\in[M] \\ i<j }} {\Phi}_{i,j}.
\end{equation}
The shapes of entanglement resources $\tilde{\Psi}$ and $\tilde{\Phi}$ 
correspond to a complete graph,
if we regard the $M$ users and their entanglement resources as vertices and edges of a graph,
respectively.
In this work,
we call such a resource allocation of entangled states
the \emph{complete entanglement allocation},
and the complete entanglement allocation for four users is described in Fig.~\ref{fig:QSR_task}.

In the QSR task, a quantum channel
\begin{equation} \label{eq:DomainRange}
\mathcal{R}\colon
\mathcal{L}\left(\sbigotimes_{i=1}^M A_i B_i\right)
\longrightarrow
\mathcal{L}\left(\sbigotimes_{i=1}^M A'_i C_i\right)
\end{equation}
is called the \emph{QSR protocol} of the initial state $\ket{\psi}_{AE}$ with error $\varepsilon$,
if it is performed by LOCC among the $M$ users and satisfies 
\begin{equation}
\left\|
\left( \mathcal{R} \otimes \mathrm{id}_{\mathcal{L}(E)} \right)
\left( \psi \otimes \tilde{\Psi} \right)
-
\psi_{\mathrm{f}} \otimes \tilde{\Phi}
\right\|_{1}
\le
\varepsilon,
\end{equation}
where quantum systems $B_i$ and $C_i$ are defined by
\begin{equation} \label{eq:ERSystems}
B_i = \sbigotimes_{j\in[M]\setminus\{i\}} B_{i,j}
\quad \mathrm{and} \quad
C_i = \sbigotimes_{j\in[M]\setminus\{i\}} C_{i,j},
\end{equation}
and $\|\cdot\|_1$ is the trace norm~\cite{W13}.

\subsection{Optimal entanglement cost of quantum state rotation} \label{subsec:QEC}

To investigate asymptotic limits for the total amount of entanglement,
we consider a sequence $\{\mathcal{R}_n\}_{n\in\mathbb{N}}$ 
of QSR protocols $\mathcal{R}_n$ of $\psi^{\otimes n}$
with error $\varepsilon_n$,
where $\psi^{\otimes n}$ indicates the $n$ copies of the initial state $\psi$.
We call the case dealing with 
sequences 
of QSR protocols an \emph{asymptotic scenario}.

According to the number of the initial state and the users' strategies in the asymptotic scenario, the total amount of entanglement consumed/gained among the users can differ. 
To reflect this, it is assumed that, for each $n$, 
the $i^{\mathrm{th}}$ user and the $j^{\mathrm{th}}$ user have additional quantum systems $B^{(n)}_{i,j}C^{(n)}_{i,j}$ and $B^{(n)}_{j,i}C^{(n)}_{j,i}$,
respectively, where $d_{B^{(n)}_{i,j}}=d_{B^{(n)}_{j,i}}$ and $d_{C^{(n)}_{i,j}}=d_{C^{(n)}_{j,i}}$,
and the two users share bipartite maximally entangled states $\ket{\Psi^{(n)}_{i,j}}$ and $\ket{\Phi^{(n)}_{i,j}}$ on the quantum systems $B^{(n)}_{i,j}B^{(n)}_{j,i}$ and $C^{(n)}_{i,j}C^{(n)}_{j,i}$, respectively.
In this case, for each $n$, the complete entanglement allocations before and after the QSR protocol $\mathcal{R}_n$ of $\psi^{\otimes n}$ with error $\varepsilon_n$
are represented as $\tilde{\Psi}_n$ and $\tilde{\Phi}_n$, respectively.

For the initial state $\psi$ and the sequence $\{\mathcal{R}_n\}$, 
we define the \emph{segment entanglement rate} $e_{i,j}(\psi,\{\mathcal{R}_n\})$
between the $i^{\mathrm{th}}$ user and the $j^{\mathrm{th}}$ user as
\begin{equation}  \label{eq:SER} 
e_{i,j}(\psi,\{\mathcal{R}_n\})
=
\lim_{n\to\infty}
\frac{1}{n}
\left( \log d_{B^{(n)}_{i,j}}-\log d_{C^{(n)}_{i,j}}
\right),
\end{equation}
where $i\neq j$, and logarithms are taken to base two throughout this paper.
For convenience, we define $e_{i,i}(\psi,\{\mathcal{R}_n\})$ as zero for each $i$.
Note that $e_{j,i}(\psi,\{\mathcal{R}_n\})=e_{i,j}(\psi,\{\mathcal{R}_n\})$
holds for each $i, j$.
If the segment entanglement rate $e_{i,j}(\psi,\{\mathcal{R}_n\})$
converges for each $i\neq j$,
then we can define the \emph{total entanglement rate} $e_{\mathrm{tot}}(\psi,\{\mathcal{R}_n\})$
as
\begin{equation} \label{eq:tot}
e_{\mathrm{tot}}(\psi,\{\mathcal{R}_n\})
=\sum_{\substack{ i,j\in[M] \\ i< j }} e_{i,j}(\psi,\{\mathcal{R}_n\}).
\end{equation}
A real number $r$ is said to be an (asymptotically) \emph{achievable} total entanglement rate,
if there is a sequence $\{\mathcal{R}_n\}_{n\in\mathbb{N}}$
of QSR protocols $\mathcal{R}_n$ of $\psi^{\otimes n}$
with error $\varepsilon_n$ such that
(i) for any $i,j$, $e_{i,j}(\psi,\{\mathcal{R}_n\})$ converges;
(ii) $e_{\mathrm{tot}}(\psi,\{\mathcal{R}_n\})=r$;
(iii) $\lim_{n\to\infty} \varepsilon_n = 0$.
The \emph{optimal entanglement cost} (OEC) $e_{\mathrm{opt}}(\psi)$ 
of the QSR task for $\psi$ is defined as
the infimum of the achievable total entanglement rates.

\begin{figure}
\includegraphics[clip,width=\columnwidth]{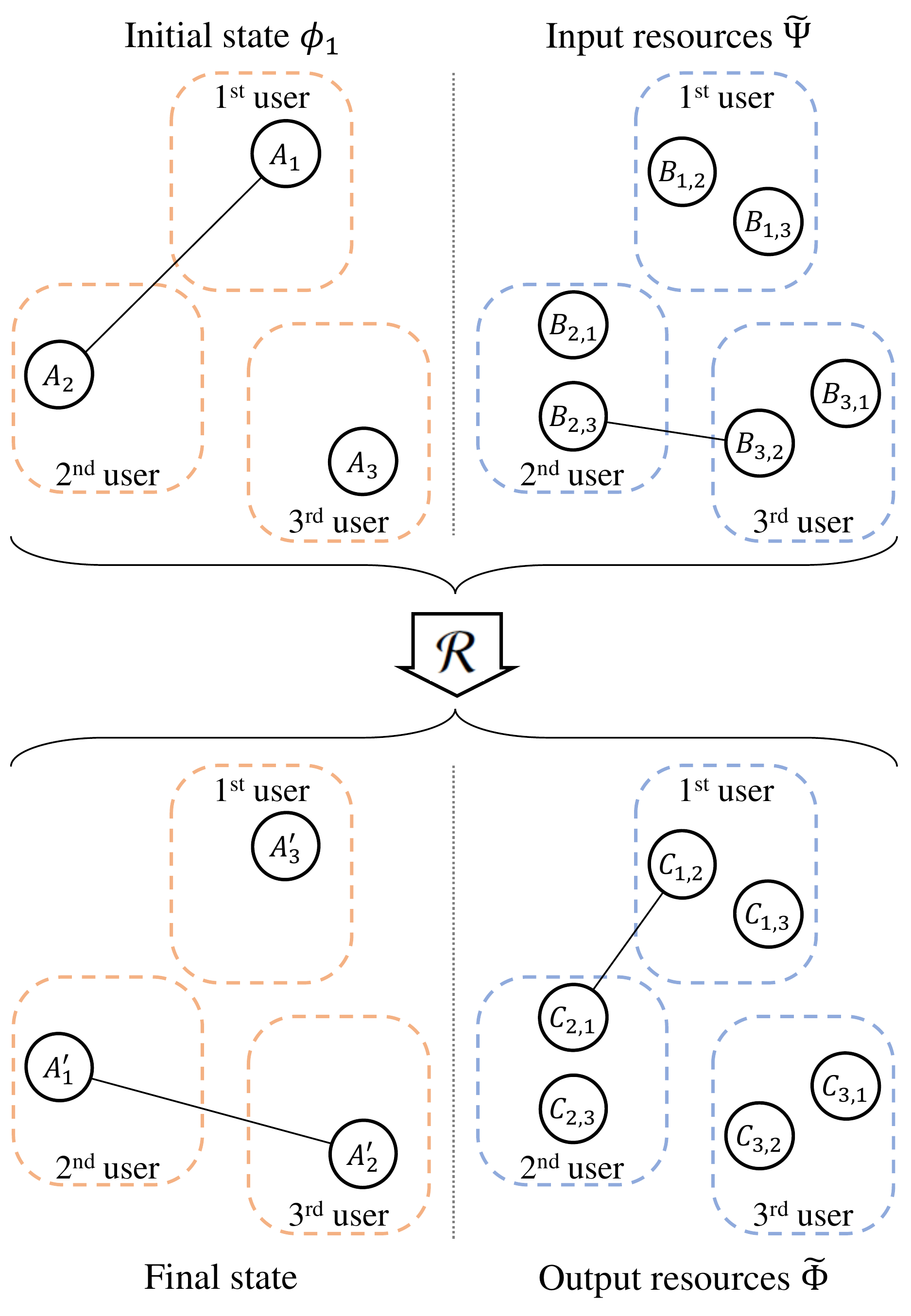}
\caption{
Initial state $\phi_1$ in Eq.~(\ref{eq:whytotal}), its final state, and input/output entanglement resources:
Circles indicate quantum systems, and correlations among the quantum systems are represented 
by lines connecting them.
In order to rotate $\phi_1$,
the $2^{\mathrm{nd}}$ user locally prepares the two-qubit entangled state $\varphi_1$,
and transfers one qubit of $\varphi_1$ to the $3^{\mathrm{rd}}$ user.
So they can share $\varphi_1$ on quantum systems $A'_1A'_2$.
For this, the amount of entanglement consumed by them is $H(A_1)_{\varphi_1}$.
Then the $1^{\mathrm{st}}$ user and the $2^{\mathrm{nd}}$ user asymptotically generate
the same amount of entanglement 
by applying entanglement distillation~\cite{BBPS96,BDSW96,BBPSSW97}
to $\varphi_1$ on quantum systems $A_1A_2$.
Finally,
the $1^{\mathrm{st}}$ user locally prepares the pure quantum state $\varphi_2$ on the system $A'_3$.
In an illustration for input (output) entanglement resources, 
two circles connected by a line indicate consumed (gained) entanglement whose amount is $H(A_1)_{\varphi_1}$.
}
\label{fig:Easy_exam}
\end{figure}

\begin{Rem} \label{rem:whytotal}
In the QSR task,
the users obtain the final state $\psi_{\mathrm{f}}$ and the output entanglement resources $\tilde{\Phi}$
by applying the QSR protocol to the initial state $\psi$ and the input entanglement resources $\tilde{\Psi}$.
Note that the QSR protocol is LOCC,
and the initial state $\psi$ and the final state $\psi_{\mathrm{f}}$ have the same amount of entanglement,
since $\psi_{\mathrm{f}}$ is defined by using $\psi$ and identity maps.
So, it is obvious that the total amount of entanglement among the users does not increase on average
via the QSR protocol.
As a measure that fulfills this condition, we use the total entanglement rate $e_{\mathrm{tot}}$ in this work.
The total entanglement rate $e_{\mathrm{tot}}$ measures the total amount of entanglement between pairs of the users.
In addition,
we will see the non-negativity of the total entanglement rate in Remark~\ref{rem:NNOEC}.
On this account, the total entanglement rate is a valid measure for analyzing
the total amount of entanglement required for the QSR task.

For example, let us consider a simple initial state
\begin{equation} \label{eq:whytotal}
\ket{\phi_1}_{A} = \ket{\varphi_1}_{A_1A_2}\otimes \ket{\varphi_2}_{A_3},
\end{equation}
where $E$ is regarded as a one-dimensional system,
$\ket{\varphi_1}$ is any two-qubit entangled state,
and $\ket{\varphi_2}$ is any quantum state.
We provide illustrations of the initial state $\phi_1$ and its final state in Fig.~\ref{fig:Easy_exam}.
For the initial state $\phi_1$,
we can calculate the segment entanglement rates $e_{i,j}$ and the total entanglement rate $e_{\mathrm{tot}}$ through the following strategy.
(i)
In order to rotate $\phi_1$,
the $2^{\mathrm{nd}}$ user locally prepares a two-qubit state $\ket{\varphi_1}$,
which is not the original state $\ket{\varphi_1}$ on the quantum systems $A_1A_2$,
and asymptotically transfers one qubit of the new state to the $3^{\mathrm{rd}}$ user
by using Schumacher compression~\cite{S95,W13} and the quantum teleportation~\cite{BBCJPW93}.
From this,
the $2^{\mathrm{nd}}$ user and the $3^{\mathrm{rd}}$ user can share the state $\ket{\varphi_1}$ on the systems $A'_1A'_2$,
and the amount of entanglement consumed by them is $H(A_1)_{\varphi_1}$.
(ii) Since the quantum state $\ket{\varphi_1}$ is already distributed to the $2^{\mathrm{nd}}$ user and the $3^{\mathrm{rd}}$ user,
 the original state $\ket{\varphi_1}$ of the $1^{\mathrm{st}}$ user and the $2^{\mathrm{nd}}$ user is now superfluous, but it can be transformed into an output entanglement resource of the QSR task.
In other words, the $1^{\mathrm{st}}$ user and the $2^{\mathrm{nd}}$ user asymptotically generate
$H(A_1)_{\varphi_1}$ amount of entanglement 
by applying entanglement distillation~\cite{BBPS96,BDSW96,BBPSSW97}
to their state $\ket{\varphi_1}_{A_1A_2}$.
(iii)
Finally,
the $1^{\mathrm{st}}$ user locally prepares the pure quantum state $\ket{\varphi_2}$.
This preparation neither requires nor generates any entanglement resources.
To be specific, this strategy can be represented as a sequence $\{\mathcal{R}_n\}$ of QSR
protocols of $\phi_1$
whose segment entanglement rates are
\begin{eqnarray}
e_{1,2}(\phi_1,\{\mathcal{R}_n\})&=&-H(A_1)_{\varphi_1}, \label{eq:ExamNSER} \\
e_{2,3}(\phi_1,\{\mathcal{R}_n\})&=&H(A_1)_{\varphi_1}, \label{eq:ExamPSER} \\
e_{3,1}(\phi_1,\{\mathcal{R}_n\})&=&0,
\end{eqnarray} 
and so the total entanglement rate is zero,
i.e.,
$e_{\mathrm{tot}}(\phi_1,\{\mathcal{R}_n\})=0$.
The positive (negative) segment entanglement rate is described in Fig.~\ref{fig:Easy_exam}.
\end{Rem}

\section{Lower bound} \label{sec:LB}

In this section,
we present a lower bound on the OEC of the QSR task.

For a non-empty proper subset $P$ of the set $[M]$ and the initial state $\ket{\psi}_{AE}$
with $A = A_1A_2\cdots A_M$,
we consider a quantity $l_P(\psi)$ defined as
\begin{equation} \label{eq:lP}
l_P(\psi)
=
\max_{U}
\left\{
 H\left(\sbigotimes_{i\in P} A_{i-1}V\right)_{U\ket{\psi}}
-H\left(\sbigotimes_{i\in P} A_{i }V\right)_{U\ket{\psi}}
\right\},
\end{equation}
where the maximum is taken over all isometries $U$ from $E$ to $V\otimes W$~\cite{W13}, $V$ and $W$ are any quantum systems,
and $U\ket{\psi}$ is an abbreviation for $\mathds{1}_{A}\otimes U\ket{\psi}$.
Note that, for any partition $\{P,P^{\mathsf{c}}\}$ of the set $[M]$, $l_P(\psi)=l_{P^{\mathsf{c}}}(\psi)$ holds.
The quantity $l_P(\psi)$ is a lower bound on the sum of the segment entanglement rate as follows:  
\begin{Lem} \label{lem:LBOBER}
For the initial state $\ket{\psi}_{AE}$ and the partition $\{P,P^{\mathsf{c}}\}$ of the set $[M]$,
the following inequality holds:
\begin{equation}
\sum_{ i\in P } \sum_{ j\in P^{\mathsf{c}} } e_{i,j}(\psi,\{\mathcal{R}_n\})
\ge
l_P(\psi),
\end{equation}
where the segment entanglement rate $e_{i,j}$ is defined in Eq.~(\ref{eq:SER}),
and $\{\mathcal{R}_n\}_{n\in\mathbb{N}}$ is a sequence 
of QSR protocols $\mathcal{R}_n$ of $\psi^{\otimes n}$
with error $\varepsilon_n$ whose total entanglement rate is achievable.
\end{Lem}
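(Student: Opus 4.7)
The plan is to exploit the monotonicity of entanglement across the bipartite cut that groups the users in $P$ on one side and those in $P^{\mathsf{c}}$ on the other, and then to maximise over a free isometry on the environment to recover $l_P(\psi)$. Since the protocol $\mathcal{R}_n$ acts trivially on the environment, I can freely apply any isometry $U:\mathcal{H}_E\to\mathcal{H}_V\otimes\mathcal{H}_W$ (more precisely $U^{\otimes n}$ in the $n$-copy setting) without affecting any operational quantity, and then regard $V$ as sitting on $P$'s side of the cut and $W$ as sitting on $P^{\mathsf{c}}$'s side. Grouping the users of $P$ into one composite party and those of $P^{\mathsf{c}}$ into the other, $\mathcal{R}_n$ becomes in particular a bipartite LOCC across this cut, so any LOCC monotone that equals the entropy of entanglement on pure bipartite states --- for instance the entanglement of formation $E_F$ --- is non-increasing across it.

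I would then evaluate $E_F$ at the two pure reference states on either end of the protocol. Before the protocol the joint state $(U\psi)^{\otimes n}\otimes\tilde{\Psi}_n$ is pure and factorises across the users' $A$-systems, $V$, $W$, and the shared maximally entangled states; the intra-$P$ pairs contribute $0$ to the entanglement across the cut, while each cross-cut pair $\Psi^{(n)}_{i,j}$ with $i\in P,\,j\in P^{\mathsf{c}}$ contributes $\log d_{B^{(n)}_{i,j}}$, giving
\begin{equation*}
n\,H\!\Bigl(\sbigotimes_{i\in P} A_i V\Bigr)_{U\psi} + \sum_{i\in P,\,j\in P^{\mathsf{c}}}\log d_{B^{(n)}_{i,j}}.
\end{equation*}
An analogous computation at the ideal target $\psi_{\mathrm{f}}^{\otimes n}\otimes\tilde{\Phi}_n$, combined with the identification $H(\bigotimes_{i\in P}A'_{i-1}V)_{U\psi_{\mathrm{f}}}=H(\bigotimes_{i\in P}A_{i-1}V)_{U\psi}$ that follows because $\psi_{\mathrm{f}}$ is defined through identity maps on the $A$-registers, yields
\begin{equation*}
n\,H\!\Bigl(\sbigotimes_{i\in P} A_{i-1} V\Bigr)_{U\psi} + \sum_{i\in P,\,j\in P^{\mathsf{c}}}\log d_{C^{(n)}_{i,j}}.
\end{equation*}

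LOCC-monotonicity of $E_F$ bounds the value on the actual output by the first quantity, and asymptotic continuity --- I would use the Alicki-Fannes-Winter inequality --- lets me replace the actual output by the ideal target up to an error controlled by $\varepsilon_n$, at which point $E_F$ of the pure ideal target equals the second quantity. Subtracting the two, dividing by $n$, letting $n\to\infty$ so that $\varepsilon_n\to 0$, and finally taking the supremum over $U$ delivers the claimed bound $\sum_{i\in P,\,j\in P^{\mathsf{c}}}e_{i,j}(\psi,\{\mathcal{R}_n\})\ge l_P(\psi)$. The main technical hurdle I expect is taming the approximation: the relevant system dimensions grow like $D^n$, so Fannes-type continuity produces an $O(n\varepsilon_n\log D + h(\varepsilon_n))$ correction, but this is $o(n)$ whenever $\varepsilon_n\to 0$, so it harmlessly vanishes in the $n\to\infty$ limit and the lower bound survives.
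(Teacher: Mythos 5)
Your proposal is correct and takes essentially the same route as the paper: the paper formalizes your ``apply a free isometry $U\colon E\to V\otimes W$ and put $V$ on $P$'s side'' step as an $R$-assisted QSR task, then likewise invokes entanglement monotonicity under LOCC across the $P$ versus $P^{\mathsf{c}}$ cut, evaluates the entanglement of the pure states at both endpoints (getting exactly your two expressions, including the identification $H(\bigotimes_{i\in P}A'_{i-1}V)_{U\ket{\psi_{\mathrm{f}}}}=H(\bigotimes_{i\in P}A_{i-1}V)_{U\ket{\psi}}$), handles the error with Fannes--Audenaert continuity, divides by $n$, and takes the supremum over $U$. The only cosmetic difference is that you carry the argument with the entanglement of formation and Alicki--Fannes--Winter continuity, while the paper works directly with the entropy of the reduced state on $P$'s side and applies Fannes--Audenaert to the actual (mixed) output; the endpoint computations and the $o(n)$ error bookkeeping are identical.
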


A detailed description of the quantity $l_{P}(\psi)$ and the proof of Lemma~\ref{lem:LBOBER} are presented in Appendix~\ref{app:lem:lb}.
By using Lemma~\ref{lem:LBOBER}, we obtain the following theorem providing a lower bound on any achievable total entanglement rate of the QSR task.

\begin{Thm} \label{thm:lb}
Let $\ket{\psi}_{AE}$ be the initial state of the QSR task with $A = A_1A_2\cdots A_M$.
Any achievable total entanglement rate $r$ of the QSR task
is lower bounded by
\begin{equation} \label{eq:defLK}
r
\ge
l_k(\psi)
\coloneqq
\frac{1}{2{M-2 \choose k-1}}
\sum_{P\in S_k}
l_{P}(\psi),
\end{equation}
where $1 \le k < M$,
$S_k$ is the set of subsets $P$ of $[M]$ whose sizes are $k$,
i.e., $|P|=k$,
and $l_{P}(\psi)$ is given in Eq.~(\ref{eq:lP}).
\end{Thm}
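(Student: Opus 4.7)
The plan is to deduce Theorem~\ref{thm:lb} from Lemma~\ref{lem:LBOBER} by an averaging argument over all $k$-element subsets of $[M]$. Lemma~\ref{lem:LBOBER} already gives one lower bound per bipartition $\{P,P^{\mathsf{c}}\}$; the only thing that remains is to assemble these bounds into a single bound on $e_{\mathrm{tot}}$, which requires a short combinatorial counting argument.

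First I would fix an achievable total entanglement rate $r$ together with a witnessing sequence $\{\mathcal{R}_n\}_{n\in\mathbb{N}}$ of QSR protocols of $\psi^{\otimes n}$ with vanishing error, so that all the segment entanglement rates $e_{i,j}(\psi,\{\mathcal{R}_n\})$ converge and $\sum_{i<j}e_{i,j}=r$. Then, applying Lemma~\ref{lem:LBOBER} once for every $P\in S_k$, I would obtain the family of inequalities
\begin{equation}
\sum_{i\in P}\sum_{j\in P^{\mathsf{c}}} e_{i,j}(\psi,\{\mathcal{R}_n\})\;\ge\; l_P(\psi),\qquad P\in S_k.
\end{equation}
Summing these over all $P\in S_k$ yields $\sum_{P\in S_k}\sum_{i\in P,\,j\in P^{\mathsf{c}}}e_{i,j}\ge \sum_{P\in S_k}l_P(\psi)$, and the problem reduces to computing the multiplicity with which each segment entanglement rate appears on the left-hand side.

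For the combinatorial step I would swap the order of summation and count, for each fixed ordered pair $(i,j)$ with $i\neq j$, the number of $k$-element subsets $P\subseteq[M]$ containing $i$ but not $j$; this count is $\binom{M-2}{k-1}$, since one must pick the remaining $k-1$ elements of $P$ from the $M-2$ indices other than $i$ and $j$. Consequently,
\begin{equation}
\sum_{P\in S_k}\sum_{i\in P}\sum_{j\in P^{\mathsf{c}}} e_{i,j}
=\binom{M-2}{k-1}\sum_{i\neq j}e_{i,j}
=2\binom{M-2}{k-1}\sum_{i<j}e_{i,j},
\end{equation}
where in the last equality I used the symmetry $e_{j,i}=e_{i,j}$ noted after Eq.~(\ref{eq:SER}). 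Dividing by $2\binom{M-2}{k-1}$ and recognizing that $\sum_{i<j}e_{i,j}=e_{\mathrm{tot}}(\psi,\{\mathcal{R}_n\})=r$ then yields $r\ge l_k(\psi)$ as claimed.

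Because Lemma~\ref{lem:LBOBER} already carries the analytic content (and is proved in Appendix~\ref{app:lem:lb}), there is no serious obstacle in the argument above; the only place that requires care is the double-counting check, in particular remembering that in the iterated sum $\sum_{i\in P}\sum_{j\in P^{\mathsf{c}}}$ each unordered pair separated by $P$ is counted once, while summing over $P\in S_k$ itself counts every unordered pair $\{i,j\}$ in a $2\binom{M-2}{k-1}$-fold way since both $i\in P$, $j\in P^{\mathsf{c}}$ and $j\in P$, $i\in P^{\mathsf{c}}$ can occur. This is exactly the factor appearing in the normalization of $l_k(\psi)$ in Eq.~(\ref{eq:defLK}), so the bound matches the theorem statement.
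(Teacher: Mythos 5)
Your proposal is correct and follows essentially the same route as the paper's proof in Appendix~B: apply Lemma~\ref{lem:LBOBER} to every $P\in S_k$, sum, and double-count each segment rate with multiplicity $2\binom{M-2}{k-1}$ (the paper phrases this via the indicator $f_P(i,j)$ and the sets $S_k^{ij}$, which is the same counting you do over ordered pairs). The only cosmetic difference is that the paper invokes the symmetry $l_k=l_{M-k}$ to restrict to $k\le\lfloor M/2\rfloor$, whereas your argument works verbatim for all $1\le k<M$.
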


We refer the reader to Appendix~\ref{app:thm:lb} for the proof of Theorem~\ref{thm:lb}.
Theorem~\ref{thm:lb} implies that the OEC $e_{\mathrm{opt}}(\psi)$
of the initial state $\psi$
is lower bounded by
\begin{equation} \label{eq:LB}
e_{\mathrm{opt}}(\psi)
\ge
l(\psi)
\coloneqq
\max_{1\le k \le \lfloor \frac{M}{2} \rfloor}
l_k(\psi),
\end{equation}
where $\lfloor x \rfloor$ denotes the floor function
defined as $\max\{m\in\mathbb{Z}:m\le x\}$.

\begin{Rem} \label{rem:NNOEC}
The lower bounds $l_{i}(\psi)$ 
are non-negative,
since for each $i$,
\begin{eqnarray}
\sum_{P\in S_i}
l_{P}(\psi)
&\ge&
\sum_{P\in S_i}
\left[
 H\left(\sbigotimes_{j\in P} A_{j-1}\right)_{\psi}
-H\left(\sbigotimes_{j\in P} A_{j }\right)_{\psi}
\right] \\
&=&
\sum_{P\in S_i}
 H\left(\sbigotimes_{j\in P} A_{j}\right)_{\psi}
 -\sum_{P\in S_i}
 H\left(\sbigotimes_{j\in P} A_{j}\right)_{\psi}=0.
\end{eqnarray}
Thus, the OEC cannot be negative,
i.e.,
$e_{\mathrm{opt}}(\psi)\ge0$.
This means that
the total amount of entanglement gained from the QSR task
cannot exceed that of entanglement resources consumed in the task.

In this work,
while we analyze the OEC
as a figure of merit,
the case of zero OECs,
$e_{\mathrm{opt}}=0$,
does not necessarily mean that the related segment entanglement rates are zero,
i.e,
$e_{i,j}=0$ for each $i\neq j$, as shown in Eqs.~(\ref{eq:ExamNSER}) and~(\ref{eq:ExamPSER}). 
In general,
entanglement resources for the QSR task
may be consumed by some pair of users while distilled by another pair, as in the example of Remark~\ref{rem:whytotal}.
\end{Rem}

\begin{Rem} \label{rem:SElb}
One of our contributions is to generalize
results of the QSE task~\cite{OW08}
to the general cases including more than two users.
To be specific,
Remark~\ref{rem:NNOEC} implies
the non-negativity of the OEC
for the QSE task.
For $M=2$,
the lower bound $l(\psi)$ in Eq.~(\ref{eq:LB}) becomes
\begin{eqnarray}
l(\psi)
&=&\max_{U}
\left\{
 H\left(A_{1}V\right)_{U\ket{\psi}}
-H\left(A_{2}V\right)_{U\ket{\psi}}
\right\} \\
&=& 
\max_{\mathcal{N}} 
\left\{ H(A_{1 }V)_{\mathcal{N}(\psi)}
-H(A_{2}V)_{\mathcal{N}(\psi)} \right\},
\end{eqnarray}
where $V$ is any quantum system,
$\mathcal{N}$ is any quantum channel from $\mathcal{L}(E)$ to $\mathcal{L}(V)$,
and $\mathcal{N}(\psi)$ is an abbreviation for $(\mathds{1}_{\mathcal{L}(A)}\otimes\mathcal{N})(\psi)$.
The first equality comes from $l_{\{1\}}(\psi)=l_{\{2\}}(\psi)$.
The second equality holds, since there is a one-to-one correspondence between isometries $U$ and quantum channels $\mathcal{N}$.
That is, any isometry $U\colon E\to V\otimes W$ combined with the partial trace over the quantum system $W$ becomes a quantum channel $\mathcal{N}\colon E\to V$, and for any quantum channel $\mathcal{N}$, we can find its isometric extension $U$~\cite{W13}.
The above quantity is the lower bound on the OEC for the QSE task presented in Ref.~\cite{OW08}.
\end{Rem}

\section{Achievable upper bound} \label{sec:AUB}

In this section,
we present an achievable upper bound on the OEC of the QSR task by considering a specific strategy.

The QSR task can be carried out 
by using an \emph{$M$-partite merge-and-send} strategy.
We can obtain this strategy by generalizing the merge-and-send strategy presented in Ref.~\cite{OW08}.
The idea of the $M$-partite merge-and-send strategy is as follows:
(i)
The $1^{\mathrm{st}}$ user and the $2^{\mathrm{nd}}$ user of the QSR task
merge the part $A_1$ to the $2^{\mathrm{nd}}$ user
by using quantum state merging~\cite{HOW05,HOW06}.
In this case,
the part $A_2$ of the $2^{\mathrm{nd}}$ user acts as the quantum side information.
After finishing merging $A_1$,
the $2^{\mathrm{nd}}$ user considers his/her part $A_1$ as a part of the environment system.
Then the $2^{\mathrm{nd}}$ user and the $3^{\mathrm{rd}}$ user can make use of the quantum state merging protocol again,
in order to merge $A_2$.
In this way,
the part $A_i$ is sequentially merged from the $i^{\mathrm{th}}$ user to the $(i+1)^{\mathrm{th}}$ user 
except for the last part $A_M$.
(ii) Finally,
the last user and the $1^{\mathrm{st}}$ user perform
Schumacher compression~\cite{S95,W13}
together with quantum teleportation~\cite{BBCJPW93}
in order to transfer the part $A_M$ to the $1^{\mathrm{st}}$ user.
Through this strategy,
the $M$ users can rotate any initial state of the QSR task.
Note that instead of using quantum state merging~\cite{HOW05,HOW06},
the $M$ users can apply
quantum state redistribution~\cite{DY08,YD09}
with quantum teleportation~\cite{BBCJPW93}
in order to perform the QSR task.
In this case,
the total amount of entanglement is identical to 
that of the $M$-partite merge-and-send strategy,
while the amounts of classical communication can be different.

When the users adopt the above $M$-partite merge-and-send strategy,
for each $i\in[M-1]$,
the entanglement cost of merging $A_i$ 
is represented as $H(A_i|A_{i+1})_\psi$,
and the entanglement cost for transferring $A_M$ is $H(A_M)_\psi$.
In other words,
these entanglement costs can be represented
in terms of the segment entanglement rates as follows.

\begin{Lem} \label{lem:MPMM}
For any initial state $\ket{\psi}_{AE}$ of the QSR task with $A = A_1A_2\cdots A_M$,
there is a sequence $\{\mathcal{R}_n\}_{n\in\mathbb{N}}$
of QSR protocols $\mathcal{R}_n$ of $\psi^{\otimes n}$
with error $\varepsilon_n$ such that $\lim_{n\to\infty} \varepsilon_n = 0$,
\begin{eqnarray}
e_{i,j}(\psi,\{\mathcal{R}_n\})
&=&
\begin{cases}
H(A_i|A_{i+1})_\psi &\text{if $i\in[M-1]$ and $j=i+1$} \\
H(A_M)_\psi &\text{if $i=M$ and $j=1$} \\
0 &\text{otherwise},
\end{cases} \\
e_{\mathrm{tot}}(\psi,\{\mathcal{R}_n\})
&=&
H(A_M)_\psi+\sum_{i=1}^{M-1}H(A_i|A_{i+1})_\psi.
\end{eqnarray}
\end{Lem}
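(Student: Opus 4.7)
The plan is to construct the sequence $\{\mathcal{R}_n\}$ explicitly by composing $M-1$ rounds of quantum state merging with a final step of Schumacher compression plus quantum teleportation, and then to read off the segment and total entanglement rates from the resulting bookkeeping.

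First, for each large $n$, the users act on $\psi^{\otimes n}$ in rounds. In round $i$, for $i=1,2,\ldots,M-1$, users $i$ and $i+1$ run the quantum state merging protocol of Refs.~\cite{HOW05,HOW06} to transfer the $n$ copies of $A_i$ from user $i$ to user $i+1$, treating user $i+1$'s register $A_{i+1}$ as quantum side information and all remaining systems (including the registers that user $i$ received in earlier rounds) as part of the reference purification. By the merging theorem, this round asymptotically consumes $nH(A_i|A_{i+1})_\psi$ ebits between users $i$ and $i+1$, or distills $-nH(A_i|A_{i+1})_\psi$ ebits when the conditional entropy is negative, and introduces trace-norm error $\varepsilon_n^{(i)}\to 0$. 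Because user $i$ only touches the register $A_i^{\otimes n}$ and the shared ebits of the pair $(i,i+1)$ during this round, the previously acquired registers are carried along as spectators and the composition remains well defined.

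After round $M-1$, user $M$ still physically holds $A_M^{\otimes n}$. In the final step, user $M$ applies Schumacher compression~\cite{S95,W13} to these $n$ copies, obtaining roughly $nH(A_M)_\psi$ qubits, and then teleports~\cite{BBCJPW93} the compressed register to user $1$, who decompresses it. This consumes $nH(A_M)_\psi+o(n)$ ebits between users $M$ and $1$ with an additional error $\varepsilon_n^{(M)}\to 0$. Combining the stages by the triangle inequality and the monotonicity of the trace norm under LOCC gives overall error $\varepsilon_n=\sum_{i=1}^{M}\varepsilon_n^{(i)}\to 0$, so the composition is a QSR protocol $\mathcal{R}_n$ in the sense of Sec.~\ref{subsec:FDQSR}.

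Reading off the segment entanglement rates from Eq.~(\ref{eq:SER}) is then immediate from the construction: only the pairs $(i,i+1)$ for $i\in[M-1]$ and the pair $(M,1)$ exchange entanglement, which yields the stated values of $e_{i,j}(\psi,\{\mathcal{R}_n\})$, and summing over $i<j$ gives the claimed $e_{\mathrm{tot}}$. The main obstacle I anticipate is verifying (i) that negative conditional entropies are handled correctly, since a negative cost corresponds to ebits being produced rather than consumed and must be compatible with the sign convention of Eq.~(\ref{eq:SER}), and (ii) that the piecewise application of merging protocols composes with vanishing total error even though what counts as reference changes from round to round. Both points should follow by invoking the merging theorem of Refs.~\cite{HOW05,HOW06} as a black box together with the standard triangle inequality for trace distance.
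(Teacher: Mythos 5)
Your proposal is correct and follows essentially the same route as the paper's own proof: sequential quantum state merging from user $i$ to user $i+1$ for $i\in[M-1]$, a final Schumacher-compression-plus-teleportation step from user $M$ to user $1$, and error accumulation controlled by the triangle inequality and monotonicity of the trace distance, with the segment rates read off directly from the merging theorem (including the negative-conditional-entropy case, which matches the sign convention of Eq.~(\ref{eq:SER})).
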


In order to prove Lemma~\ref{lem:MPMM},
we apply a technique presented in Ref.~\cite{LTYAL19},
which is used to show the existence of the merge-and-merge protocol therein,
and we refer the reader
to Appendix~\ref{app:PoLMPMM} for the proof of Lemma~\ref{lem:MPMM}.

By using Lemma~\ref{lem:MPMM},
we obtain the following theorem, which provides an achievable upper bound on the OEC
of the QSR task.

\begin{Thm} \label{thm:UB}
Let $\ket{\psi}_{AE}$ be the initial state for the QSR task
with $A = A_1A_2\cdots A_M$.
The OEC $e_{\mathrm{opt}}(\psi)$ is upper bounded by
\begin{equation}
u(\psi)
\coloneqq
\sum_{i=1}^M H(A_i|A_{i+1})_\psi
+\min_{1\le i \le M} I(A_i;A_{i+1})_\psi.
\end{equation}
\end{Thm}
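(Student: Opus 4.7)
The plan is to derive the upper bound $u(\psi)$ from Lemma~\ref{lem:MPMM} by exploiting the cyclic symmetry of the QSR task. Lemma~\ref{lem:MPMM} already furnishes one achievable total entanglement rate, namely $H(A_M)_\psi+\sum_{i=1}^{M-1}H(A_i|A_{i+1})_\psi$, obtained by the $M$-partite merge-and-send strategy that merges $A_1\to A_2\to\cdots\to A_M$ and teleports $A_M$ to user~$1$. My first step is to rewrite this rate in a form suggestive of $u(\psi)$: using the identity $H(A_M)_\psi=H(A_M|A_1)_\psi+I(A_M;A_1)_\psi$, the achievable rate becomes
\begin{equation}
\sum_{i=1}^{M} H(A_i|A_{i+1})_\psi + I(A_M;A_1)_\psi,
\end{equation}
which already matches $u(\psi)$ up to the choice of which mutual-information term appears.

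The second step is to observe that nothing in the QSR task singles out user $M$ as the one who must teleport rather than merge. For any fixed $j\in[M]$, I can run a cyclically shifted version of the $M$-partite merge-and-send strategy: successively merge $A_j\to A_{j+1},\ A_{j+1}\to A_{j+2},\ldots,\ A_{j-2}\to A_{j-1}$ (indices mod $M$), and finally compress-and-teleport $A_{j-1}$ from user $j-1$ to user $j$. Each merge step is valid because after merging $A_i$ onto user $i+1$, user $i+1$ still holds its own local system $A_{i+1}$ available as side information for the next merge, and the part just received is treated as part of the environment for subsequent steps — exactly the structural feature that Lemma~\ref{lem:MPMM} already establishes, just with a relabeled starting point. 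Formally, this is a sequence $\{\mathcal{R}_n^{(j)}\}$ of QSR protocols obtained from Lemma~\ref{lem:MPMM} by permuting the user labels, so achievability follows immediately.

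The third step is a short computation of the resulting total entanglement rate, using $H(A_{j-1})_\psi-H(A_{j-1}|A_j)_\psi=I(A_{j-1};A_j)_\psi$:
\begin{equation}
e_{\mathrm{tot}}(\psi,\{\mathcal{R}_n^{(j)}\})
=\sum_{i=1}^{M}H(A_i|A_{i+1})_\psi+I(A_{j-1};A_j)_\psi.
\end{equation}
Since $e_{\mathrm{opt}}(\psi)$ is the infimum over achievable total entanglement rates, taking the minimum over $j\in[M]$ and reindexing $i=j-1$ yields
\begin{equation}
e_{\mathrm{opt}}(\psi)\le\sum_{i=1}^{M}H(A_i|A_{i+1})_\psi+\min_{1\le i\le M}I(A_i;A_{i+1})_\psi=u(\psi),
\end{equation}
which is the claimed bound.

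I expect no serious obstacle: the nontrivial work (existence of a convergent sequence of protocols realizing the rates $H(A_i|A_{i+1})_\psi$ per merge and $H(A_{j-1})_\psi$ per teleportation, with vanishing error) is already carried by Lemma~\ref{lem:MPMM}. The only point meriting care is the verification that cyclically shifting the merge-and-send schedule does not require a new achievability argument; this is essentially a relabeling of users, since the QSR task and the complete entanglement allocation are invariant under cyclic permutation of the indices. Once this invariance is noted, the theorem reduces to the algebraic identity above together with a minimization over the starting point $j$.
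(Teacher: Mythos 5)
Your proposal is correct and follows essentially the same route as the paper: the paper likewise applies the merge-and-send strategy of Lemma~\ref{lem:MPMM} with each possible starting user $i$, rewrites the resulting rate as $I(A_i;A_{i-1})_\psi+\sum_{j\in[M]}H(A_j|A_{j+1})_\psi$ via the same entropy identity, and minimizes over the choice of starting user. The cyclic-relabeling justification you flag is exactly what the paper invokes when it says the bound follows ``from optimizing the choice of the $1^{\mathrm{st}}$ user starting the merge-and-send strategy.''
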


\begin{proof}
For each $i\in[M]$,
we consider an $M$-partite merge-and-send strategy
in which the part $A_i$ is firstly merged from the $i^{\mathrm{th}}$ user to the $(i+1)^{\mathrm{th}}$ user
and the part $A_{i-1}$ is lastly sent from the $(i-1)^{\mathrm{th}}$ user to the $i^{\mathrm{th}}$ user.
From Lemma~\ref{lem:MPMM},
the achievable total entanglement rate $u_i(\psi)$ for this strategy
is given by
\begin{eqnarray}
u_i(\psi)
&=&
H(A_{i-1})_\psi+\sum_{j\in[M]\setminus\{i-1\}}H(A_j|A_{j+1})_\psi \\
&=&
I(A_{i};A_{i-1})_\psi+\sum_{j\in[M]}H(A_j|A_{j+1})_\psi.
\end{eqnarray}
It follows that $e_{\mathrm{opt}}(\psi)
\le \min_{1\le i\le M}u_i(\psi)$,
from optimizing the choice of the $1^{\mathrm{st}}$ user starting the merge-and-send strategy.
\end{proof}

\begin{Rem} \label{rem:exam1}
By using the lower bound in Eq.~(\ref{eq:LB}) and Theorem~\ref{thm:UB},
we can exactly evaluate the OECs for some initial states.
For example,
let us consider an initial state
\begin{equation}
\ket{\phi_2}_{AE} = \sbigotimes_{i=1}^M \ket{\varphi_i}_{A_iE_i},
\end{equation}
where $E = E_1E_2\cdots E_M$,
and $\ket{\varphi_i}$ is any pure bipartite entangled state on the quantum systems $A_iE_i$.
Then, from the lower bound in Eq.~(\ref{eq:LB}), the OEC $e_{\mathrm{opt}}(\phi_2)$ is lower bounded by
\begin{equation} \label{eq:l1}
l_1(\phi_2)=
\frac{1}{2}
\sum_{i=1}^{M}\max_{U} 
\left\{ H(A_{i-1}V)_{U\ket{\phi_2}}-H(A_{i }V)_{U\ket{\phi_2}} \right\},
\end{equation}
where $l_1$ is defined in Theorem~\ref{thm:lb},
and $U\ket{\phi_2}$ is an abbreviation for $\mathds{1}_{A}\otimes U\ket{\phi_2}$.
So, $e_{\mathrm{opt}}(\phi_2)$ is lower bounded by
\begin{equation}
e_{\mathrm{opt}}(\phi_2)
\ge
\frac{1}{2} \sum_{i=1}^{M} \left[ H(A_{i-1}E_i)_{\phi_2}-H(A_{i}E_i)_{\phi_2} \right]
=
\sum_{i=1}^{M} H(A_i)_{\varphi_i},
\end{equation}
if we consider isometries $U_i\colon E\to E_i\otimes (E\setminus E_i)$ with $i\in[M]$
such that $\Tr_{E\setminus E_i}U_i\phi_2U_i^\dagger=\Tr_{E\setminus E_i}\phi_2$,
where $E\setminus E_i=E_1\cdots E_{i-1}E_{i+1}\cdots E_M$.
Moreover,
from Theorem~\ref{thm:UB},
we have
\begin{equation}
e_{\mathrm{opt}}(\phi_2)
\le
u(\phi_2)
=
\sum_{i=1}^{M} H(A_i)_{\varphi_i}.
\end{equation}
Hence, $e_{\mathrm{opt}}(\phi_2) = \sum_{i=1}^{M} H(A_i)_{\varphi_i}$.
\end{Rem}

\begin{Rem} \label{rem:Notoptimal}
In general,
the $M$-partite merge-and-send strategy is not necessarily the optimal strategy,
although 
we have used it in order to find the OEC
for the specific initial state in Remark~\ref{rem:exam1}.
As a counterexample of the optimality,
let us consider an initial state
\begin{equation} \label{eq:exmCEA}
\ket{\phi_3}_A
=
\ket{\varphi_1}_{A_1A_3}
\otimes
\ket{\varphi_2}_{A_2}
\otimes
\ket{\varphi_3}_{A_4},
\end{equation}
where $\ket{\varphi_1}$ is any pure two-qubit entangled state,
and $\ket{\varphi_2}$ and $\ket{\varphi_3}$ are any pure quantum states.
Here,
$E$ is regarded as a one-dimensional system.
If we apply the $M$-partite merge-and-send strategy to the initial state $\ket{\phi_3}_A$,
then we obtain $u(\phi_3)=2H(A_1)_{\varphi_1}$ from Theorem~\ref{thm:UB}.

However,
using the strategy presented in Remark~\ref{rem:whytotal},
we obtain an achievable upper bound smaller than $u(\phi_3)$.
To be specific,
the $2^{\mathrm{nd}}$ user locally prepares the two-qubit entangled state $\varphi_1$,
and transfers one qubit of the state to the $4^{\mathrm{th}}$ user
by consuming as much entanglement as $H(A_1)_{\varphi_1}$.
The $1^{\mathrm{st}}$ user and the $3^{\mathrm{rd}}$ user then generate
the same amount of entanglement 
by distilling their state $\varphi_1$ on the quantum systems $A_1A_3$.
Finally,
the $1^{\mathrm{st}}$ user and the $3^{\mathrm{rd}}$ user locally prepare pure states $\varphi_3$ and $\varphi_2$,
respectively,
without consuming and gaining any entanglement resource.
This strategy can be represented as a sequence $\{\mathcal{R}_n\}$ of QSR
protocols of $\phi_3$
whose segment entanglement rates are zero except for 
$e_{1,3}(\phi_3,\{\mathcal{R}_n\})=-H(A_1)_{\varphi_1}$
and $e_{2,4}(\phi_3,\{\mathcal{R}_n\})=H(A_1)_{\varphi_1}$,
and $e_{\mathrm{tot}}(\phi_3,\{\mathcal{R}_n\})=0$.
It follows that $e_{\mathrm{tot}}(\phi_3,\{\mathcal{R}_n\})<u(\phi_3)$,
since $\varphi_1$ is entangled.
This shows that the $M$-partite merge-and-send strategy is not optimal in general.
In addition,
the non-negativity of the OEC implies $e_{\mathrm{opt}}(\phi_3)=0$ in this case.
\end{Rem}

\begin{figure}
\includegraphics[clip,width=\columnwidth]{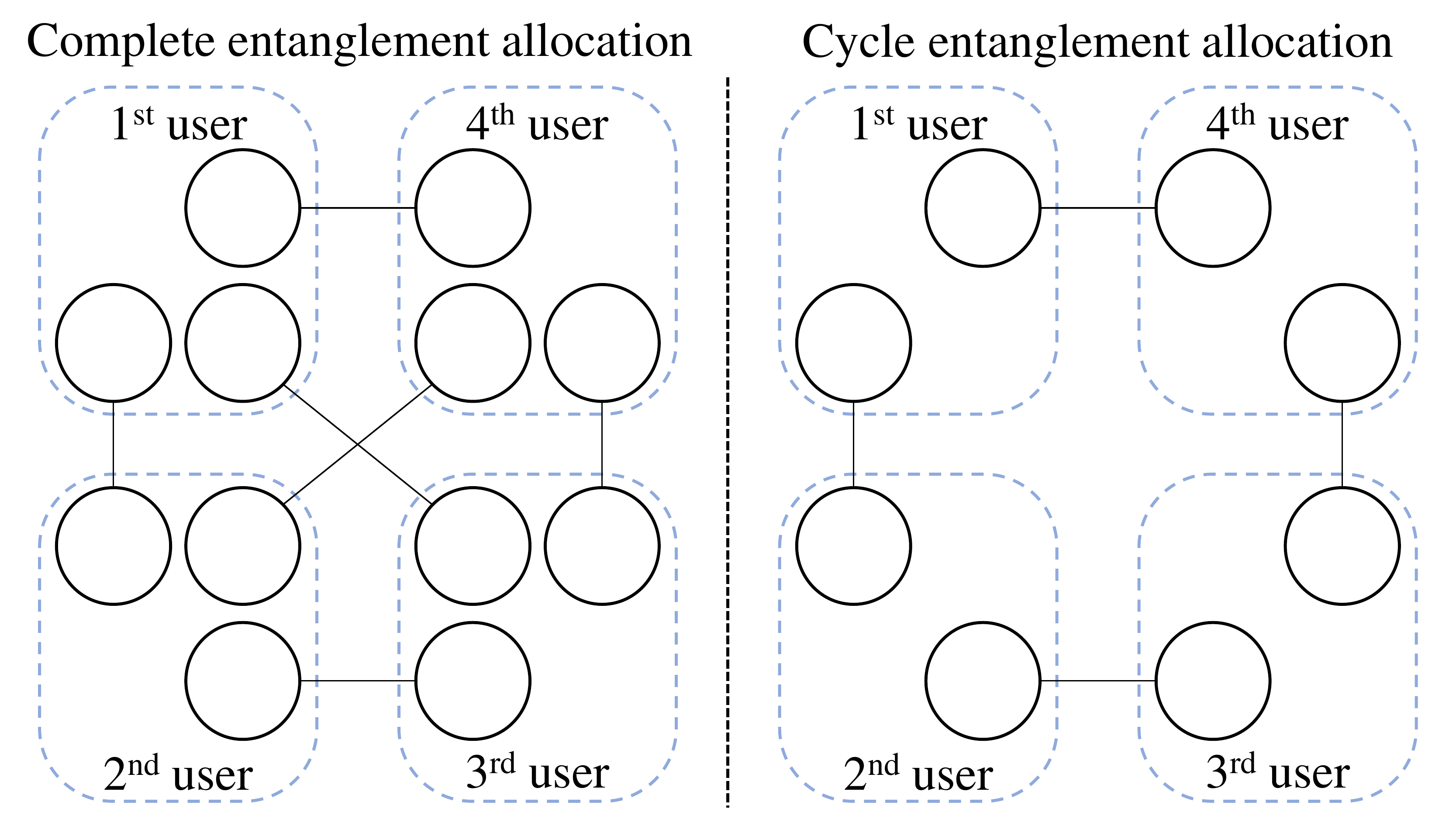}
\caption{
Illustrations of the complete entanglement allocation and the cycle entanglement allocation for four users:
An entanglement resource between two users is represented as two circles connected by a line.
Under the complete entanglement allocation,
every pair of four users can freely consume and generate entanglement resource.
However,
under the cycle entanglement allocation,
only the $i^{\mathrm{th}}$ user and the $(i+1)^{\mathrm{th}}$ user can manipulate entanglement resources,
and the $i^{\mathrm{th}}$ user and the $(i+2)^{\mathrm{th}}$ user are not allowed to deal with any entanglement resources.
}
\label{fig:ComplCircle}
\end{figure}

\begin{Rem} \label{rem:CGFR} 
Throughout this paper,
we have been assuming that the users of the QSR task make use of 
the complete entanglement allocation.
However,
one may think that it suffices to consider bipartite entanglement resources between the $i^{\mathrm{th}}$ user and the $(i+1)^{\mathrm{th}}$ user for each $i$,
since the $i^{\mathrm{th}}$ user transfers his/her quantum state to the $(i+1)^{\mathrm{th}}$ user
in the $M$-partite merge-and-send strategy.
Here,
we call such an allocation of entanglement resources the \emph{cycle entanglement allocation},
and we provide illustrations  explaining how four users share entanglement resources
according to the complete entanglement allocation and the cycle entanglement allocation in Fig.~\ref{fig:ComplCircle}.

The initial state $\phi_3$ in Eq.~(\ref{eq:exmCEA})
shows that, under the complete entanglement allocation setting,
the users can reduce the total amount of entanglement for the QSR task
compared to the case that the users use the cycle entanglement allocation for
rotating the same initial state.

To see this reduction,
we evaluate a lower bound on the OEC
for rotating $\phi_3$,
when the users use the cycle entanglement allocation.
This means
that the $1^{\mathrm{st}}$ ($2^{\mathrm{nd}}$) user and the $3^{\mathrm{rd}}$ ($4^{\mathrm{th}}$) user
cannot employ any entanglement resource between them, as depicted in Fig.~\ref{fig:ComplCircle}.
Let $\{\mathcal{C}_n\}_{n\in\mathbb{N}}$ be a sequence 
of such protocols $\mathcal{C}_n$ rotating $\phi_3^{\otimes n}$
with error $\varepsilon_n$,
where the users of each protocol use the cycle entanglement allocation.
While there is no need to consider
the segment entanglement rates $e_{1,3}(\phi_3,\{\mathcal{C}_n\})$
and $e_{2,4}(\phi_3,\{\mathcal{C}_n\})$ in this case,
we assume that $e_{1,3}(\phi_3,\{\mathcal{C}_n\})=e_{2,4}(\phi_3,\{\mathcal{C}_n\})=0$,
in order to regard the protocol $\mathcal{C}_n$ as the special case of the QSR protocol.
Recall that the state $\phi_3$ has no further environment system $E$, as shown in Eq.~(\ref{eq:exmCEA}).
Then, from Lemma~\ref{lem:LBOBER},
we obtain that
the inequality
\begin{equation}
e_{i,i-1}(\phi_3,\{\mathcal{C}_n\})+e_{i,i+1}(\phi_3,\{\mathcal{C}_n\})
\ge l_{\{i\}}(\phi_3)
\end{equation}
holds for each $1\le i\le 4$.
By using this inequality and the definition of $e_{\mathrm{tot}}$ in Eq.~(\ref{eq:tot}),
we obtain
\begin{eqnarray}
e_{\mathrm{tot}}(\phi_3,\{\mathcal{C}_n\})
&\ge& l_{\{1\}}(\phi_3) +l_{\{3\}}(\phi_3), \\
e_{\mathrm{tot}}(\phi_3,\{\mathcal{C}_n\})
&\ge& l_{\{2\}}(\phi_3) +l_{\{4\}}(\phi_3)= -(l_{\{1\}}(\phi_3) +l_{\{3\}}(\phi_3)). \nonumber \\
\end{eqnarray}
It follows that
\begin{eqnarray}
&&|l_{\{1\}}(\phi_3) +l_{\{3\}}(\phi_3)| \\
&&=|H(A_1)_{\phi_3}-H(A_2)_{\phi_3}+H(A_3)_{\phi_3}-H(A_4)_{\phi_3}| \\
&&=2H(A_1)_{\varphi_1}
\end{eqnarray}
is a non-zero lower bound on the OEC of rotating $\phi_3$
under the cycle entanglement allocation.
However, in the case of the complete entanglement allocation,
we obtain $e_{\mathrm{opt}}(\phi_3)=0$,
as explained in Remark~\ref{rem:Notoptimal}.

On this account,
the case of the initial state $\phi_3$
tells us that the use of the complete entanglement allocation can give
a smaller total entanglement rate than that of the cycle entanglement allocation.
This justifies that we consider the complete entanglement allocation
rather than cyclic entanglement allocation
in the definition of the QSR task.
\end{Rem}

\section{Conditions} \label{sec:CZATER}

In this section,
we present a sufficient condition on positive OECs
and a necessary condition on zero achievable total entanglement rates.

\subsection{Condition on positive optimal entanglement cost}

We provide a sufficient condition on positive OECs
of the QSR task.

When $M=2$, the QSR task is nothing but the QSE task,
and we can find out a condition
by using results on the QSE task
presented in Refs.~\cite{OW08,LYAL19}.
That is,
if $H(A_1)_\psi\neq H(A_2)_\psi$ for the initial state $\ket{\psi}_{A_1A_2E}$,
then the OEC of
the QSE task is positive,
i.e.,
$e_{\mathrm{opt}}(\psi)>0$.
So, one may naturally guess a generalized sufficient condition
with respect to
the initial state $\ket{\psi}_{AE}$ on $A = A_1A_2\cdots A_M$ as follows:
If there exist some $i,j\in[M]$ such that
\begin{equation}
H(A_i)_\psi\neq H(A_j)_\psi,
\end{equation}
then $e_{\mathrm{opt}}(\psi)>0$.

However,
this guess is not the case.
Let us consider the initial state $\phi_3$ in Eq.~(\ref{eq:exmCEA}).
Then, it is satisfied that $H(A_1)_{\phi_3}> 0=H(A_2)_{\phi_3}$,
but $e_{\mathrm{opt}}(\phi_3)=0$ as explained in Remark~\ref{rem:Notoptimal}.
Interestingly,
the above  condition can be corrected
in terms of quantum conditional entropies. 

\begin{Thm} \label{thm:SCOFC}
Let $\ket{\psi}_{AE}$ be the initial state for the QSR task with $A = A_1A_2\cdots A_M$.
If there exist some $i,j\in[M]$ such that
\begin{equation}
H(E|A_{i})_{\psi}\neq H(E|A_{j})_{\psi},
\end{equation}
then $e_{\mathrm{opt}}(\psi)>0$.
\end{Thm}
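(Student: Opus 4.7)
The plan is to exhibit the hypothesis as a strictly positive contribution to the lower bound $l_1(\psi)$ from Theorem~\ref{thm:lb}, so that $e_{\mathrm{opt}}(\psi)\ge l_1(\psi)>0$. I pick $k=1$ in Theorem~\ref{thm:lb} because the resulting expression $l_1(\psi)=\tfrac{1}{2}\sum_{i=1}^M l_{\{i\}}(\psi)$ is a cyclic sum of single-index terms, ideally suited to a telescoping argument around $[M]$. It therefore suffices to prove $\sum_{i=1}^M l_{\{i\}}(\psi)>0$ under the assumption that $H(E|A_i)_\psi\neq H(E|A_j)_\psi$ for some $i\neq j$.

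To bound each $l_{\{i\}}(\psi)$ from below, I would evaluate the maximum in Eq.~\eqref{eq:lP} at two extremal isometries $U\colon E\to V\otimes W$. Taking $V$ one-dimensional and $W\cong E$ (so that $U$ effectively discards $E$ into $W$) gives $l_{\{i\}}(\psi)\ge a_i:=H(A_{i-1})_\psi-H(A_i)_\psi$. Taking $W$ one-dimensional and $V\cong E$ (so that $U$ is a unitary on $E$ that keeps the purification intact) gives $l_{\{i\}}(\psi)\ge b_i:=H(A_{i-1}E)_\psi-H(A_iE)_\psi$. Combining the two bounds via the elementary identity $\max(x,y)=\tfrac{1}{2}(x+y)+\tfrac{1}{2}|x-y|$ yields
\begin{equation}
l_{\{i\}}(\psi)\ge\tfrac{1}{2}(a_i+b_i)+\tfrac{1}{2}\bigl|a_i-b_i\bigr|.
\end{equation}

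Summing cyclically over $i\in[M]$, both $\sum_i a_i$ and $\sum_i b_i$ telescope to zero, so
\begin{equation}
2\,l_1(\psi)\ge\tfrac{1}{2}\sum_{i=1}^M\bigl|a_i-b_i\bigr|.
\end{equation}
A direct expansion of the conditional entropies gives the key identity
\begin{equation}
a_i-b_i=H(E|A_i)_\psi-H(E|A_{i-1})_\psi,
\end{equation}
whose right-hand side measures the cyclic jumps of the sequence $i\mapsto H(E|A_i)_\psi$ around $[M]$. Under the hypothesis this sequence cannot be constant on the cycle, so at least one consecutive difference is nonzero; hence the sum above is strictly positive, $l_1(\psi)>0$, and therefore $e_{\mathrm{opt}}(\psi)>0$.

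The main obstacle is guessing the right pair of test isometries: it is not a priori obvious that precisely the ``trivial $V$'' and ``trivial $W$'' specializations of the optimization in $l_{\{i\}}(\psi)$ are the ones that, after $\max$ and cyclic summation, collapse onto exactly the conditional-entropy differences highlighted by the hypothesis. Once this design choice is made, the remaining work is mechanical telescoping together with the absolute-value identity $\max(x,y)=(x+y+|x-y|)/2$.
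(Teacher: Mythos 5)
Your proof is correct, and it reaches the conclusion by a genuinely more elementary route than the paper. Both arguments begin with the same two specializations of the optimization defining $l_{\{i\}}(\psi)$ --- trivial $V$ giving $\alpha_i = H(A_{i-1})_\psi - H(A_i)_\psi$ and trivial $W$ giving $\beta_i = H(A_{i-1}E)_\psi - H(A_iE)_\psi$ --- so that $l_1(\psi)\ge\frac{1}{2}\sum_{i}\max\{\alpha_i,\beta_i\}$. From there the paper proves a separate combinatorial result (Lemma~\ref{lem:represent}) identifying $\sum_i\max\{\alpha_i,\beta_i\}$ with $\max_{D}\bigl|\sum_{i_j\in D}(-1)^j H(E|A_{i_j})_\psi\bigr|$ over even-size subsets $D$, via a lengthy bit-string and interleaving analysis, and then proves the theorem by contrapositive, choosing $D=\{i,j\}$. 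You instead use $\max(x,y)=\frac{1}{2}(x+y)+\frac{1}{2}|x-y|$, observe that $\sum_i\alpha_i=\sum_i\beta_i=0$ by cyclic telescoping, and compute $\alpha_i-\beta_i=H(E|A_i)_\psi-H(E|A_{i-1})_\psi$, which gives $l_1(\psi)\ge\frac{1}{4}\sum_i\bigl|H(E|A_i)_\psi-H(E|A_{i-1})_\psi\bigr|$; this is strictly positive as soon as $i\mapsto H(E|A_i)_\psi$ is non-constant on the cycle, which is exactly the hypothesis. Your route yields a direct positivity proof and avoids the combinatorial lemma entirely, and quantitatively nothing is lost for this theorem: the maximal alternating sum over even subsets equals half the cyclic total variation of $i\mapsto H(E|A_i)_\psi$, so the two lower bounds on $l_1(\psi)$ coincide. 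What the paper's heavier Lemma~\ref{lem:represent} buys is the packaged form of the bound, which immediately gives $|H(E|A_i)_\psi-H(E|A_j)_\psi|\le 2\,l_1(\psi)$ for every pair $i,j$ (not only cyclically consecutive ones), whereas your bound recovers such pairwise estimates only after a triangle inequality with a slightly worse constant.
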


The proof of Theorem~\ref{thm:SCOFC} can be found in Appendix~\ref{app:PoTsc}.

\begin{Rem}
The converse of Theorem~\ref{thm:SCOFC} does not necessarily hold.
Consider the initial state
\begin{equation}
\ket{\phi_4}_{AE} = \sbigotimes_{i=1}^M \ket{\varphi}_{A_iE_i},
\end{equation}
where $E = E_1E_2\cdots E_M$ and $\ket{\varphi}$ is any pure bipartite entangled state.
Then we know that the OEC for rotating $\phi_4$ is positive
from the lower bound in Eq.~(\ref{eq:LB}),
but the condition in Theorem~\ref{thm:SCOFC} does not hold.
\end{Rem}

\subsection{Condition on zero achievable total entanglement rate}

We now present the following theorem providing a necessary condition on zero achievable total entanglement rates
for the QSR task.

\begin{Thm} \label{thm:necessary}
Let $\ket{\psi}_{AE}$ be the initial state of the QSR task with $A = A_1A_2\cdots A_M$,
and let $\{\mathcal{R}_n\}_{n\in\mathbb{N}}$ be a sequence of
QSR protocols $\mathcal{R}_n$ of $\psi^{\otimes n}$
with error $\varepsilon_n$ whose total entanglement rate $r$ is achievable.
If $r=0$,
then the segment entanglement rates $e_{i,j}(\psi,\{\mathcal{R}_n\})$ for $i\neq j$
are determined as

(i) If $M=3$,
$e_{i,j}(\psi,\{\mathcal{R}_n\})
=-l_{\{i,j\}}(\psi)$.

(ii) If $M=4$,
$e_{i,j}(\psi,\{\mathcal{R}_n\})=\frac{1}{2}
\left( l_{\{i\}}(\psi) + l_{\{j\}}(\psi) - l_{\{i,j\}}(\psi) \right)$.

(iii) If $M>4$, $e_{i,j}(\psi,\{\mathcal{R}_n\})$ is represented as
\begin{equation}
\frac{1}{\alpha_M}
\left(
\beta_M l_{\{i,j\}}(\psi)
+\gamma_M \sum_{\substack{s\in\{i,j\} \\ t\in[M]\setminus\{i,j\}}} l_{\{s,t\}}(\psi)
-2 \sum_{\substack{s,t\in[M]\setminus\{i,j\} \\ s<t}} l_{\{s,t\}}(\psi)
\right),
\end{equation}
where $l_{\{i\}}$ and $l_{\{i,j\}}$ are defined in Eq.~(\ref{eq:lP}),
$\alpha_M=2(M-2)(M-4)$, $\beta_M=2-(M-4)^2$, and $\gamma_M=M-4$.
\end{Thm}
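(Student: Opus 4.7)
The plan is to leverage the assumption $r=0$ to upgrade every available lower-bound inequality to an equality, and then to solve the resulting linear system for the individual segment entanglement rates.

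First, Theorem~\ref{thm:lb} gives $r \ge l_k(\psi)$ for every $1 \le k < M$, while Remark~\ref{rem:NNOEC} gives $l_k(\psi) \ge 0$. Hence $r=0$ forces $l_k(\psi)=0$ and therefore $\sum_{P \in S_k} l_P(\psi) = 0$ for every such $k$; in particular one obtains the two identities $\sum_i l_{\{i\}}(\psi) = 0$ and $\sum_{P \in S_2} l_P(\psi) = 0$ that will drive the algebra below. Next, I would upgrade Lemma~\ref{lem:LBOBER} from inequalities to equalities by summing $\sum_{i \in P, j \in P^{\mathsf{c}}} e_{i,j}(\psi,\{\mathcal{R}_n\}) \ge l_P(\psi)$ over all $P \in S_k$. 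Because each unordered pair $\{i,j\}$ is separated by exactly $2\binom{M-2}{k-1}$ such subsets, the left-hand side collapses to $2\binom{M-2}{k-1}\, e_{\mathrm{tot}}(\psi,\{\mathcal{R}_n\}) = 0$, which matches the right-hand side $\sum_{P \in S_k} l_P(\psi) = 0$. Since every summand of the outer sum is non-negative and the total vanishes, each Lemma~\ref{lem:LBOBER} inequality must be tight, yielding $\sum_{i \in P, j \in P^{\mathsf{c}}} e_{i,j}(\psi,\{\mathcal{R}_n\}) = l_P(\psi)$ for every non-empty proper $P \subseteq [M]$.

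The remaining step is to solve this linear system. Setting $E_i = \sum_{j \ne i} e_{i,j}(\psi,\{\mathcal{R}_n\})$, the $|P|=1$ equalities read $E_i = l_{\{i\}}(\psi)$ and the $|P|=2$ equalities rearrange to $e_{i,j} = \frac{1}{2}\bigl(E_i + E_j - l_{\{i,j\}}(\psi)\bigr)$. For $M=3$, combining these with $l_{\{i,j\}}(\psi) = l_{\{k\}}(\psi)$ for $\{i,j,k\}=[3]$ and $e_{1,2}+e_{2,3}+e_{3,1}=0$ immediately yields (i); for $M=4$, direct substitution produces $e_{i,j} = \frac{1}{2}\bigl(l_{\{i\}}(\psi) + l_{\{j\}}(\psi) - l_{\{i,j\}}(\psi)\bigr)$, which is (ii). For $M>4$, I would bypass the $|P|=1$ equations entirely and sum the $|P|=2$ equality over $j \ne i$, using $\sum_j E_j = 2 e_{\mathrm{tot}} = 0$, to obtain $(M-4)E_i = \sum_{j\ne i} l_{\{i,j\}}(\psi)$; back-substitution then produces a compact closed form for $e_{i,j}$ in terms of $l_{\{i,j\}}(\psi)$ and the cross-sum $\sum_{s\in\{i,j\}, t\notin\{i,j\}} l_{\{s,t\}}(\psi)$ alone. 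The main obstacle is bookkeeping: reaching the symmetric presentation in (iii) requires adding a suitable multiple of the Step-1 identity $l_{\{i,j\}}(\psi) + \sum_{s\in\{i,j\}, t\notin\{i,j\}} l_{\{s,t\}}(\psi) + \sum_{s,t\notin\{i,j\}, s<t} l_{\{s,t\}}(\psi) = 0$ so as to introduce the third sum with coefficient $-2$ and produce the stated $\alpha_M, \beta_M, \gamma_M$. The exclusion $M \ne 4$ in (iii) is the very same degeneracy in disguise: the coefficient $M-4$ of $E_i$ vanishes at $M=4$, which is exactly why that case must be handled separately by (ii).
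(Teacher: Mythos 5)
Your proposal is correct, and its first half is essentially the paper's own route: you force $l_k(\psi)=0$ from Theorem~\ref{thm:lb} and Remark~\ref{rem:NNOEC}, and you upgrade Lemma~\ref{lem:LBOBER} to the equalities $e_P(\psi,\{\mathcal{R}_n\})=l_P(\psi)$ by noting that the non-negative gaps $e_P-l_P$ sum to $n_k e_{\mathrm{tot}}-\sum_{P\in S_k}l_P=0$; the paper proves exactly this as Lemma~\ref{lem:equality}, only phrased as a contradiction argument rather than positively. Where you genuinely diverge is in solving the resulting linear system for $M>4$. The paper assembles the $|P|=2$ constraints into a matrix equation $D_M\mathbf{x}_M=\mathbf{b}_M$ with $N=M(M-1)/2$ unknowns and establishes the formula by exhibiting $D_M^{-1}$ explicitly and verifying $D_M^{-1}D_M$ is the identity through a combinatorial count of overlapping pair indices; the constants $\alpha_M,\beta_M,\gamma_M$ are precisely the entries of that inverse. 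You instead exploit symmetry: with $E_i=\sum_{j\neq i}e_{i,j}$ the pair constraint rearranges to $e_{i,j}=\frac{1}{2}\bigl(E_i+E_j-l_{\{i,j\}}\bigr)$, summing over $j\neq i$ and using $\sum_j E_j=2e_{\mathrm{tot}}=0$ gives $(M-4)E_i=\sum_{j\neq i}l_{\{i,j\}}(\psi)$, and back-substitution yields $e_{i,j}=\frac{1}{2(M-4)}\bigl((6-M)\,l_{\{i,j\}}(\psi)+\sum_{s\in\{i,j\},\,t\notin\{i,j\}}l_{\{s,t\}}(\psi)\bigr)$, which I have checked coincides with the stated expression (iii) once $\sum_{s,t\notin\{i,j\},\,s<t}l_{\{s,t\}}(\psi)$ is eliminated via the identity $\sum_{P\in S_2}l_P(\psi)=0$, exactly the bookkeeping step you describe. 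Your route is shorter and more elementary, avoids constructing and verifying the inverse matrix (which the theorem does not actually require, since only the value of the solution is claimed), and makes the exclusion of $M=4$ transparent as the vanishing of the coefficient $M-4$; the paper's route has the side benefit of showing that the $|P|=2$ system alone already has an invertible coefficient matrix. Your treatment of $M=3$ and $M=4$ matches the paper's small explicit systems, with the harmless extra use of $e_{1,2}+e_{2,3}+e_{3,1}=0$ in the $M=3$ case.
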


The main idea of the proof for Theorem~\ref{thm:necessary}
is to construct a system of linear equations obtained by
regarding segment entanglement rates as its unknowns
and to solve it.
We refer the reader to Appendix~\ref{app:PoTnecessary} for the proof of Theorem~\ref{thm:necessary}.

\begin{Rem}
The meaning of Theorem~\ref{thm:necessary} is that 
if there exist two sequences $\{\mathcal{R}_n\}_{n\in\mathbb{N}}$
and $\{\mathcal{R}'_n\}_{n\in\mathbb{N}}$ of QSR protocols
for the same initial state $\ket{\psi}_{AE}$
whose achievable total entanglement rates are zero,
then their segment entanglement rates are the same,
i.e.,
\begin{equation}
e_{i,j}(\psi,\{\mathcal{R}_n\})
=e_{i,j}(\psi,\{\mathcal{R}'_n\})
\end{equation}
for each $i\neq j$.
This implies that, when an achievable total entanglement rate of the QSR task is zero,
for each $i\neq j$,
all possible segment entanglement rates $e_{i,j}$ between the $i^{\mathrm{th}}$ user and the $j^{\mathrm{th}}$ user are uniquely determined
as the same value, 
even though there may not be a unique optimal strategy for the QSR task.
\end{Rem}

\begin{Rem}
To evaluate the segment entanglement rates 
presented in Theorem~\ref{thm:necessary},
we need to evaluate two quantities $l_{\{i\}}(\psi)$ and $l_{\{i,j\}}(\psi)$.
In general,
it is difficult to calculate these quantities
with respect to the initial state $\ket{\psi}_{AE}$,
since they are optimized over all isometries $U$ from $E$ to $V\otimes W$, where $V$ and $W$ are any quantum systems.
However,
for initial states $\ket{\psi}_{A}$ without the environment system $E$,
it is possible to compute them as follows:
\begin{eqnarray}
l_{\{i\}}(\psi_A)
&=&
 H( A_{i-1})_{\psi_A}
-H( A_{i })_{\psi_A}, \\
l_{\{i,j\}}(\psi_A)
&=&
 H( A_{i-1}A_{j-1})_{\psi_A}
-H( A_{i }A_{j})_{\psi_A}.
\end{eqnarray}
We will see that
these computable quantities play a crucial role
in proving Proposition~\ref{prop:example} in the next section.
\end{Rem}

\section{Difference between quantum state rotation and quantum state exchange} \label{sec:DbRaE}

In this section,
we show that a property of the QSE task~\cite{OW08}
does not holds in the QSR task.
This shows the difference between the QSE task and the QSR task.

In the QSE task, 
the initial state $\ket{\psi}_{A_1A_2}$
without the environment system $E$ can be exactly
exchanged without consuming any entanglement resources
via local unitary operations.
So,
for any initial state $\ket{\psi}_{A_1A_2}$,
there exists a sequence of QSE protocols
for $\ket{\psi}_{A_1A_2}$
whose achievable (total) entanglement rate is zero.
Thus, the OEC for
the QSE task of $\ket{\psi}_{A_1A_2}$ is always zero.

How about the QSR task of $M$ users ($M\ge 3$)?
That is,
for any initial state $\ket{\psi}_A$ with $A = A_1A_2\cdots A_M$,
is there a sequence of QSR protocols
whose achievable total entanglement rate is zero?
In the cases of $M=3,4,5$,
we have not found answers to the above question.
However,
if $M\ge6$,
we can find some initial states that cannot be rotated at zero achievable total entanglement rate.

\begin{Prop} \label{prop:example}
For each $M\ge6$,
there exists an initial state $\ket{\psi}_A$ of the QSR task
whose achievable total entanglement rates $r$ cannot be zero,
i.e., $r>0$, where $A = A_1A_2\cdots A_M$.
\end{Prop}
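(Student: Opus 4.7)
The plan is to argue by contradiction: assuming some initial state $\ket{\psi}_A$ admits a sequence $\{\mathcal{R}_n\}$ with achievable total entanglement rate $r=0$, Theorem~\ref{thm:necessary}(iii) already pins down every segment entanglement rate $e_{i,j}$ as an explicit linear combination of the pair quantities $l_{\{s,t\}}(\psi)$. The equations used to derive that formula draw only on the $|P|\le 2$ instances of Lemma~\ref{lem:LBOBER}, whereas the lemma in fact supplies non-trivial lower bounds for every larger $P$ as well. The strategy is therefore to exhibit, for each $M\ge 6$, a pure initial state for which one such $|P|=3$ lower bound is incompatible with the $e_{i,j}$ forced by Theorem~\ref{thm:necessary}(iii), contradicting the existence of a zero-rate sequence.

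First I would restrict to pure states $\ket{\psi}_A$ with trivial environment $E$, because then the optimisation in Eq.~(\ref{eq:lP}) is vacuous and every $l_P(\psi)$ collapses to the explicit difference $H\bigl(\bigotimes_{i\in P}A_{i-1}\bigr)_{\psi}-H\bigl(\bigotimes_{i\in P}A_i\bigr)_{\psi}$, which is directly computable. Next I would pick the sparsest non-trivial family: take all but two of the parts $A_i$ to be one-dimensional and place a maximally entangled qubit pair on $A_1$ and $A_k$, with $k$ chosen to break the cyclic symmetry (e.g.\ $k=\lceil M/2\rceil$). For such a state every $l_{\{s\}}(\psi)$ and $l_{\{s,t\}}(\psi)$ belongs to $\{-1,0,1\}$, its value depending only on whether each of the indices $s-1,s,t-1,t$ coincides with $1$ or $k$. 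Consequently the right-hand side of Theorem~\ref{thm:necessary}(iii) reduces to an explicit closed-form rational expression in $M$.

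Having read off the candidate $e_{i,j}$, I would then select a subset $P\subseteq[M]$ with $|P|=3$ tailored to the entanglement placement, for instance containing exactly one of $\{1,k\}$ so that the cut $(P,P^{\mathsf{c}})$ actually separates the entangled pair. I would compute both $\sum_{i\in P}\sum_{j\in P^{\mathsf{c}}}e_{i,j}$ from the formula and $l_P(\psi)$ from the entropic expression above, and compare the two. The claim is that, for every $M\ge 6$, the pair-level equations force the segment rates to distribute too evenly around the cycle to meet the size-three cut demand, yielding $\sum_{i\in P}\sum_{j\in P^{\mathsf{c}}}e_{i,j}<l_P(\psi)$ and so violating Lemma~\ref{lem:LBOBER}. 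This rules out $r=0$ and completes the proof.

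The principal obstacle is the purely algebraic verification that the size-three inequality fails uniformly in $M$: the coefficients $\alpha_M,\beta_M,\gamma_M$ and the various index sums in Theorem~\ref{thm:necessary}(iii) depend on $M$, so the $M$-dependence must be tracked cleanly. The sparsity of the non-zero $l_{\{s,t\}}(\psi)$ for the above state should collapse the sums to just a handful of terms, reducing the required check to a single inequality of the form $f(M)\ge g(M)$ whose failure for $M\ge 6$ can be read off directly; the cut-off at $M=6$ matches the structural fact that Theorem~\ref{thm:necessary}(iii) is only available in that range.
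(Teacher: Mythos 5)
There is a genuine gap, and it is fatal to the chosen construction. The state you propose --- a single maximally entangled qubit pair on $A_1A_k$ with all other parts (and the environment) trivial --- \emph{can} be rotated at zero total entanglement rate: the $2^{\mathrm{nd}}$ user locally prepares a fresh ebit and teleports half of it to the $(k+1)^{\mathrm{th}}$ user (segment rate $+1$), while the $1^{\mathrm{st}}$ and $k^{\mathrm{th}}$ users distill their now-superfluous shared ebit (segment rate $-1$), exactly as in Remarks~\ref{rem:whytotal} and~\ref{rem:Notoptimal} of the paper. Hence no contradiction can be extracted from assuming $r=0$ for this state. Your hoped-for mechanism also cannot fire in principle: by Lemma~\ref{lem:equality} in Appendix~\ref{app:PoTnecessary}, any sequence with $r=0$ satisfies $e_P(\psi,\{\mathcal{R}_n\})=l_P(\psi)$ for \emph{every} non-empty proper subset $P$, not only $|P|\le 2$; and since Theorem~\ref{thm:necessary} says the segment rates of any zero-rate sequence are unique, the rates produced by its formula for your state are precisely the $\pm1$ pattern above and saturate every cut bound, including all $|P|=3$ cuts, with equality. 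So the claimed strict violation $\sum_{i\in P}\sum_{j\in P^{\mathsf{c}}}e_{i,j}<l_P(\psi)$ cannot occur. (A side error: Theorem~\ref{thm:necessary}(iii) is available for all $M>4$; the cutoff $M\ge6$ in the proposition is not of that origin.)

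The missing idea is a genuinely multipartite obstruction that bipartite entropic cut bounds cannot supply --- indeed the paper stresses that the lower bound $l(\psi)$ vanishes for all environment-free initial states, which is precisely why Lemma~\ref{lem:LBOBER} alone cannot prove the proposition. The paper instead takes $\ket{\phi_5}$ containing two GHZ states shared by users $1,3,5$ (with $M\ge6$ so that users $2,4,6$ are distinct from them), uses Theorem~\ref{thm:necessary} to force the segment rates of a hypothetical zero-rate sequence to be $+1$ among $\{2,4,6\}$, $-1$ among $\{1,3,5\}$, and $0$ otherwise, and then observes that such a sequence would realize, by LOCC with only catalytic entanglement, the asymptotic conversion of two GHZ states into three ebits shared symmetrically among three users. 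This is ruled out by Lemma~\ref{lem:GHZtoSinglet}, whose proof uses the relative entropy of entanglement (a monotone finer than the bipartite entropies entering $l_P$). Any repair of your approach would need either such an external impossibility result, or a demonstration that the overdetermined system $\{e_P=l_P\}$ is inconsistent for some state --- something your sparse bipartite example cannot provide and which your proposal does not verify for any other state.
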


Before proving Proposition~\ref{prop:example}, let us consider a three-user (TU) task different from the QSR task.
In the TU task, three users, Alice, Bob, and Charlie, share two Greenberger-Horne-Zeilinger (GHZ) states~\cite{GHZ89},
and they transform the GHZ states into three ebits symmetrically shared among the three users,
where the states are defined as
\begin{equation}  \label{eq:GHZebit}
\ket{\mathrm{GHZ}}=\frac{1}{\sqrt{2}}(\ket{000}+\ket{111})
\quad
\mathrm{and}
\quad
\ket{\mathrm{ebit}}=\frac{1}{\sqrt{2}}(\ket{00}+\ket{11}).
\end{equation}
It turns out that, by using LOCC, it is impossible to perform the TU task under the exact and asymptotic scenarios~\cite{BPRST00,LPSW05}.

\begin{figure}
\includegraphics[clip,width=\columnwidth]{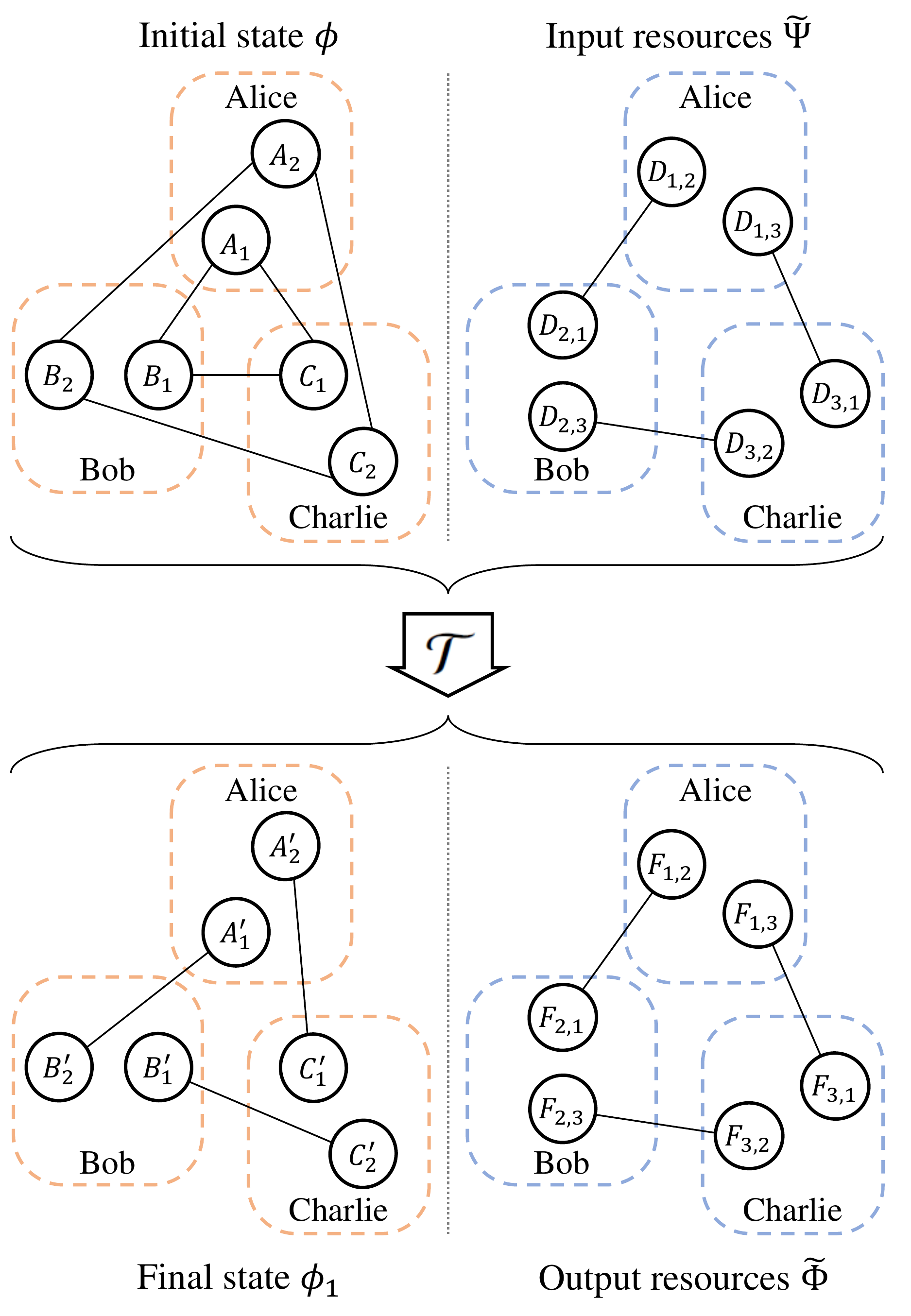}
\caption{
Illustration for the three-user task of Alice, Bob, and Charlie:
Circles indicate quantum systems for the task, and correlations among the quantum systems are represented 
by lines connecting them.
In this task,
the initial state $\phi$ and the final state $\phi_{\mathrm{f}}$ consist of 
two GHZ states and three ebits, respectively.
On the left side of the illustration,
a GHZ state (an ebit) is represented as three (two) circles connected by lines.
The aim of this task is to transform $\phi$ into $\phi_{\mathrm{f}}$.
To perform the task,
they apply LOCC $\mathcal{T}$
to the initial state $\phi$ and input entanglement resources $\tilde{\Psi}$.
After the task,
they can gain output entanglement resources $\tilde{\Phi}$ from the task.
}
\label{fig:TU_task}
\end{figure}

To prove Proposition~\ref{prop:example},
we further show that,
even considering the catalytic use of entanglement resources among them,
it is impossible to carry out the TU task under the asymptotic scenario.
To be specific,
assume that Alice, Bob, and Charlie of the TU task have quantum systems $A_iA'_i$, $B_iB'_i$, and $C_iC'_i$ with $i=1,2$,
respectively.
Let $\ket{\phi}$ and $\ket{\phi_{\mathrm{f}}}$ be the initial and final states of the TU task given by
\begin{eqnarray}
\ket{\phi}
&=&\ket{\mathrm{GHZ}}_{A_{1}B_{1}C_{1}}
\otimes
\ket{\mathrm{GHZ}}_{A_{2}B_{2}C_{2}}, \\
\ket{\phi_{\mathrm{f}}}
&=&\ket{\mathrm{ebit}}_{A'_{1}B'_{2}}
\otimes
\ket{\mathrm{ebit}}_{B'_{1}C'_{2}}
\otimes
\ket{\mathrm{ebit}}_{C'_{1}A'_{2}}.
\end{eqnarray}
Then a quantum channel
\begin{equation}
\mathcal{T}\colon
\mathcal{L}\left( \sbigotimes_{i=1}^2 A_i B_i C_i \otimes D \right)
\longrightarrow
\mathcal{L}\left( \sbigotimes_{i=1}^2 A'_i B'_i C'_i \otimes F \right)
\end{equation}
is called the TU protocol of the initial state $\phi$ with error $\varepsilon$,
if it is performed by LOCC among the three users and satisfies
$\|
\mathcal{T}
( \phi \otimes \tilde{\Psi} )
-
\phi_{\mathrm{f}} \otimes \tilde{\Phi}
\|_{1}
\le
\varepsilon$,
where $D$ and $F$ are multi-partite quantum systems with $D=D_{1,2}D_{1,3}D_{2,1}D_{2,3}D_{3,1}D_{3,2}$
and $F=F_{1,2}F_{1,3}F_{2,1}F_{2,3}F_{3,1}F_{3,2}$,
and $\tilde{\Psi}$ and $\tilde{\Phi}$ are entanglement resources on $D$ and $F$ for the complete entanglement allocation.
We present an illustration for the TU task in Fig.~\ref{fig:TU_task}.

We provide the following lemma whose proof is presented in Appendix~\ref{app:PoL}.

\begin{Lem} \label{lem:GHZtoSinglet}
Let $\ket{\phi}$ be the initial state of the TU task.
Then there is no sequence $\{\mathcal{T}_n\}_{n\in\mathbb{N}}$
of LOCC $\mathcal{T}_n$ of $\phi^{\otimes n}$ with error $\varepsilon_n $ such that
$e_{i,j}(\phi,\{\mathcal{T}_n\})=0$ for each $i\neq j$
and $\lim_{n\to\infty} \varepsilon_n=0$.
\end{Lem}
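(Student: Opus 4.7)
The plan is to leverage the cited impossibility of Refs.~\cite{BPRST00,LPSW05}---that no asymptotic LOCC transforms $\phi^{\otimes n}=\ket{\mathrm{GHZ}}^{\otimes 2n}$ into $\phi_{\mathrm{f}}^{\otimes n}$ with vanishing error---and to upgrade it to the catalytic setting of the lemma. The core idea is an iterate-and-amortize argument: because the segment rates are assumed to vanish, the catalytic entanglement ``borrowed'' at the start of each run of the protocol is returned at its end, so composing the protocol with itself many times produces a genuinely non-catalytic asymptotic conversion up to sublinear overhead.

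Proceeding by contradiction, suppose such a sequence $\{\mathcal{T}_n\}_{n\in\mathbb{N}}$ exists with $e_{i,j}(\phi,\{\mathcal{T}_n\})=0$ for every $i\neq j$ and $\varepsilon_n\to 0$. By the definition of the segment rate in Eq.~(\ref{eq:SER}), $|\log d_{B^{(n)}_{i,j}}-\log d_{C^{(n)}_{i,j}}|=o(n)$ for each pair, so after padding the smaller of $d_{B^{(n)}_{i,j}}$ and $d_{C^{(n)}_{i,j}}$ with $o(n)$ fresh ancillary ebits (absorbing the pass-through into an enlarged protocol $\mathcal{T}'_n=\mathcal{T}_n\otimes\mathrm{id}$) we may assume $d_{B^{(n)}_{i,j}}=d_{C^{(n)}_{i,j}}$ exactly. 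Under this convention the input catalyst $\tilde{\Psi}_n$ and the output catalyst $\tilde{\Phi}_n$ coincide as maximally entangled states of the same Schmidt rank, and $\mathcal{T}'_n$ returns $\tilde{\Psi}_n$ within trace distance $\varepsilon_n$. Iterating $\mathcal{T}'_n$ for $k=k(n)$ rounds on fresh blocks of $\phi^{\otimes n}$ while threading the catalyst from each round's output into the next round's input, the triangle inequality for trace distance and its monotonicity under LOCC bound the accumulated error by $k\varepsilon_n$; choosing $k(n)=\lfloor 1/\sqrt{\varepsilon_n}\rfloor$ simultaneously gives $k\to\infty$ and $k\varepsilon_n\to 0$.

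The net effect of the $k$-fold composition is an LOCC transformation of $\phi^{\otimes kn}\otimes\tilde{\Psi}_n$ into $\phi_{\mathrm{f}}^{\otimes kn}\otimes\tilde{\Psi}_n$ within error $o(1)$, so the catalyst $\tilde{\Psi}_n$ is a fixed (in $k$) side resource whose per-copy overhead $\log d_{B^{(n)}_{i,j}}/(kn)$ vanishes. This provides an asymptotic LOCC conversion $\phi^{\otimes m}\to\phi_{\mathrm{f}}^{\otimes m}$ at vanishing error with only sublinear catalytic assistance, contradicting the impossibility of Refs.~\cite{BPRST00,LPSW05}, which is robust to any sublinear entanglement overhead. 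The principal obstacle is controlling the absolute catalyst dimension $\log d_{B^{(n)}_{i,j}}$ so that the amortized overhead truly tends to zero: one argues without loss of generality that $\log d_{B^{(n)}_{i,j}}=O(n)$ via an entanglement-compression step truncating any super-exponential catalyst at negligible additional cost, after which $k$ can be tuned to grow fast enough that $\log d_{B^{(n)}_{i,j}}/(kn)\to 0$ while still keeping $k\varepsilon_n\to 0$, closing the argument.
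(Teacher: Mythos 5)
There is a genuine gap, in fact two, and together they mean the essential difficulty is deferred rather than resolved. First, the amortization step does not close quantitatively. To make the one-time catalyst negligible per produced copy you need the number of rounds $k$ to satisfy $\frac{1}{kn}\sum_{i<j}\log d_{B^{(n)}_{i,j}}\to 0$, while to keep the accumulated error vanishing you need $k\varepsilon_n\to 0$; such a $k$ exists only if $\varepsilon_n\cdot\frac{1}{n}\sum_{i<j}\log d_{B^{(n)}_{i,j}}\to 0$, and nothing in the hypotheses of the lemma controls the Schmidt ranks $d_{B^{(n)}_{i,j}}$ (only the \emph{difference} of the logarithms is $o(n)$; each individually may grow arbitrarily fast, e.g.\ like $2^{n/\varepsilon_n}$). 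Your proposed repair --- ``without loss of generality $\log d_{B^{(n)}_{i,j}}=O(n)$ via an entanglement-compression step truncating any super-exponential catalyst at negligible additional cost'' --- is not a legitimate reduction: a maximally entangled catalyst of huge rank cannot in general be replaced by a truncated one without changing what the LOCC protocol achieves, and bounding the dimension of catalysts is exactly the notoriously delicate point in catalysis arguments, not something one can assume.

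Second, even granting the reduction, your contradiction rests on the assertion that the impossibility of Refs.~\cite{BPRST00,LPSW05} ``is robust to any sublinear entanglement overhead.'' Those references establish the unassisted exact/asymptotic impossibility; robustness against $o(m)$ ebits of assistance distributed among the pairs is a strictly stronger statement that requires its own proof. The natural way to prove it is to exhibit an LOCC-monotone, asymptotically continuous quantity that separates source and target and is insensitive to the bipartite entanglement entropies (which coincide for the two states): this is precisely the relative entropy of entanglement of the state with the first user traced out, $E_R(B;C)$, which is the paper's actual route --- for the GHZ source the reduced state is separable so only the catalyst contributes $\log d_{D^{(n)}_{2,3}}$, for the target it contains $n$ ebits plus $\log d_{F^{(n)}_{2,3}}$, and monotonicity plus continuity plus the hypothesis $e_{2,3}(\phi,\{\mathcal{T}_n\})=0$ forces $0\ge 1$. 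So your iterate-and-amortize scheme, even if patched, would still need this monotone-plus-continuity argument to justify the ``robust to sublinear assistance'' claim; as written, the core of the lemma is left unproved.
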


In Lemma~\ref{lem:GHZtoSinglet},
the catalytic use of entanglement resources is described in the following sense:
While three users are free to consume and gain the entanglement resources
in each protocol $\mathcal{T}_n$,
the amount of entanglement consumed by every pair of the users
is asymptotically equal to that of entanglement gained between them,
i.e.,
the segment entanglement rate $e_{i,j}(\phi,\{\mathcal{T}_n\})$ is zero,
for each $i\neq j$.

\begin{figure}
\subfloat[]{%
  \includegraphics[clip,width=.795\columnwidth]{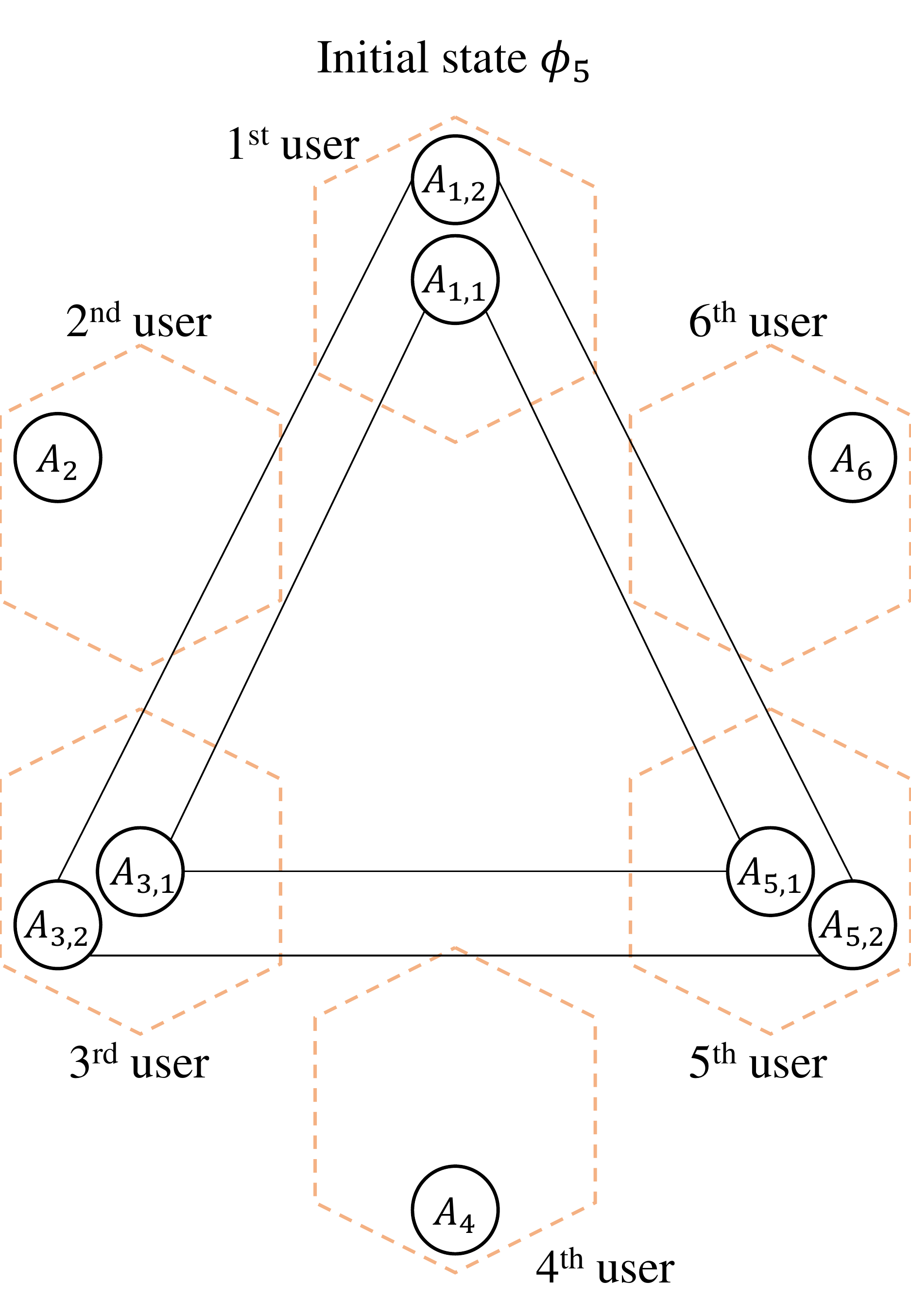}%
}

\subfloat[]{%
  \includegraphics[clip,width=.795\columnwidth]{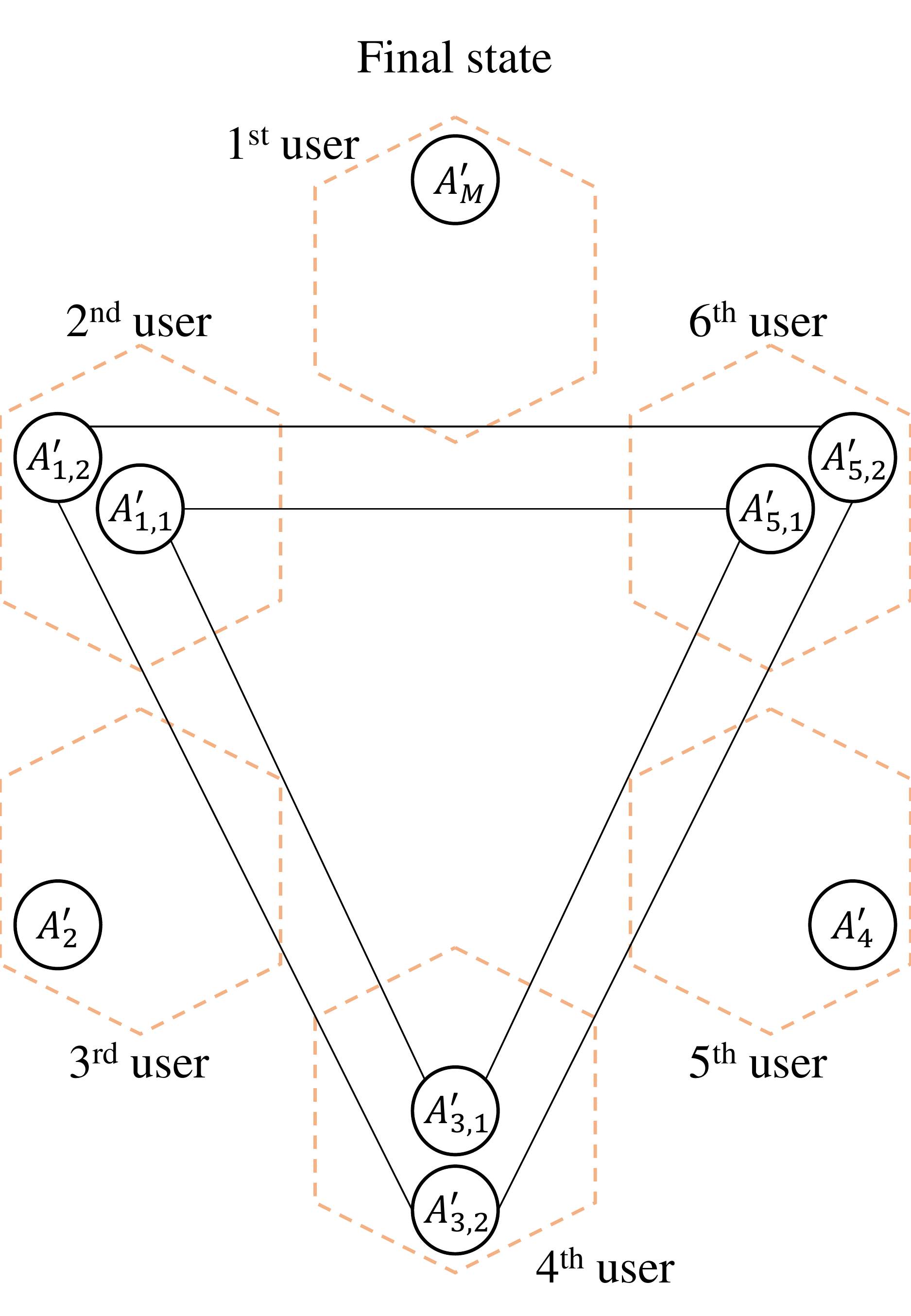}%
}

\caption{
(a) Initial state $\phi_5$ in Eq.~(\ref{eq:example1})
shared over $M$ users for $M\geq 6$;
(b) Final state obtained by rotating the initial state;
In each illustration,
circles indicate quantum systems for the quantum state rotation task,
and, for each $i\ge7$,
the $i^{\mathrm{th}}$ user and his/her quantum systems 
are not explicitly illustrated. 
A GHZ state is represented as three circles connected by lines.
}
\label{fig:twoGHZ}
\end{figure}

\begin{proof}[Proof of Proposition~\ref{prop:example}]
As described in Fig.~\ref{fig:twoGHZ}(a),
we construct an initial state $\ket{\phi_5}_A$ of the QSR task
on the system $A=A_1A_2\cdots A_M$ with $M\ge6$ as follows:
\begin{equation} \label{eq:example1}
\ket{\phi_5}_A
=\ket{\mathrm{GHZ}}_{A_{1,1}A_{3,1}A_{5,1}}
\otimes
\ket{\mathrm{GHZ}}_{A_{1,2}A_{3,2}A_{5,2}}
\otimes
\sbigotimes_{i\in[M]\setminus\{1,3,5\}} \ket{\varphi_i}_{A_i},
\end{equation}
where $A_i=A_{i,1}A_{i,2}$ for $i=1,3,5$,
and $\ket{\varphi_i}$ is any pure quantum state.
The final state corresponding to $\phi_5$ is also presented in Fig.~\ref{fig:twoGHZ}(b).

Suppose that there is a sequence $\{\mathcal{R}_n\}_{n\in\mathbb{N}}$
of QSR protocols $\mathcal{R}_n$ of $\phi_5^{\otimes n}$
whose achievable total entanglement rate is zero.
From Theorem~\ref{thm:necessary},
we obtain exact values of the segment entanglement rates
$e_{i,j}(\phi_5,\{\mathcal{R}_n\})$ for $i\neq j$ as follows:
\begin{equation} \label{eq:rates2}
e_{i,j}(\phi_5,\{\mathcal{R}_n\})=
\begin{cases}
1
&\text{if $i,j\in\{2,4,6\}$} \\
-1
&\text{if $i,j\in\{1,3,5\}$} \\
0
&\text{otherwise}.
\end{cases}
\end{equation}
Illustrations for quantum systems of entanglement resources giving non-zero segment entanglement rates
are provided in Fig.~\ref{fig:ConGen}.

\begin{figure}
\includegraphics[clip,width=.85\columnwidth]{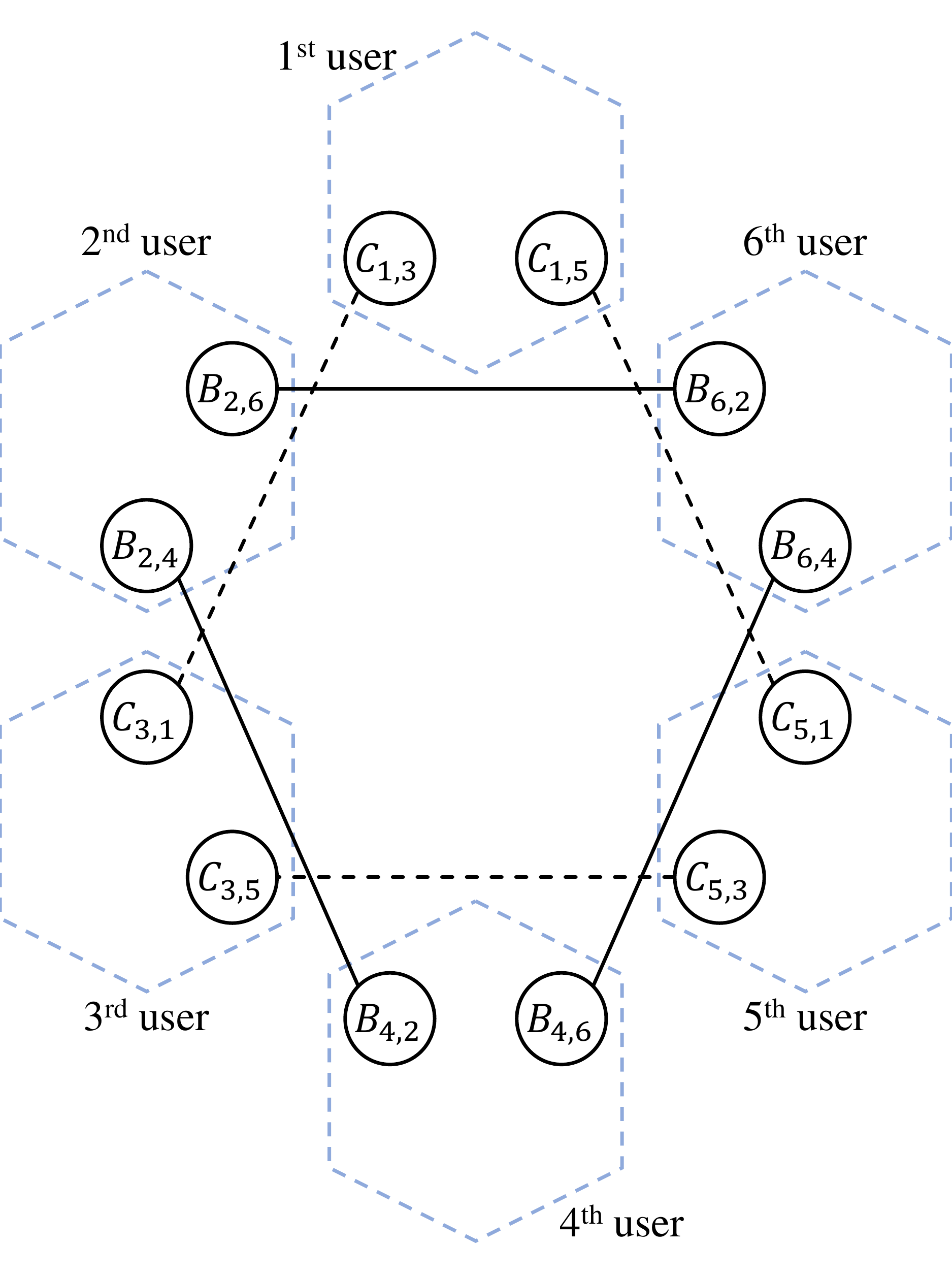}
\caption{
Systems of entanglement resources giving positive/negative segment entanglement rates:
Circles indicate quantum systems for entanglement resources,
and, for each $i\ge7$, the $i^{\mathrm{th}}$ user and his/her quantum systems are not explicitly illustrated.
If we assume that the initial state $\phi_5$ in Eq.~(\ref{eq:example1}) can be rotated
by a sequence $\{\mathcal{R}_n\}_{n\in\mathbb{N}}$ of QSR protocols whose achievable total entanglement rate is zero,
then Theorem~\ref{thm:necessary} implies that all segment entanglement rates have one of three values 1, 0, -1,
as shown in Eq.~(\ref{eq:rates2}).
In this illustration, consumed (generated) entanglement resources corresponding to positive (negative) segment entanglement rates are described as circles connected by straight (dashed) lines.
Entanglement resources for zero segment entanglement rates are not explicitly illustrated.
}
\label{fig:ConGen}
\end{figure}

The sequence $\{\mathcal{R}_n\}_{n\in\mathbb{N}}$ and its segment entanglement rates $e_{i,j}$ in Eq.~(\ref{eq:rates2}) imply that it is possible to carry out the TU task by means of LOCC assisted by the catalytic use of entanglement under the asymptotic scenario.
To be specific, recall that each $\mathcal{R}_n$ is LOCC protocol transforming the initial state $\phi_5^{\otimes n}$ and the input entanglement resources $\tilde{\Psi}_n$ into the final state $\phi_{\mathrm{f}}^{\otimes n}$ and the output entanglement resources $\tilde{\Phi}_n$ with error $\varepsilon_n$,
where $\phi_{\mathrm{f}}$ is the final state of the QSR task corresponding to the initial state $\phi_5$.
Note that, in Eq.~(\ref{eq:rates2}), the zero segment entanglement rate $e_{i,j}$ means that the $i^{\mathrm{th}}$ user and the $j^{\mathrm{th}}$ user catalytically use entanglement resources in the asymptotic scenario.
Thus, the sequence $\{\mathcal{R}_n\}_{n\in\mathbb{N}}$ of the QSR protocols can be considered as a sequence of LOCC protocols assisted by the catalytic use of entanglement, which asymptotically transforms an initial state
\begin{eqnarray}
\ket{\eta}&=&\ket{\mathrm{GHZ}}_{A_{1,1}A_{3,1}A_{5,1}}
\otimes
\ket{\mathrm{GHZ}}_{A_{1,2}A_{3,2}A_{5,2}}
\otimes
\sbigotimes_{i\in[M]\setminus\{1,3,5\}} \ket{\varphi_i}_{A_i} \nonumber \\
&&\otimes \ket{\mathrm{ebit}}_{B_{2,4}B_{4,2}}\otimes \ket{\mathrm{ebit}}_{B_{4,6}B_{6,4}}\otimes \ket{\mathrm{ebit}}_{B_{6,2}B_{2,6}}
\end{eqnarray}
into a final state
\begin{eqnarray}
\ket{\eta_\mathrm{f}}&=&\ket{\mathrm{GHZ}}_{A'_{1,1}A'_{3,1}A'_{5,1}}
\otimes
\ket{\mathrm{GHZ}}_{A'_{1,2}A'_{3,2}A'_{5,2}}
\otimes
\sbigotimes_{i\in[M]\setminus\{1,3,5\}} \ket{\varphi_i}_{A'_i} \nonumber \\
&&\otimes \ket{\mathrm{ebit}}_{C_{1,3}C_{3,1}}\otimes \ket{\mathrm{ebit}}_{C_{3,5}C_{5,3}}\otimes \ket{\mathrm{ebit}}_{C_{5,1}C_{1,5}}.
\end{eqnarray}
In this case, if the $1^{\mathrm{st}}$ user has all systems of the others
except for the $3^{\mathrm{rd}}$ user and the $5^{\mathrm{th}}$ user, and the $1^{\mathrm{st}}$ user play the roles of the rest except for the $3^{\mathrm{rd}}$ user and the $5^{\mathrm{th}}$ user,
then the $1^{\mathrm{st}}$ user can locally prepare the three ebits and the pure quantum states $\varphi_i$ with $i\in[M]\setminus\{1,3,5\}$ of the initial state $\eta$ and the two GHZ states of the final state $\eta_{\mathrm{f}}$,
and the $3^{\mathrm{rd}}$ user and the $5^{\mathrm{th}}$ user can locally prepare the pure quantum states $\varphi_2$ and $\varphi_4$, respectively.
It follows that there exists a sequence of LOCC protocols assisted by the catalytic use of entanglement, which asymptotically transforms a quantum state 
\begin{equation}
\ket{\mathrm{GHZ}}_{A_{1,1}A_{3,1}A_{5,1}}
\otimes
\ket{\mathrm{GHZ}}_{A_{1,2}A_{3,2}A_{5,2}}
\end{equation}
into a quantum state
\begin{equation}
\ket{\mathrm{ebit}}_{C_{1,3}C_{3,1}}\otimes \ket{\mathrm{ebit}}_{C_{3,5}C_{5,3}}\otimes \ket{\mathrm{ebit}}_{C_{5,1}C_{1,5}}.
\end{equation}
This means that two GHZ states shared by the $1^{\mathrm{st}}$ user, the $3^{\mathrm{rd}}$ user, and  the $5^{\mathrm{th}}$ user are transformed into
the three ebits symmetrically shared among the three users by means of LOCC
and the catalytic use of entanglement resources
under the asymptotic scenario.
However, this contradicts to Lemma~\ref{lem:GHZtoSinglet}.
Hence,
the achievable total entanglement rate $r$ is positive.
\end{proof}

We remark that
it is not sufficient to consider initial states
similar to the state $\phi_5$ in Eq.~(\ref{eq:example1})
in order to prove Proposition~\ref{prop:example}
with respect to $M=3,4,5$.
For example,
consider the initial state 
\begin{equation}
\ket{\phi_6}_A
=\ket{\mathrm{GHZ}}_{A_{1,1}A_{2,1}A_{3,1}}
\otimes
\ket{\mathrm{GHZ}}_{A_{1,2}A_{2,2}A_{3,2}}
\otimes \ket{\varphi}_{A_4},
\end{equation}
where $A_i=A_{i,1}A_{i,2}$ for $i=1,2,3$,
and $\ket{\varphi}$ is any pure quantum state.
If there exists a sequence $\{\mathcal{R}_n\}$ of QSR protocols
for $\phi_6$
whose achievable total entanglement rate is zero,
then Theorem~\ref{thm:necessary} tells us that
its segment entanglement rates are determined as
\begin{eqnarray}
&&e_{1,2}(\phi_6,\{\mathcal{R}_n\})
=e_{1,3}(\phi_6,\{\mathcal{R}_n\})
=-1, \\
&&e_{1,4}(\phi_6,\{\mathcal{R}_n\})
=e_{2,3}(\phi_6,\{\mathcal{R}_n\})
=0, \\
&&e_{2,4}(\phi_6,\{\mathcal{R}_n\})
=e_{3,4}(\phi_6,\{\mathcal{R}_n\})
=1.
\end{eqnarray}
To the best of our knowledge,
whether such a sequence exists or not is unknown.
On this account,
it is hard to prove Proposition~\ref{prop:example}
for $M=3,4,5$,
as long as we stick to initial states consisting of the two GHZ states.

\section{Examples} \label{sec:Exmpls}

\subsection{SWAP-invariant initial states}

\begin{figure}
\includegraphics[clip,width=\columnwidth]{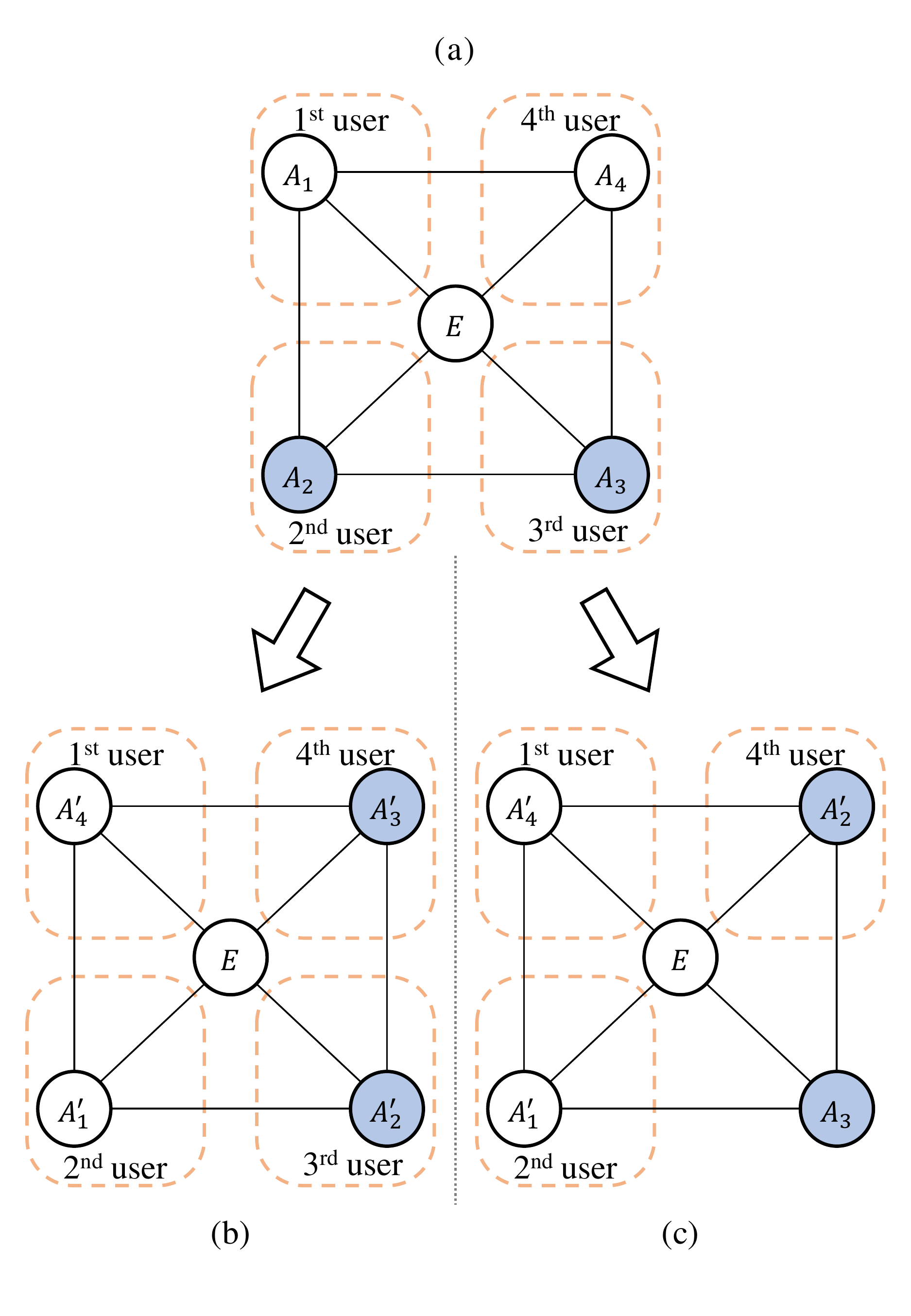}
\caption{
In each illustration,
circles indicate quantum systems for the initial and final states,
and two blue circles represent symmetric parts;
(a) Initial state $\ket{\psi}_{A_1A_2A_3A_4E}$ of the quantum state rotation task for four users:
The initial state is SWAP-invariant on systems $A_2$ and $A_3$;
(b) Final state rotated by all users;
(c) Final state rotated by the $1^{\mathrm{st}}$ user, the $2^{\mathrm{nd}}$ user, and the $4^{\mathrm{th}}$ user:
Here, the quantum system $A_3$ is considered as a part of the environment.
}
\label{fig:problem}
\end{figure}

In this section,
we see that reduction of the number of users in the QSR task
does not necessarily reduce the OEC of the task.

The initial state $\ket{\psi}_{A_1A_2A_3A_4E}$ of the QSR task
is said to be \emph{SWAP-invariant on systems $A_2$ and $A_3$},
if it satisfies
\begin{equation}
\left(\mathrm{SWAP}_{A_2\leftrightarrow A_3}\right)(\psi)=\psi,
\end{equation}
where $\mathrm{SWAP}_{X\leftrightarrow Y}$
is a quantum channel swapping quantum states in quantum systems $X$ and $Y$.
Let us consider the QSR task of the SWAP-invariant initial state $\ket{\psi}_{A_1A_2A_3A_4E}$.
We provide illustrations of the SWAP-invariant initial state and its final state in Fig.~\ref{fig:problem}(a)
and Fig.~\ref{fig:problem}(b),
respectively.
From the viewpoint of the $3^{\mathrm{rd}}$ user,
the part $A'_2$ of the final state $\psi_{\mathrm{f}}$
is identical to the part $A_3$ of the initial state $\psi$.
So, it is possible to exclude the $3^{\mathrm{rd}}$ user to carry out
the QSR task of the four users,
and so the $3^{\mathrm{rd}}$ user does nothing,
since this task can be done
by the $2^{\mathrm{nd}}$ user directly transmitting his/her quantum state
to the $4^{\mathrm{th}}$ user, as described in Fig.~\ref{fig:problem}(c).
In other words,
the original QSR task of the four users can be replaced by
the QSR task of the $1^{\mathrm{st}}$ user, the $2^{\mathrm{nd}}$ user, and the $4^{\mathrm{th}}$ user for the same initial state.

Let $e_{\mathrm{opt}}^{(3)}(\psi)$ and $e_{\mathrm{opt}}^{(4)}(\psi)$
be the OECs for the QSR tasks of the initial state $\ket{\psi}_{A_1A_2A_3A_4E}$ performed by the three users and the four users,
respectively.
In this case,
are two OECs
$e_{\mathrm{opt}}^{(3)}(\psi)$ and $e_{\mathrm{opt}}^{(4)}(\psi)$ equal?
One may guess that
$e_{\mathrm{opt}}^{(3)}(\psi)\le e_{\mathrm{opt}}^{(4)}(\psi)$ holds in general,
since the part $A_3$ does not need to be transmitted during the second.

However, this is not the case.
Consider the SWAP-invariant initial state
\begin{equation} \label{eq:exam}
\ket{\phi_7}_{A_1A_2A_3A_4} = \ket{\varphi_1}_{A_1}\otimes\ket{\mathrm{ebit}}_{A_2A_3}\otimes \ket{\varphi_2}_{A_4},
\end{equation}
where $E$ is regarded as a one-dimensional system, $\ket{\varphi_1}$ and $\ket{\varphi_2}$ are any pure quantum states,
and
$\ket{\mathrm{ebit}}$ is presented in Eq.~(\ref{eq:GHZebit}).

If the $1^{\mathrm{st}}$ user, the $2^{\mathrm{nd}}$ user, and the $4^{\mathrm{th}}$ user rotate the initial state $\phi_7$ without the $3^{\mathrm{rd}}$ user,
then this QSR task is nothing but Schumacher compression~\cite{S95,W13}
in which the part $A_2$ is transmitted to the $4^{\mathrm{th}}$ user by consuming ebits instead of qubit channels.
This is because the quantum states $\varphi_1$ and $\varphi_2$ can be locally prepared by the $2^{\mathrm{nd}}$ user and the $1^{\mathrm{st}}$ user,
respectively,
without consuming and gaining any entanglement resource.
It turns out that
the minimal amount of entanglement required for this Schumacher compression is $H(A_2)_{\phi_7}$,
and Theorem~\ref{thm:lb} implies that $H(A_2)_{\phi_7}$ is a lower bound on the OEC of this QSR task of three users.
Thus, we have
$e_{\mathrm{opt}}^{(3)}(\phi_7)=H(A_2)_{\phi_7}$.

On the other hand, in the QSR task of four users,
the $3^{\mathrm{rd}}$ user can locally prepare an ebit, and then the $3^{\mathrm{rd}}$ user can share the ebit with the $4^{\mathrm{th}}$ user
by using the Schumacher compression~\cite{S95,W13}
and the quantum teleportation~\cite{BBCJPW93}.
The amount of entanglement consumed in this transmission is $H(A_2)_{\phi_7}$.
The $2^{\mathrm{nd}}$ user and the $3^{\mathrm{rd}}$ user can gain the same amount of entanglement
by distilling the ebit on the systems $A_2$ and $A_3$.
Lastly,
without any entanglement resource,
the $1^{\mathrm{st}}$ user and the $2^{\mathrm{nd}}$ user locally prepare $\varphi_2$ and $\varphi_1$, respectively.
In this way, the initial state $\phi_7$ is rotated,
and the achievable total entanglement rate becomes zero in this case.
The non-negativity of the OEC implies $e_{\mathrm{opt}}^{(4)}(\phi_7)=0$.

Therefore,
we obtain that $e_{\mathrm{opt}}^{(3)}(\phi_7)> e_{\mathrm{opt}}^{(4)}(\phi_7)$
holds for the SWAP-invariant initial state $\phi_7$.
This means that
even though the $3^{\mathrm{rd}}$ user does not have to participate in the QSR task,
helping the rest users to achieve the task can reduce the OEC.

\subsection{Quantum state rotation with cooperation}

\begin{figure}
\subfloat[]{%
  \includegraphics[clip,width=\columnwidth]{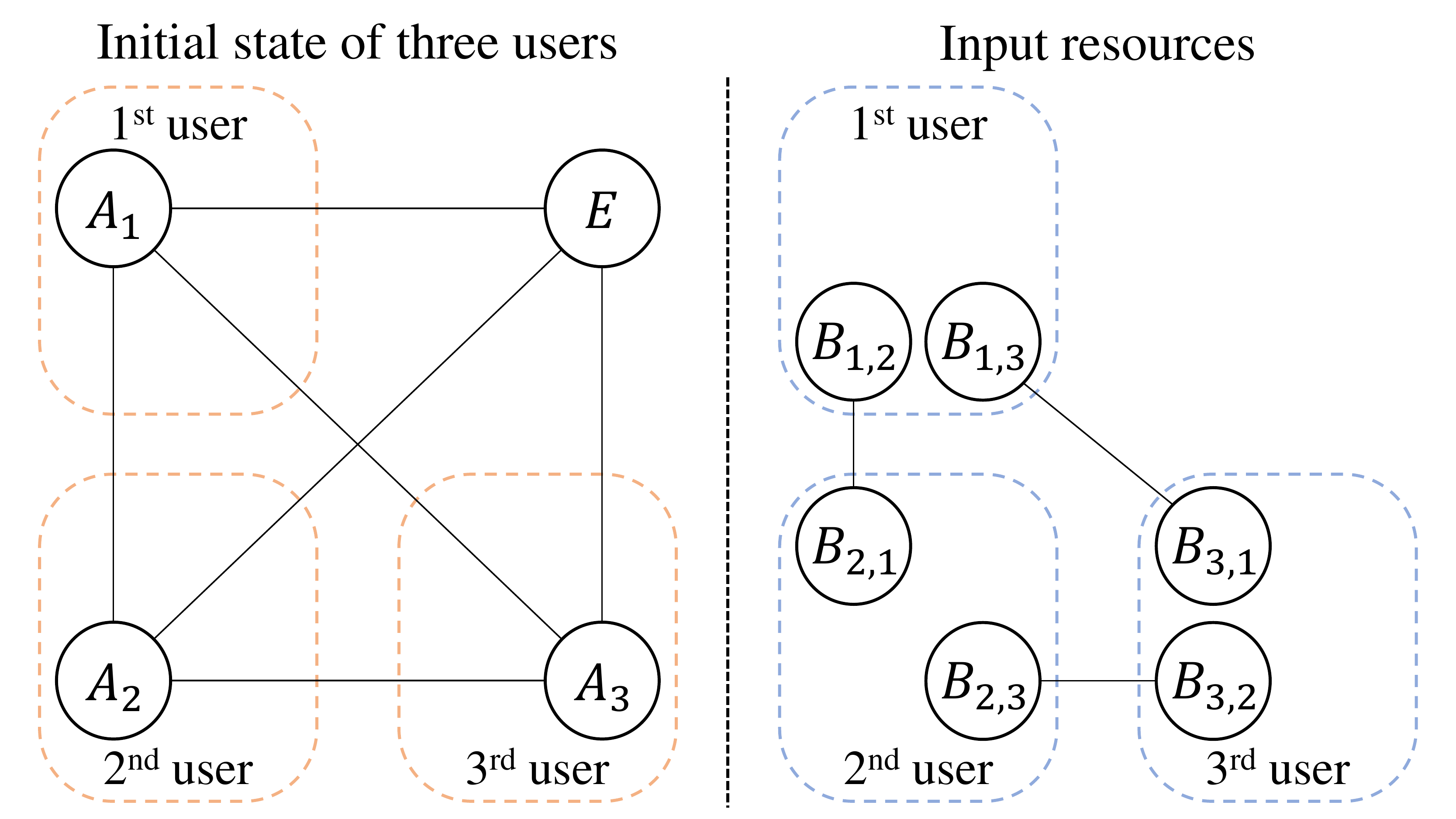}%
}

\subfloat[]{%
  \includegraphics[clip,width=\columnwidth]{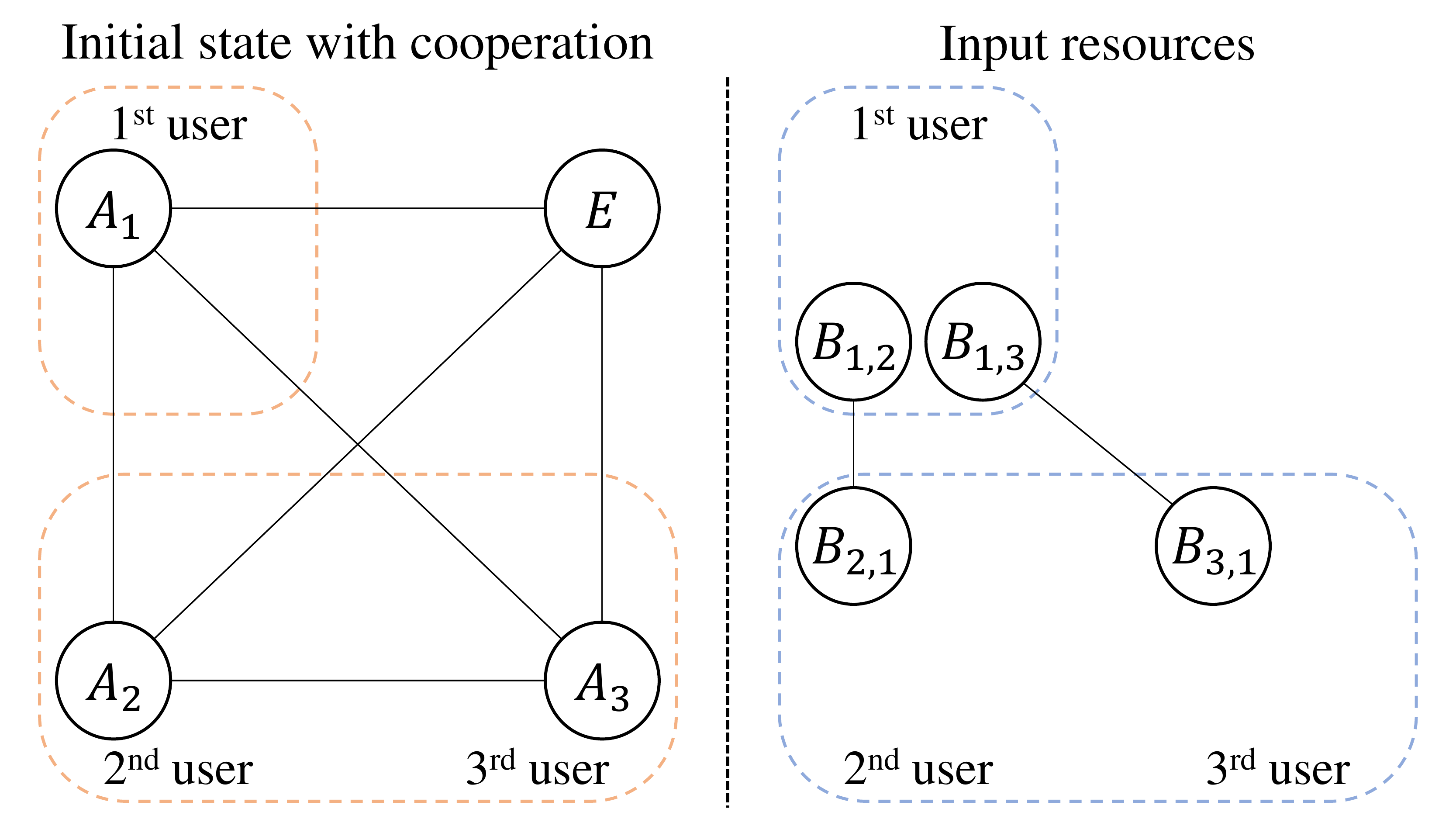}%
}

\caption{
(a) Initial state and input entanglement resources for the quantum state rotation task of three users;
(b) Initial state and input entanglement resources when the $2^{\mathrm{nd}}$ user and the $3^{\mathrm{rd}}$ user cooperate:
To cooperate, they gathered in the same laboratory,
and so an entangled state in the systems $B_{2,3}$ and $B_{3,2}$ is not considered as a non-local resource.
In the second illustration, these systems are not described.
}
\label{fig:problem2}
\end{figure}

In this section,
we answer the following question:
If some of the users are allowed not only LOCC but nonlocal (global) operations on their shared quantum systems,
can they perform the QSR task at a smaller OEC?

We consider the QSR task of $\ket{\psi}_{A_1A_2A_3E}$ performed by three users,
as shown in Fig.~\ref{fig:problem2}(a),
and we modify this QSR task
by assuming that the $2^{\mathrm{nd}}$ user and the $3^{\mathrm{rd}}$ user are in the same laboratory in order to cooperate,
as depicted in Fig.~\ref{fig:problem2}(b).
In this case,
the $2^{\mathrm{nd}}$ user and the $3^{\mathrm{rd}}$ user can apply
any quantum operations to their quantum states in the laboratory,
but pure maximally entangled states shared by the $2^{\mathrm{nd}}$ user and the $3^{\mathrm{rd}}$ user
are not considered as non-local resources in this modified task,
since the entangled states can be locally prepared in their laboratory.
On this account,
while the three users in Fig.~\ref{fig:problem2}(b) can make use of any QSR protocol of $\ket{\psi}_{A_1A_2A_3E}$ in Fig.~\ref{fig:problem2}(a),
it is hard to guess the minimal amount of entanglement consumed by two laboratories in the modified task.

Under this setting,
one may guess that the OEC of the modified task is less than or equal to that of the original one.
However,
the initial state
\begin{equation}
\ket{\phi_8}_{A_1A_2A_3} = \ket{\varphi_1}_{A_1}\otimes\ket{\varphi_2}_{A_2A_3}
\end{equation}
shows that such a guess is wrong,
where $E$ is regarded as a one-dimensional system, $\ket{\varphi_1}$ is any pure quantum state,
and $\ket{\varphi_2}$ is any pure entangled state.
In this case, the initial state $\phi_8$ can be rotated by three users with a zero achievable total entanglement rate as follows:
The pure quantum state $\ket{\varphi_1}$ is prepared by the $2^{\mathrm{nd}}$ user, and
the $1^{\mathrm{st}}$ user and the $3^{\mathrm{rd}}$ user can share the pure quantum state $\ket{\varphi_2}$ by using Schumacher compression~\cite{S95,W13}
together with the quantum teleportation~\cite{BBCJPW93}. The amount of entanglement consumed in this transmission is $H(A_2)_{\phi_8}$.
The $2^{\mathrm{nd}}$ user and the $3^{\mathrm{rd}}$ user can gain the same amount of entanglement by applying entanglement distillation~\cite{BBPS96,BDSW96,BBPSSW97}
to $\varphi_2$ on the systems $A_2A_3$.
Thus, the OEC is zero when the QSR task is performed without any cooperation.

On the other hand,
in the modified task,
the $1^{\mathrm{st}}$ user and the $2^{\mathrm{nd}}$ ($3^{\mathrm{rd}}$) user cannot share the quantum state $\varphi_2$
without consuming any entanglement resources between two laboratories,
since $\varphi_2$ is entangled.
This means that the OEC of the modified task is positive.
Therefore, from the initial state $\phi_8$, we know that
the OEC for the original task without any cooperation can be less
than that of the modified task in which some of the users cooperate.
This is because one does not take into account gain as well as consumption of entanglement resources between them when computing the OEC of the modified task with the cooperation of the $2^{\mathrm{nd}}$ user and the $3^{\mathrm{rd}}$ user.     

\section{Conclusion} \label{sec:conclusion}

In this work, we have introduced the QSR task in which the $M$ users circularly transfer their respective quantum states via entanglement-assisted LOCC.
We have considered the QSR as
a fundamental quantum communication task for $M$ users
and have investigated the minimal amount of entanglement consumed among the users
under the asymptotic scenario.
For this investigation,
we have formally formulated the QSR protocol,
the achievable total entanglement rate,
and the OEC.
We have derived lower and upper bounds on the OEC,
and have presented conditions on zero OECs
and zero achievable total entanglement rates. 

The QSR task includes the QSE task~\cite{OW08,LTYAL19,LYAL19} as a special case,
in which two users, Alice and Bob, exchange their respective quantum states
via entanglement-assisted LOCC.
However,
the QSR task is not a direct generalization of the QSE task.
That is,
we have shown that
there is a unique property of the QSR task
not appearing in QSE tasks for two users:
Not all initial states without the environment system can be rotated without consuming any entanglement,
while such states can be exchanged at zero entanglement cost via local unitary operations.
We have also considered two specific settings of QSR tasks.
In the first setting,
some users do not have to participate in the task.
In the second,
some of the users can cooperate by using non-local operations.
For some initial states,
we have shown that
the OEC for the original QSR task
can be smaller than those for each setting.

While the lower bound $l$ presented in Eq.~(\ref{eq:LB})
is helpful to evaluate the OEC,
it becomes zero for initial states without the environment system $E$.
This means that it is not straightforward
to determine whether the OECs
for such initial states are zero or not,
unless we can explicitly construct an optimal QSR protocol.
This is the main reason why we used the result of Theorem~\ref{thm:necessary} that
the segment entanglement rates are determined in terms of
the von Neumann entropies of the initial state,
in order to prove Proposition~\ref{prop:example} instead of the lower bound $l$.
On this account,
finding tighter lower bounds can be a meaningful future work.

As potential applications of our work,
the QSR task can serve as one of the fundamental sub-routines
in distributed quantum computing~\cite{CEHM99,BDLMSS04} and quantum networks~\cite{CZKM97,AML16},
since they usually involve more than two users.
In addition,
the QSR task can be used
as a sub-task of more general quantum communication tasks.
For example,
let $\sigma$ be a permutation on $[M]$,
then we can devise a new quantum communication task for $M$ user
in which the $i^{\mathrm{th}}$ user transmits his/her quantum state to
the $\sigma(i)^{\mathrm{th}}$ user
by means of entanglement-assisted LOCC.
We call this task \emph{quantum state permutation}.
It is a well-known fact that
any permutation on a finite set has a unique cycle decomposition,
i.e.,
the permutation is expressed as a product of disjoint cycles.
So, the quantum state permutation task with respect to $\sigma$ can be decomposed as QSR sub-tasks,
since the QSR tasks intuitively corresponds to disjoint cycles.
In this situation, our results for the QSR task can be useful tools to investigate the OEC
for the quantum state permutation task.

\section*{ACKNOWLEDGMENTS}
This research was supported by Basic Science Research Program
through the National Research Foundation of Korea(NRF)
funded by the Ministry of Education(Grant No. NRF-2019R1A6A3A01092426
and Grant No. NRF-2020R1I1A1A01058364).
H.\ Y.\ acknowledges CREST (Japan Science and Technology Agency) JPMJCR1671, Cross-ministerial Strategic Innovation Promotion Program (SIP) (Council for Science, Technology and Innovation (CSTI)), and JSPS Overseas Research Fellowships.
S.L. acknowledges support from the Basic Science Research Program through the National Research Foundation of Korea funded by the Ministry of Science and ICT (Grant No. NRF-2019R1A2C1006337) and the Ministry of Science and ICT, Korea, under the Information Technology Research Center support program (Grant No. IITP-2021-2018-0-01402) supervised by the Institute for Information and Communications Technology Promotion.

\bibliography{QSRbib}

\begin{thebibliography}{30}
\expandafter\ifx\csname natexlab\endcsname\relax\def\natexlab#1{#1}\fi
\expandafter\ifx\csname bibnamefont\endcsname\relax
  \def\bibnamefont#1{#1}\fi
\expandafter\ifx\csname bibfnamefont\endcsname\relax
  \def\bibfnamefont#1{#1}\fi
\expandafter\ifx\csname citenamefont\endcsname\relax
  \def\citenamefont#1{#1}\fi
\expandafter\ifx\csname url\endcsname\relax
  \def\url#1{\texttt{#1}}\fi
\expandafter\ifx\csname urlprefix\endcsname\relax\def\urlprefix{URL }\fi
\providecommand{\bibinfo}[2]{#2}
\providecommand{\eprint}[2][]{\url{#2}}

\bibitem[{\citenamefont{Bennett et~al.}(1993)\citenamefont{Bennett, Brassard,
  Cr{\'e}peau, Jozsa, Peres, and Wootters}}]{BBCJPW93}
\bibinfo{author}{\bibfnamefont{C.~H.} \bibnamefont{Bennett}},
  \bibinfo{author}{\bibfnamefont{G.}~\bibnamefont{Brassard}},
  \bibinfo{author}{\bibfnamefont{C.}~\bibnamefont{Cr{\'e}peau}},
  \bibinfo{author}{\bibfnamefont{R.}~\bibnamefont{Jozsa}},
  \bibinfo{author}{\bibfnamefont{A.}~\bibnamefont{Peres}}, \bibnamefont{and}
  \bibinfo{author}{\bibfnamefont{W.~K.} \bibnamefont{Wootters}},
  \bibinfo{journal}{Phys. Rev. Lett.} \textbf{\bibinfo{volume}{70}},
  \bibinfo{pages}{1895} (\bibinfo{year}{1993}).

\bibitem[{\citenamefont{Horodecki et~al.}(2005)\citenamefont{Horodecki,
  Oppenheim, and Winter}}]{HOW05}
\bibinfo{author}{\bibfnamefont{M.}~\bibnamefont{Horodecki}},
  \bibinfo{author}{\bibfnamefont{J.}~\bibnamefont{Oppenheim}},
  \bibnamefont{and} \bibinfo{author}{\bibfnamefont{A.}~\bibnamefont{Winter}},
  \bibinfo{journal}{Nature (London)} \textbf{\bibinfo{volume}{436}},
  \bibinfo{pages}{673} (\bibinfo{year}{2005}).

\bibitem[{\citenamefont{Horodecki et~al.}(2006)\citenamefont{Horodecki,
  Oppenheim, and Winter}}]{HOW06}
\bibinfo{author}{\bibfnamefont{M.}~\bibnamefont{Horodecki}},
  \bibinfo{author}{\bibfnamefont{J.}~\bibnamefont{Oppenheim}},
  \bibnamefont{and} \bibinfo{author}{\bibfnamefont{A.}~\bibnamefont{Winter}},
  \bibinfo{journal}{Commun. Math. Phys.} \textbf{\bibinfo{volume}{269}},
  \bibinfo{pages}{107} (\bibinfo{year}{2006}).

\bibitem[{\citenamefont{Devetak}(2006)}]{D06}
\bibinfo{author}{\bibfnamefont{I.}~\bibnamefont{Devetak}},
  \bibinfo{journal}{Phys. Rev. Lett.} \textbf{\bibinfo{volume}{97}},
  \bibinfo{pages}{140503} (\bibinfo{year}{2006}).

\bibitem[{\citenamefont{Oppenheim}(2008)}]{O08}
\bibinfo{author}{\bibfnamefont{J.}~\bibnamefont{Oppenheim}}
  (\bibinfo{year}{2008}), \eprint{arXiv:0805.1065v1}.

\bibitem[{\citenamefont{Devetak and Yard}(2008)}]{DY08}
\bibinfo{author}{\bibfnamefont{I.}~\bibnamefont{Devetak}} \bibnamefont{and}
  \bibinfo{author}{\bibfnamefont{J.}~\bibnamefont{Yard}},
  \bibinfo{journal}{Phys. Rev. Lett.} \textbf{\bibinfo{volume}{100}},
  \bibinfo{pages}{230501} (\bibinfo{year}{2008}).

\bibitem[{\citenamefont{Yard and Devetak}(2009)}]{YD09}
\bibinfo{author}{\bibfnamefont{J.~T.} \bibnamefont{Yard}} \bibnamefont{and}
  \bibinfo{author}{\bibfnamefont{I.}~\bibnamefont{Devetak}},
  \bibinfo{journal}{IEEE Trans. Inf. Theory} \textbf{\bibinfo{volume}{55}},
  \bibinfo{pages}{5339} (\bibinfo{year}{2009}).

\bibitem[{\citenamefont{Abeyesinghe et~al.}(2009)\citenamefont{Abeyesinghe,
  Devetak, Hayden, and Winter}}]{ADHW09}
\bibinfo{author}{\bibfnamefont{A.}~\bibnamefont{Abeyesinghe}},
  \bibinfo{author}{\bibfnamefont{I.}~\bibnamefont{Devetak}},
  \bibinfo{author}{\bibfnamefont{P.}~\bibnamefont{Hayden}}, \bibnamefont{and}
  \bibinfo{author}{\bibfnamefont{A.}~\bibnamefont{Winter}},
  \bibinfo{journal}{Proc. R. Soc. A} \textbf{\bibinfo{volume}{465}},
  \bibinfo{pages}{2537} (\bibinfo{year}{2009}).

\bibitem[{\citenamefont{Wilde}(2013)}]{W13}
\bibinfo{author}{\bibfnamefont{M.~M.} \bibnamefont{Wilde}},
  \emph{\bibinfo{title}{Quantum Information Theory}}
  (\bibinfo{publisher}{Cambridge University Press, Cambridge, England},
  \bibinfo{year}{2013}).

\bibitem[{\citenamefont{Oppenheim and Winter}()}]{OW08}
\bibinfo{author}{\bibfnamefont{J.}~\bibnamefont{Oppenheim}} \bibnamefont{and}
  \bibinfo{author}{\bibfnamefont{A.}~\bibnamefont{Winter}},
  \eprint{arXiv:quant-ph/0511082}.

\bibitem[{\citenamefont{Lee et~al.}(2019{\natexlab{a}})\citenamefont{Lee,
  Takagi, Yamasaki, Adesso, and Lee}}]{LTYAL19}
\bibinfo{author}{\bibfnamefont{Y.}~\bibnamefont{Lee}},
  \bibinfo{author}{\bibfnamefont{R.}~\bibnamefont{Takagi}},
  \bibinfo{author}{\bibfnamefont{H.}~\bibnamefont{Yamasaki}},
  \bibinfo{author}{\bibfnamefont{G.}~\bibnamefont{Adesso}}, \bibnamefont{and}
  \bibinfo{author}{\bibfnamefont{S.}~\bibnamefont{Lee}},
  \bibinfo{journal}{Phys. Rev. Lett.} \textbf{\bibinfo{volume}{122}},
  \bibinfo{pages}{010502} (\bibinfo{year}{2019}{\natexlab{a}}).

\bibitem[{\citenamefont{Lee et~al.}(2019{\natexlab{b}})\citenamefont{Lee,
  Yamasaki, Adesso, and Lee}}]{LYAL19}
\bibinfo{author}{\bibfnamefont{Y.}~\bibnamefont{Lee}},
  \bibinfo{author}{\bibfnamefont{H.}~\bibnamefont{Yamasaki}},
  \bibinfo{author}{\bibfnamefont{G.}~\bibnamefont{Adesso}}, \bibnamefont{and}
  \bibinfo{author}{\bibfnamefont{S.}~\bibnamefont{Lee}},
  \bibinfo{journal}{Phys. Rev. A} \textbf{\bibinfo{volume}{100}},
  \bibinfo{pages}{042306} (\bibinfo{year}{2019}{\natexlab{b}}).

\bibitem[{\citenamefont{Bennett
  et~al.}(1996{\natexlab{a}})\citenamefont{Bennett, Bernstein, Popescu, and
  Schumacher}}]{BBPS96}
\bibinfo{author}{\bibfnamefont{C.~H.} \bibnamefont{Bennett}},
  \bibinfo{author}{\bibfnamefont{H.~J.} \bibnamefont{Bernstein}},
  \bibinfo{author}{\bibfnamefont{S.}~\bibnamefont{Popescu}}, \bibnamefont{and}
  \bibinfo{author}{\bibfnamefont{B.}~\bibnamefont{Schumacher}},
  \bibinfo{journal}{Phys. Rev. A} \textbf{\bibinfo{volume}{53}},
  \bibinfo{pages}{2046} (\bibinfo{year}{1996}{\natexlab{a}}).

\bibitem[{\citenamefont{Bennett
  et~al.}(1996{\natexlab{b}})\citenamefont{Bennett, DiVincenzo, Smolin, and
  Wootters}}]{BDSW96}
\bibinfo{author}{\bibfnamefont{C.~H.} \bibnamefont{Bennett}},
  \bibinfo{author}{\bibfnamefont{D.~P.} \bibnamefont{DiVincenzo}},
  \bibinfo{author}{\bibfnamefont{J.~A.} \bibnamefont{Smolin}},
  \bibnamefont{and} \bibinfo{author}{\bibfnamefont{W.~K.}
  \bibnamefont{Wootters}}, \bibinfo{journal}{Phys. Rev. A}
  \textbf{\bibinfo{volume}{54}}, \bibinfo{pages}{3824}
  (\bibinfo{year}{1996}{\natexlab{b}}).

\bibitem[{\citenamefont{Bennett et~al.}(1997)\citenamefont{Bennett, Brassard,
  Popescu, Schumacher, Smolin, and Wootters}}]{BBPSSW97}
\bibinfo{author}{\bibfnamefont{C.~H.} \bibnamefont{Bennett}},
  \bibinfo{author}{\bibfnamefont{G.}~\bibnamefont{Brassard}},
  \bibinfo{author}{\bibfnamefont{S.}~\bibnamefont{Popescu}},
  \bibinfo{author}{\bibfnamefont{B.}~\bibnamefont{Schumacher}},
  \bibinfo{author}{\bibfnamefont{J.~A.} \bibnamefont{Smolin}},
  \bibnamefont{and} \bibinfo{author}{\bibfnamefont{W.~K.}
  \bibnamefont{Wootters}}, \bibinfo{journal}{Phys. Rev. Lett.}
  \textbf{\bibinfo{volume}{78}}, \bibinfo{pages}{2031} (\bibinfo{year}{1997}).

\bibitem[{\citenamefont{Schumacher}(1995)}]{S95}
\bibinfo{author}{\bibfnamefont{B.}~\bibnamefont{Schumacher}},
  \bibinfo{journal}{Phys. Rev. A} \textbf{\bibinfo{volume}{51}},
  \bibinfo{pages}{2738} (\bibinfo{year}{1995}).

\bibitem[{\citenamefont{Greenberger et~al.}(1989)\citenamefont{Greenberger,
  Horne, and Zeilinger}}]{GHZ89}
\bibinfo{author}{\bibfnamefont{D.~M.} \bibnamefont{Greenberger}},
  \bibinfo{author}{\bibfnamefont{M.~A.} \bibnamefont{Horne}}, \bibnamefont{and}
  \bibinfo{author}{\bibfnamefont{A.}~\bibnamefont{Zeilinger}},
  \emph{\bibinfo{title}{Bell’s Theorem, Quantum Theory, and Conceptions of
  the Universe}} (\bibinfo{publisher}{Kluwer Academics, Dordrecht},
  \bibinfo{year}{1989}).

\bibitem[{\citenamefont{Bennett et~al.}(2000)\citenamefont{Bennett, Popescu,
  Rohrlich, Smolin, and Thapliyal}}]{BPRST00}
\bibinfo{author}{\bibfnamefont{C.~H.} \bibnamefont{Bennett}},
  \bibinfo{author}{\bibfnamefont{S.}~\bibnamefont{Popescu}},
  \bibinfo{author}{\bibfnamefont{D.}~\bibnamefont{Rohrlich}},
  \bibinfo{author}{\bibfnamefont{J.~A.} \bibnamefont{Smolin}},
  \bibnamefont{and} \bibinfo{author}{\bibfnamefont{A.~V.}
  \bibnamefont{Thapliyal}}, \bibinfo{journal}{Phys. Rev. A}
  \textbf{\bibinfo{volume}{63}}, \bibinfo{pages}{012307}
  (\bibinfo{year}{2000}).

\bibitem[{\citenamefont{Linden et~al.}(2005)\citenamefont{Linden, Popescu,
  Schumacher, and Westmoreland}}]{LPSW05}
\bibinfo{author}{\bibfnamefont{N.}~\bibnamefont{Linden}},
  \bibinfo{author}{\bibfnamefont{S.}~\bibnamefont{Popescu}},
  \bibinfo{author}{\bibfnamefont{B.}~\bibnamefont{Schumacher}},
  \bibnamefont{and}
  \bibinfo{author}{\bibfnamefont{M.}~\bibnamefont{Westmoreland}},
  \bibinfo{journal}{Quantum Inf. Process.} \textbf{\bibinfo{volume}{4}},
  \bibinfo{pages}{241} (\bibinfo{year}{2005}).

\bibitem[{\citenamefont{Cirac et~al.}(1999)\citenamefont{Cirac, Ekert, Huelga,
  and Macchiavello}}]{CEHM99}
\bibinfo{author}{\bibfnamefont{J.~I.} \bibnamefont{Cirac}},
  \bibinfo{author}{\bibfnamefont{A.~K.} \bibnamefont{Ekert}},
  \bibinfo{author}{\bibfnamefont{S.~F.} \bibnamefont{Huelga}},
  \bibnamefont{and}
  \bibinfo{author}{\bibfnamefont{C.}~\bibnamefont{Macchiavello}},
  \bibinfo{journal}{Phys. Rev. A} \textbf{\bibinfo{volume}{59}},
  \bibinfo{pages}{4249} (\bibinfo{year}{1999}).

\bibitem[{\citenamefont{Bru{\ss} et~al.}(2004)\citenamefont{Bru{\ss}, D'Ariano,
  Lewenstein, Macchiavello, Sen(De), and Sen}}]{BDLMSS04}
\bibinfo{author}{\bibfnamefont{D.}~\bibnamefont{Bru{\ss}}},
  \bibinfo{author}{\bibfnamefont{G.~M.} \bibnamefont{D'Ariano}},
  \bibinfo{author}{\bibfnamefont{M.}~\bibnamefont{Lewenstein}},
  \bibinfo{author}{\bibfnamefont{C.}~\bibnamefont{Macchiavello}},
  \bibinfo{author}{\bibfnamefont{A.}~\bibnamefont{Sen(De)}}, \bibnamefont{and}
  \bibinfo{author}{\bibfnamefont{U.}~\bibnamefont{Sen}},
  \bibinfo{journal}{Phys. Rev. Lett.} \textbf{\bibinfo{volume}{93}},
  \bibinfo{pages}{210501} (\bibinfo{year}{2004}).

\bibitem[{\citenamefont{Cirac et~al.}(1997)\citenamefont{Cirac, Zoller, Kimble,
  and Mabuchi}}]{CZKM97}
\bibinfo{author}{\bibfnamefont{J.~I.} \bibnamefont{Cirac}},
  \bibinfo{author}{\bibfnamefont{P.}~\bibnamefont{Zoller}},
  \bibinfo{author}{\bibfnamefont{H.~J.} \bibnamefont{Kimble}},
  \bibnamefont{and} \bibinfo{author}{\bibfnamefont{H.}~\bibnamefont{Mabuchi}},
  \bibinfo{journal}{Phys. Rev. Lett.} \textbf{\bibinfo{volume}{78}},
  \bibinfo{pages}{3221} (\bibinfo{year}{1997}).

\bibitem[{\citenamefont{Azuma et~al.}(2016)\citenamefont{Azuma, Mizutani, and
  Lo}}]{AML16}
\bibinfo{author}{\bibfnamefont{K.}~\bibnamefont{Azuma}},
  \bibinfo{author}{\bibfnamefont{A.}~\bibnamefont{Mizutani}}, \bibnamefont{and}
  \bibinfo{author}{\bibfnamefont{H.-K.} \bibnamefont{Lo}},
  \bibinfo{journal}{Nat. Commun.} \textbf{\bibinfo{volume}{7}},
  \bibinfo{pages}{13523} (\bibinfo{year}{2016}).

\bibitem[{\citenamefont{Fannes}(1973)}]{F73}
\bibinfo{author}{\bibfnamefont{M.}~\bibnamefont{Fannes}},
  \bibinfo{journal}{Commun. Math. Phys.} \textbf{\bibinfo{volume}{31}},
  \bibinfo{pages}{291} (\bibinfo{year}{1973}).

\bibitem[{\citenamefont{Audenaert}(2007)}]{A07}
\bibinfo{author}{\bibfnamefont{K.~M.~R.} \bibnamefont{Audenaert}},
  \bibinfo{journal}{J. Phys. A} \textbf{\bibinfo{volume}{40}},
  \bibinfo{pages}{8127} (\bibinfo{year}{2007}).

\bibitem[{\citenamefont{Anshu et~al.}(2018)\citenamefont{Anshu, Jain, and
  Warsi}}]{AJW18}
\bibinfo{author}{\bibfnamefont{A.}~\bibnamefont{Anshu}},
  \bibinfo{author}{\bibfnamefont{R.}~\bibnamefont{Jain}}, \bibnamefont{and}
  \bibinfo{author}{\bibfnamefont{N.~A.} \bibnamefont{Warsi}},
  \bibinfo{journal}{IEEE Trans. Inf. Theory} \textbf{\bibinfo{volume}{64}},
  \bibinfo{pages}{1436} (\bibinfo{year}{2018}).

\bibitem[{\citenamefont{Dutil and Hayden}()}]{DH10}
\bibinfo{author}{\bibfnamefont{N.}~\bibnamefont{Dutil}} \bibnamefont{and}
  \bibinfo{author}{\bibfnamefont{P.}~\bibnamefont{Hayden}},
  \eprint{arXiv:1011.1974}.

\bibitem[{\citenamefont{van Dam and Hayden}()}]{DH02}
\bibinfo{author}{\bibfnamefont{W.}~\bibnamefont{van Dam}} \bibnamefont{and}
  \bibinfo{author}{\bibfnamefont{P.}~\bibnamefont{Hayden}},
  \eprint{arXiv:quant-ph/0204093}.

\bibitem[{\citenamefont{Donald and Horodecki}(1999)}]{DH99}
\bibinfo{author}{\bibfnamefont{M.~J.} \bibnamefont{Donald}} \bibnamefont{and}
  \bibinfo{author}{\bibfnamefont{M.}~\bibnamefont{Horodecki}},
  \bibinfo{journal}{Physics Letters A} \textbf{\bibinfo{volume}{264}},
  \bibinfo{pages}{257} (\bibinfo{year}{1999}).

\bibitem[{\citenamefont{Vollbrecht and Werner}(2001)}]{VW01}
\bibinfo{author}{\bibfnamefont{K.~G.~H.} \bibnamefont{Vollbrecht}}
  \bibnamefont{and} \bibinfo{author}{\bibfnamefont{R.~F.}
  \bibnamefont{Werner}}, \bibinfo{journal}{Physics Letters A}
  \textbf{\bibinfo{volume}{64}}, \bibinfo{pages}{062307}
  (\bibinfo{year}{2001}).

\end{thebibliography}

\begin{widetext}
\appendix



\section{Proof of Lemma~\ref{lem:LBOBER}} \label{app:lem:lb}

In this appendix, we prove Lemma~\ref{lem:LBOBER} in the main text.
If a protocol $\mathcal{R}$ in Eq.~(\ref{eq:DomainRange}) satisfies
$(\mathcal{R} \otimes \mathrm{id}_{\mathcal{L}(E)})
(\psi \otimes \tilde{\Psi})
= \psi_{\mathrm{f}} \otimes \tilde{\Phi}$,
then $\mathcal{R}$ is said to be \emph{exact}. 
The case regarding only exact QSR protocols is called
an \emph{exact scenario}.

To express two sets of the $M$ users,
we make use of a partition of the set $[M]$.
Let $\{P,P^{\mathsf{c}}\}$ be a partition of the set $[M]$,
where $P$ is any non-empty proper subset of $[M]$,
and $P^{\mathsf{c}}$ is the complement of $P$,
i.e., $P^{\mathsf{c}}=[M]\setminus P$.
If we interpret an element $i$ of the set $[M]$
as the $i^{\mathrm{th}}$ user of the QSR task,
then we can divide the $M$ users into two disjoint subsets $P$ and $P^{\mathsf{c}}$
via the partition $\{P,P^{\mathsf{c}}\}$.

For an exact QSR protocol $\mathcal{R}$ of $\psi$,
we define the \emph{bipartite entanglement difference} $d_P(\psi,\mathcal{R})$
for a partition $\{P,P^{\mathsf{c}}\}$ as
\begin{equation}
d_P(\psi,\mathcal{R})
= \sum_{ i\in P } \sum_{ j\in P^{\mathsf{c}} }
\left[ \log d_{B_{i,j}} -\log d_{C_{i,j}} \right],
\end{equation}
where $d_{B_{i,j}}$ $(d_{C_{i,j}})$ indicate the Schmidt rank of the entanglement resource $\Psi_{i,j}$ $(\Phi_{i,j})$
shared by the $i^{\mathrm{th}}$ user and the $j^{\mathrm{th}}$ user before (after) performing the QSR protocol $\mathcal{R}$.
The following proposition provides a lower bound
on the bipartite entanglement difference 
for the QSR task of the initial state~$\psi$.

\begin{Prop} \label{prop:LBBED}
Let $\ket{\psi}_{AE}$
be the initial state of the QSR task.
The bipartite entanglement difference $d_P(\psi,\mathcal{R})$ for a partition $\{P,P^{\mathsf{c}}\}$ is lower bounded by
\begin{equation}
d_P(\psi,\mathcal{R})
\ge
l_P(\psi)
=\max_{U}
\left\{
 H\left(\sbigotimes_{i\in P} A_{i-1}V\right)_{U\ket{\psi}}
-H\left(\sbigotimes_{i\in P} A_{i }V\right)_{U\ket{\psi}}
\right\},
\end{equation}
where the maximum is taken over all isometries $U$ from $E$ to $V\otimes W$, $V$ and $W$ are any quantum systems,
and $U\ket{\psi}$ is an abbreviation for $\mathds{1}_{A}\otimes U\ket{\psi}$.
\end{Prop}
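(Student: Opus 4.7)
The plan is to invoke monotonicity of bipartite entanglement under LOCC, applied to a bipartition that incorporates an isometric dilation of the environment. I would begin by fixing an isometry $U \colon E \to V \otimes W$ and regarding $V, W$ as inert spectator subsystems. Since the exact QSR protocol $\mathcal{R}$ is LOCC among the $M$ users and never touches $E$, it is in particular LOCC between the two enlarged parties
\begin{equation*}
\mathcal{A}_P = \left( \bigotimes_{i \in P} A_i B_i \right) \otimes V, \qquad \mathcal{A}_{P^{\mathsf{c}}} = \left( \bigotimes_{i \in P^{\mathsf{c}}} A_i B_i \right) \otimes W,
\end{equation*}
with $V$ attached to the $P$-side and $W$ to the $P^{\mathsf{c}}$-side; the corresponding output bipartition replaces each $A_i B_i$ by $A'_{i-1} C_i$. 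Because the resulting input and output global states $(\mathds{1}_A \otimes U \otimes \mathds{1}_B)(\ket{\psi} \otimes \ket{\tilde{\Psi}})$ and $(\mathds{1}_{A'} \otimes U \otimes \mathds{1}_C)(\ket{\psi_{\mathrm{f}}} \otimes \ket{\tilde{\Phi}})$ are both pure across this cut, standard LOCC monotonicity of pure-state entanglement reduces to the single inequality $H(\mathcal{A}'_P) \le H(\mathcal{A}_P)$.

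Next I would evaluate these two entropies explicitly. Factorizing $\tilde{\Psi}$ via Eq.~(\ref{Eq:ERs}), each $\Psi_{i,j}$ with both endpoints on the same side of the cut contributes zero to $H(\mathcal{A}_P)$, while each with exactly one endpoint in $P$ contributes $\log d_{B_{i,j}}$; hence
\begin{equation*}
H(\mathcal{A}_P) = H\!\left( \bigotimes_{i \in P} A_i \otimes V \right)_{\!U\ket{\psi}} + \sum_{\substack{i \in P \\ j \in P^{\mathsf{c}}}} \log d_{B_{i,j}}.
\end{equation*}
Because user $i$ holds $A'_{i-1}$ after the protocol and $\psi_{\mathrm{f}}$ is obtained from $\psi$ merely by the identity relabeling $A_i \to A'_i$, the analogous output calculation gives
\begin{equation*}
H(\mathcal{A}'_P) = H\!\left( \bigotimes_{i \in P} A_{i-1} \otimes V \right)_{\!U\ket{\psi}} + \sum_{\substack{i \in P \\ j \in P^{\mathsf{c}}}} \log d_{C_{i,j}},
\end{equation*}
using the entropy equality $H(\bigotimes_{i \in P} A'_{i-1} V)_{U\ket{\psi_{\mathrm{f}}}} = H(\bigotimes_{i \in P} A_{i-1} V)_{U\ket{\psi}}$.

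Substituting both expressions into $H(\mathcal{A}'_P) \le H(\mathcal{A}_P)$ and rearranging isolates
\begin{equation*}
d_P(\psi, \mathcal{R}) \ge H\!\left( \bigotimes_{i \in P} A_{i-1} V \right)_{\!U\ket{\psi}} - H\!\left( \bigotimes_{i \in P} A_i V \right)_{\!U\ket{\psi}},
\end{equation*}
and taking the supremum over isometries $U$ yields $l_P(\psi)$. The main source of friction is bookkeeping: keeping straight which subsystem each user holds before versus after the rotation (user $i$ starts with $A_i$ but ends with $A'_{i-1}$), so that the shifted tensor factor $\bigotimes_{i \in P} A_{i-1}$ ends up on the output side of the inequality and the resource-dimension sums are correctly restricted to cross-cut pairs $(i, j) \in P \times P^{\mathsf{c}}$. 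Once this augmented bipartition and the shift are set up correctly, the LOCC-monotonicity step itself is routine.
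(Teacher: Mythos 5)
Your proposal is correct and follows essentially the same route as the paper: the paper formalizes the attachment of $V$ and $W$ to the two sides of the cut via a ``referee-assisted'' QSR task and then invokes non-increase of entanglement across the $P{:}P^{\mathsf{c}}$ bipartition under LOCC, followed by exactly your additivity bookkeeping for the maximally entangled resources and the relabeling identity $H(\bigotimes_{i\in P} A'_{i-1}V)_{U\ket{\psi_{\mathrm{f}}}}=H(\bigotimes_{i\in P} A_{i-1}V)_{U\ket{\psi}}$. Your direct treatment of $V,W$ as inert spectators is just a leaner packaging of the same argument.
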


\begin{proof}
Let us consider an \emph{$R$-assisted QSR} task
whose idea comes
from Refs.~\cite{OW08,LTYAL19,LYAL19}.
While the environment system $E$ of the initial state $\ket{\psi}_{AE}$ is not owned by any users of the original QSR task,
in the $R$-assisted QSR task, we additionally consider a referee who has the environment system $E$.
In this task,
the referee can assist $M$ users as follows:
The referee divides his part $E$ of the initial state $\ket{\psi}_{AE}$
into two parts $V$ and $W$.
To be specific, the referee locally applies an isometry $U\colon E\to V\otimes W$~\cite{W13} to his quantum state on the quantum system $E$,
and so the initial state $\ket{\psi}_{AE}$ becomes a quantum state $\ket{\xi}_{AVW}$ satisfying $\Tr_{E}\psi=\Tr_{VW}\xi$.
The referee now transfers his quantum state on the system $V$ ($W$)
to one of the users belonging to the set $P$ ($P^{\mathsf{c}}$),
so that the $M$ users can share the quantum state $\ket{\xi}_{AVW}$.

After finishing the referee's assistance,
$M$ users rotate the quantum state $\ket{\xi}_{AVW}$
via entanglement-assisted LOCC,
as in the original QSR.
To be specific, the quantum systems $V$ and $W$ of the users are not rotated during the QSR task, while the user can use them as quantum side information, as in other quantum communication tasks~\cite{HOW05,HOW06,DY08,YD09,ADHW09,LTYAL19,LYAL19}.
In the following, we call such a protocol an exact $R$-assisted QSR protocol of the state $\ket{\xi}_{AVW}$,
and it is denoted by $\mathcal{A}$.
Since $\mathcal{A}$ is LOCC among the $M$ users,
it is also LOCC between two disjoint subsets $P$ and $P^{\mathsf{c}}$
of the users.
By using the fact that the amount of entanglement
between two sets $P$ and $P^{\mathsf{c}}$ of the users
cannot increase on average via LOCC~\cite{BDSW96},
we obtain the inequality
\begin{equation} \label{eq:IE_ERLOCC}
H\left(\sbigotimes_{i\in P} A_{i}VB_i\right)_{\xi \otimes \tilde{\Psi}}
\ge
H\left(\sbigotimes_{i\in P} A'_{i-1}VC_i\right)_{\xi_{\mathrm{f}} \otimes \tilde{\Phi}}.
\end{equation}
By using the additivity of the von Neumann entropy~\cite{W13},
we obtain
\begin{equation}
H\left(\sbigotimes_{i\in P} A_{i}VB_i\right)_{\xi \otimes \tilde{\Psi}}
=H\left(\sbigotimes_{i\in P} A_{i}V\right)_{\xi}
+H\left(\sbigotimes_{i\in P} B_i\right)_{\tilde{\Psi}}.
\end{equation}
Recall that the quantum state $\tilde{\Psi}$ is defined as the tensor product of bipartite maximally entangled states as in Eq.~(\ref{Eq:ERs}), and the systems $B_i$ are defined as in Eq.~(\ref{eq:ERSystems}).
The additivity of the von Neumann entropy~\cite{W13} implies
\begin{equation} \label{eq:MoreSpeci}
H\left(\sbigotimes_{i\in P} B_i\right)_{\tilde{\Psi}}
=\sum_{\substack{ i,j\in P\\ i<j}} H(B_{i,j}B_{j,i})_{\Psi_{i,j}}
+\sum_{ i\in P } \sum_{ j\in P^{\mathsf{c}} } H(B_{i,j})_{\Psi_{i,j}}
=\sum_{ i\in P } \sum_{ j\in P^{\mathsf{c}} } \log d_{B_{i,j}}.
\end{equation}
Since $\Psi_{i,j}$ is a pure bipartite maximally entangled state on quantum systems $B_{i,j}B_{j,i}$
whose Schmidt rank is $d_{B_{i,j}}$, 
$H(B_{i,j}B_{j,i})_{\Psi_{i,j}}=0$ and $H(B_{i,j})_{\Psi_{i,j}}=\log d_{B_{i,j}}$ hold for each $i\neq j$.
The second equality in Eq.~(\ref{eq:MoreSpeci}) comes from this fact.
So we obtain
\begin{equation}
H\left(\sbigotimes_{i\in P} A_{i}B_iV\right)_{\xi \otimes \tilde{\Psi}}
=H\left(\sbigotimes_{i\in P} A_{i}V\right)_{\xi}
+\sum_{ i\in P } \sum_{ j\in P^{\mathsf{c}} } \log d_{B_{i,j}}.
\end{equation}
By using the same method, we obtain
\begin{equation}
H\left(\sbigotimes_{i\in P} A'_{i-1}C_iV\right)_{\xi_{\mathrm{f}} \otimes \tilde{\Phi}}
=H\left(\sbigotimes_{i\in P} A'_{i-1}V\right)_{\xi_{\mathrm{f}}}
+\sum_{ i\in P } \sum_{ j\in P^{\mathsf{c}} } \log d_{C_{i,j}},
\end{equation}
where $d_{C_{i,j}}$ is the Schmidt rank of the entanglement resource $\Phi_{i,j}$ on quantum systems $C_{i,j}C_{j,i}$.
Consequently,
the inequality in Eq.~(\ref{eq:IE_ERLOCC}) is rewritten as
\begin{equation}
\sum_{ i\in P } \sum_{ j\in P^{\mathsf{c}} }
\left[ \log d_{B_{i,j}} -\log d_{C_{i,j}} \right]
\ge
H\left(\sbigotimes_{i\in P} A'_{i-1}V\right)_{\xi_{\mathrm{f}}}
-H\left(\sbigotimes_{i\in P} A_{i}V\right)_{\xi}
=
H\left(\sbigotimes_{i\in P} A'_{i-1}V\right)_{U\ket{\psi_{\mathrm{f}}}}
-H\left(\sbigotimes_{i\in P} A_{i}V\right)_{U\ket{\psi}},
\end{equation}
where $U\ket{\psi_{\mathrm{f}}}$ and $U\ket{\psi}$ are abbreviations for $\mathds{1}_{A}\otimes U\ket{\psi_{\mathrm{f}}}$ and $\mathds{1}_{A}\otimes U\ket{\psi}$,
respectively.
By the definition of the final state $\psi_{\mathrm{f}}$,
we obtain that
\begin{equation} \label{eq:inifinsame}
H\left(\sbigotimes_{i\in P} A'_{i-1}V\right)_{U\ket{\psi_{\mathrm{f}}}}
=H\left(\sbigotimes_{i\in P} A_{i-1}V\right)_{U\ket{\psi}}
\end{equation}
holds.
It follows that
\begin{equation}
d_P(\psi,\mathcal{A})
\ge
H\left(\sbigotimes_{i\in P} A_{i-1}V\right)_{U\ket{\psi}}
-
H\left(\sbigotimes_{i\in P} A_{i}V\right)_{U\ket{\psi}}.
\end{equation}
Note that the above inequality holds for any quantum systems $V$ and $W$ and any isometry $U\colon E\to V\otimes W$.
We further note that any exact QSR protocol of $\ket{\psi}_{AE}$ is the special case of the exact $R$-assisted QSR protocol in which the referee does not assist the users.
It follows that $d_P(\psi,\mathcal{R}) \ge l_P(\psi)$ holds.
\end{proof}

Similarly to the bipartite entanglement difference,
we define the \emph{bipartite entanglement rate} $e_P(\psi,\{\mathcal{R}_n\})$ with respect to the partition $\{P,P^{\mathsf{c}}\}$ of the set $[M]$ and the sequence $\{\mathcal{R}_n\}_{n\in\mathbb{N}}$
whose total entanglement rate is achievable
as follows:
\begin{equation} \label{eq:defBER}
e_P(\psi,\{\mathcal{R}_n\})
= \sum_{ i\in P } \sum_{ j\in P^{\mathsf{c}} } e_{i,j}(\psi,\{\mathcal{R}_n\}),
\end{equation}
where the segment entanglement rate $e_{i,j}$ is defined in Eq.~(\ref{eq:SER}).
To prove Lemma~\ref{lem:LBOBER}, we use the following lemma telling the continuity of the von Neumann entropy~\cite{F73,A07,W13}.

\begin{Lem}[Fannes--Audenaert Inequality~\cite{W13}] \label{lem:FAI}
Let $\rho$ and $\sigma$ be density operators in $\mathcal{D}(X)$, where $X$ is a quantum system, and suppose that $\varepsilon\coloneqq\frac{1}{2}\left\| \rho-\sigma \right\|_{1}$.
Then the inequality $|H(\rho)-H(\sigma)|\le \varepsilon\log[d_X-1] + h(\varepsilon)$ holds, where $h(\cdot)$ is the binary entropy.
\end{Lem}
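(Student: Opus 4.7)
The plan is to reduce this quantum continuity statement to a classical optimization about probability vectors, and then solve that optimization. The first step is a spectral reduction. Since the von Neumann entropy depends only on spectrum, and Mirsky's inequality gives $\sum_i |\lambda_i^\downarrow(\rho) - \lambda_i^\downarrow(\sigma)| \le \|\rho - \sigma\|_1$ for the decreasingly ordered eigenvalue vectors, writing $p$ and $q$ for these ordered spectra of $\rho$ and $\sigma$ reduces the task to the classical Fannes--Audenaert inequality: for probability vectors $p, q \in \mathbb{R}^{d}$ with $d = d_X$ and $\tfrac{1}{2}\|p - q\|_1 \le \varepsilon$, show $|H(p) - H(q)| \le \varepsilon \log(d-1) + h(\varepsilon)$, where $H$ now denotes the classical Shannon entropy.

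Next I would analyze the extremal problem $\max\{H(q) - H(p) : p, q \in \Delta_d,\ \tfrac{1}{2}\|p - q\|_1 = \varepsilon\}$; the constraint can be saturated because the right-hand side is increasing in $\varepsilon$ on the relevant range. Decompose $p = r + a$ and $q = r + b$ where $r := p \wedge q$ is the componentwise minimum, so $a, b \ge 0$ have disjoint supports and $\|a\|_1 = \|b\|_1 = \varepsilon$, while $\|r\|_1 = 1 - \varepsilon$. The key structural observation is that $\varepsilon > 0$ forces $\mathrm{supp}(a)$ to be nonempty, so the disjoint set $\mathrm{supp}(b)$ has size at most $d - 1$. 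Applying the grouping property of Shannon entropy to the partition $\{\mathrm{supp}(a), \mathrm{supp}(b), \text{rest}\}$ of $\{1,\ldots,d\}$, I would express $H(p)$ and $H(q)$ so that the contributions depending only on $r$ cancel when taking the difference, leaving $H(q) - H(p)$ dominated by the term $\varepsilon\bigl(H(b/\varepsilon) - H(a/\varepsilon)\bigr)$ plus a boundary contribution from the two-level split into masses $\{1-\varepsilon, \varepsilon\}$. Bounding $H(b/\varepsilon) \le \log(d-1)$, $H(a/\varepsilon) \ge 0$, and the boundary term by $h(\varepsilon)$ should yield the claim, saturated by $p = (1, 0, \ldots, 0)$ and $q = (1-\varepsilon, \tfrac{\varepsilon}{d-1}, \ldots, \tfrac{\varepsilon}{d-1})$, for which direct computation gives exactly $H(q) = \varepsilon\log(d-1) + h(\varepsilon)$.

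The main obstacle is securing the sharp factor $\log(d-1)$ rather than the weaker $\log d$. Fannes' original argument only produces $\log d$ with an additive term $-2\varepsilon\log(2\varepsilon)$; Audenaert's improvement needs the disjointness of $\mathrm{supp}(a)$ and $\mathrm{supp}(b)$ to be genuinely exploited, together with the concavity of $-x\log x$ to absorb the residual $r$-contributions into the single binary entropy $h(\varepsilon)$. If the grouping-plus-Jensen route does not close the gap cleanly, the fallback is Audenaert's direct KKT analysis on the polytope $\{p, q \in \Delta_d : \tfrac{1}{2}\|p-q\|_1 = \varepsilon\}$: the Lagrangian's stationarity conditions force $q$ to be uniform on the coordinates where it exceeds $p$, and $p$ to be likewise concentrated, singling out exactly the extremal pair identified above and yielding the tight bound.
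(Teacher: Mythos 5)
The paper does not prove this statement: Lemma~\ref{lem:FAI} is imported verbatim from the literature (it is Audenaert's sharpening of Fannes' inequality, cited to~\cite{W13}) and is used as a black box in the proof of Lemma~\ref{lem:LBOBER}. So there is no in-paper proof to compare against, and a citation would have sufficed.

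Taken on its own terms, your outline follows the standard route and its skeleton is sound: the reduction to ordered spectra via Mirsky's inequality is correct (and the monotonicity of $t\mapsto t\log(d-1)+h(t)$ on $[0,1-1/d]$, together with the fact that ordered spectra have trace distance at most $1-1/d$, is indeed what lets you pass from ``$\le\varepsilon$'' to ``$=\varepsilon$''), the decomposition $p=r+a$, $q=r+b$ with disjointly supported $a,b$ of mass $\varepsilon$ is right, and your extremal pair does saturate the bound. The genuine gap is in the middle step. The grouping identity does \emph{not} make the $r$-dependent contributions cancel: writing $H(p)$ and $H(q)$ via the partition $\{\mathrm{supp}(a),\mathrm{supp}(b),\text{rest}\}$, the conditional distribution of $p$ on $\mathrm{supp}(a)$ is $(r+a)|_{\mathrm{supp}(a)}$ while that of $q$ is $r|_{\mathrm{supp}(a)}$, so the block entropies differ and nothing cancels for free. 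What actually has to be shown is the split $H(q)-H(p)=\sum_{i\in\mathrm{supp}(b)}[\eta(r_i+b_i)-\eta(r_i)]+\sum_{i\in\mathrm{supp}(a)}[\eta(r_i)-\eta(r_i+a_i)]$ with $\eta(x)=-x\log x$: the first sum is handled by subadditivity of $\eta$, giving $\varepsilon\log(d-1)+\eta(\varepsilon)$, but the second sum must be bounded by $\eta(1-\varepsilon)$, and this requires a real argument (e.g.\ a merging/superadditivity lemma showing the sum only increases when all of $\mathrm{supp}(a)$ is collapsed to a single coordinate of mass $1$, reducing to $\eta(1-\varepsilon)-\eta(1)$); the crude concavity estimate $\eta(r_i)-\eta(r_i+a_i)\le a_i\log(e\,p_i)$ gives only $\varepsilon\log e$, which exceeds $\eta(1-\varepsilon)$. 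Your stated fallback to Audenaert's variational/KKT analysis is the honest resolution, but as written the sketch does not close this step.
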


\begin{proof}[Proof of Lemma~\ref{lem:LBOBER}]
We consider the $R$-assisted QSR task explained in the proof of Proposition~\ref{prop:LBBED}.
To be specific, for each $n$, we consider an $R$-assisted QSR protocol
\begin{equation}
\mathcal{A}_n\colon\mathcal{L}\left(\sbigotimes_{i=1}^M A_i^{\otimes n} V^{\otimes n} W^{\otimes n} B^{(n)}_i  \right)
\longrightarrow
\mathcal{L}\left(\sbigotimes_{i=1}^M (A'_i)^{\otimes n} V^{\otimes n} W^{\otimes n} C^{(n)}_i  \right)
\end{equation}
of the quantum state $\ket{\xi}_{AVW}^{\otimes n}$
with error $\varepsilon_n$ satisfying
\begin{equation} \label{eq:RAerror}
\left\|
\mathcal{A}_n
\left( \xi^{\otimes n} \otimes \tilde{\Psi}_n \right)
-
\xi_{\mathrm{f}}^{\otimes n} \otimes \tilde{\Phi}_n
\right\|_{1}
\le
\varepsilon_n,
\end{equation}
where quantum systems $B^{(n)}_i$ and $C^{(n)}_i$ are defined by
\begin{equation}
B^{(n)}_i = \sbigotimes_{j\in[M]\setminus\{i\}} B^{(n)}_{i,j}
\quad \mathrm{and} \quad
C^{(n)}_i = \sbigotimes_{j\in[M]\setminus\{i\}} C^{(n)}_{i,j},
\end{equation}
and $\tilde{\Psi}_n$ and $\tilde{\Phi}_n$ are explained in Sec.~\ref{subsec:QEC}.
For each $n$, let $T_n^{\mathrm{bef}}$ and $T_n^{\mathrm{aft}}$ be total amounts of entanglement
between two sets $P$ and $P^{\mathsf{c}}$ of the users before and after performing the protocol $\mathcal{A}_n$, respectively.
Since the amount of entanglement between the two sets of the users cannot increase on average via LOCC~\cite{BDSW96}, we obtain that $T_n^{\mathrm{bef}}\ge T_n^{\mathrm{aft}}$ holds for each $n$.
Note that the amounts of entanglement are represented as
\begin{equation} \label{eq:Tbefn}
T_n^{\mathrm{bef}}
=H\left(\sbigotimes_{i\in P} A_{i}^{\otimes n} V^{\otimes n} B^{(n)}_{i}\right)_{\xi^{\otimes n} \otimes \tilde{\Psi}_n}
= nH\left(\sbigotimes_{i\in P} A_{i}V\right)_{\xi}
+\sum_{ i\in P } \sum_{ j\in P^{\mathsf{c}} } \log d_{B^{(n)}_{i,j}}
\quad \mathrm{and} \quad
T_n^{\mathrm{aft}}
= H\left(\sbigotimes_{i\in P} (A'_{i-1})^{\otimes n} V^{\otimes n} C^{(n)}_{i}\right)_{\mathcal{A}_n(\xi^{\otimes n} \otimes \tilde{\Psi}_n)},
\end{equation}
where $T_n^{\mathrm{bef}}$ is obtained by using the additivity of the von Neumann entropy~\cite{W13}.
By applying the monotonicity of the trace distance~\cite{W13} to the inequality in Eq.~(\ref{eq:RAerror}), we have
\begin{equation}
\varepsilon'_n\coloneqq \frac{1}{2}\left\|
\Tr_{\otimes_{i\in P^{\mathsf{c}}}(A'_{i-1})^{\otimes n}C^{(n)}_{i}W^{\otimes n}}
\left[
\mathcal{A}_n
\left( \xi^{\otimes n} \otimes \tilde{\Psi}_n \right)
\right]-
\Tr_{\otimes_{i\in P^{\mathsf{c}}}(A'_{i-1})^{\otimes n}C^{(n)}_{i}W^{\otimes n}}
\left[
\xi_{\mathrm{f}}^{\otimes n} \otimes \tilde{\Phi}_n
\right]
\right\|_{1}
\le
\varepsilon_n.
\end{equation}
By applying Lemma~\ref{lem:FAI} to the above inequality,
we obtain the following inequalities:
\begin{eqnarray}
\left|
T_n^{\mathrm{aft}}
-
H\left(\sbigotimes_{i\in P} (A'_{i-1})^{\otimes n}V^{\otimes n}C^{(n)}_{i}\right)_{\xi_{\mathrm{f}}^{\otimes n} \otimes \tilde{\Phi}_n}
\right|
&\le&
\varepsilon'_n
\log \left(d_{\sbigotimes_{i\in P} (A'_{i-1})^{\otimes n}C^{(n)}_{i}V^{\otimes n}} -1\right)+h(\varepsilon'_n)
\le
\varepsilon'_n
\log \left(d_{\sbigotimes_{i\in P} (A'_{i-1})^{\otimes n}C^{(n)}_{i}V^{\otimes n}}\right)+h(\varepsilon'_n) \\
&\le&
\varepsilon'_n \left(n\sum_{i\in P}\log d_{A'_{i-1}} +\sum_{i\in P}\log d_{C^{(n)}_i}+ n\log d_{V} \right)
+ h(\varepsilon'_n).
\end{eqnarray}
The additivity of the von Neumann entropy~\cite{W13} implies
\begin{equation}
H\left(\sbigotimes_{i\in P} (A'_{i-1})^{\otimes n}V^{\otimes n}C^{(n)}_{i}\right)_{\xi_{\mathrm{f}}^{\otimes n} \otimes \tilde{\Phi}}
=
nH\left(\sbigotimes_{i\in P} A'_{i-1}V\right)_{\xi_{\mathrm{f}}}
+\sum_{ i\in P } \sum_{ j\in P^{\mathsf{c}} } \log d_{C^{(n)}_{i,j}}.
\end{equation}
Consequently, $T_n^{\mathrm{bef}}\ge T_n^{\mathrm{aft}}$ becomes
\begin{equation}
nH\left(\sbigotimes_{i\in P} A_{i}V\right)_{\xi}
+\sum_{ i\in P } \sum_{ j\in P^{\mathsf{c}} } \log d_{B^{(n)}_{i,j}}
\ge nH\left(\sbigotimes_{i\in P} A'_{i-1}V\right)_{\xi_{\mathrm{f}}}
+\sum_{ i\in P } \sum_{ j\in P^{\mathsf{c}} } \log d_{C^{(n)}_{i,j}}
-\varepsilon'_n \left(n\sum_{i\in P}\log d_{A'_{i-1}} +\sum_{i\in P}\log d_{C^{(n)}_i}+ n\log d_{V} \right)
- h(\varepsilon'_n).
\end{equation}
This implies that
\begin{equation}
\sum_{ i\in P } \sum_{ j\in P^{\mathsf{c}} } \frac{1}{n}\left(
\log d_{B^{(n)}_{i,j}} - \log d_{C^{(n)}_{i,j}}
\right)
\ge
H\left(\sbigotimes_{i\in P} A'_{i-1}V\right)_{\xi_{\mathrm{f}}}
-H\left(\sbigotimes_{i\in P} A_{i}V\right)_{\xi}
-\varepsilon'_n \left(\sum_{i\in P}\log d_{A'_{i-1}} +\frac{1}{n}\sum_{i\in P}\log d_{C^{(n)}_i}+ \log d_{V} \right)
- \frac{h(\varepsilon'_n)}{n},
\end{equation}
which holds for each $n$, and so we obtain that
\begin{equation}
e_P(\psi,\{\mathcal{A}_n\})
\ge
H\left(\sbigotimes_{i\in P} A'_{i-1}V\right)_{\xi_{\mathrm{f}}}
-H\left(\sbigotimes_{i\in P} A_{i}V\right)_{\xi}
=
H\left(\sbigotimes_{i\in P} A'_{i-1}V\right)_{U\ket{\psi_{\mathrm{f}}}}
-H\left(\sbigotimes_{i\in P} A_{i}V\right)_{U\ket{\psi}}
=
H\left(\sbigotimes_{i\in P} A_{i-1}V\right)_{U\ket{\psi}}
-H\left(\sbigotimes_{i\in P} A_{i}V\right)_{U\ket{\psi}}.
\end{equation}
Here, $U\ket{\psi_{\mathrm{f}}}$ and $U\ket{\psi}$ are abbreviations for $\mathds{1}_{A}\otimes U\ket{\psi_{\mathrm{f}}}$ and $\mathds{1}_{A}\otimes U\ket{\psi}$,
respectively, and the last equality comes from Eq.~(\ref{eq:inifinsame}).
Thus, we have $e_P(\psi,\{\mathcal{R}_n\})\ge l_P(\psi)$,
since the quantum system $V,W$ and the isometry $U$ are arbitrary, and any sequence of QSR protocols is also a sequence of $R$-assisted QSR protocols.
\end{proof}

From Proposition~\ref{prop:LBBED} and Lemma~\ref{lem:LBOBER},
we know that the lower bound $l_P$ of the exact scenario is also a lower bound of the asymptotic scenario.
In other words,
we can easily obtain a lower bound of the bipartite entanglement rate
by merely finding that of the bipartite entanglement difference in the exact scenario.
Note that
it is possible to apply this technique
to other quantum communication tasks,
such as the generalized quantum Slepian-Wolf~\cite{AJW18} and
the multi-party state merging~\cite{DH10},
in which users perform the tasks via entanglement-assisted LOCC
in the asymptotic scenario.

We remark that while the lower bound
in Proposition~\ref{prop:LBBED} is presented in terms of the von Neumann entropy,
this lower bound can be generalized
by replacing the von Neumann entropy with the R\'enyi entropies~\cite{DH02} under the exact scenario,
as in the one-shot quantum state exchange~\cite{LYAL19}.

\section{Proof of Theorem~\ref{thm:lb}} \label{app:thm:lb}

Let $r$ be any achievable total entanglement rate for the initial state $\psi$.
Then there is a sequence $\{\mathcal{R}_n\}_{n\in\mathbb{N}}$
of QSR protocols $\mathcal{R}_n$ of $\psi^{\otimes n}$
with error $\varepsilon_n$ such that
$e_{i,j}(\psi,\{\mathcal{R}_n\})$ converges for any $i,j$,
$e_{\mathrm{tot}}(\psi,\{\mathcal{R}_n\})=r$,
and $\lim_{n\to\infty} \varepsilon_n = 0$.
Since $l_{P_k}=l_{P_{M-k}}$ holds for any $k\in[M-1]$,
we have $l_k(\psi)=l_{M-k}(\psi)$.
So we will prove in the following that $l_k(\psi)$ is a lower bound on the OEC
for $1\le k \le \lfloor M/2 \rfloor$.

For a non-empty proper subset $P$ of the set $[M]$,
we defined a function $f_P\colon[M]\times [M]\to\{0,1\}$ as follows:
\begin{equation}
f_P(i,j)=
\begin{cases}
1 &\text{if ($i\in P$, $j\in P^{\mathsf{c}}$) or ($j\in P$, $i\in P^{\mathsf{c}}$)} \\
0 &\text{otherwise}.
\end{cases}
\end{equation}
Note that $f_P(j,i)=f_P(i,j)$ holds for each $i,j$, and
the bipartite entanglement rate $e_P(\psi,\{\mathcal{R}_n\})$
is represented as
\begin{equation} \label{eq:newform}
e_P(\psi,\{\mathcal{R}_n\})
= \sum_{\substack{ i,j\in[M] \\ i< j }} f_P(i,j)e_{i,j}(\psi,\{\mathcal{R}_n\}).
\end{equation}
For given elements $i,j$,
let $S_k^{ij}$ be the subset of the set $S_k$
whose elements $P_k$ satisfy $f_{P_k}(i,j)=1$.
Then the size of the set $S_k^{ij}$ is $n_k\coloneqq2{M-2 \choose k-1}$.
Observe that
$|S_k^{ij}|=|S_k^{i'j'}|$
holds for any elements $i$, $j$, $i'$ and $j'$.
This means that
for a given segment entanglement rate $e_{i,j}$
there exist $n_k$ subsets $P_k$ of the set $[M]$
such that $f_{P_k}(i,j)=1$,
i.e.,
\begin{equation} \label{eq:DoublSum}
\sum_{P_k\in S_k} f_{P_k}(i,j) e_{i,j}(\psi,\{\mathcal{R}_n\})
=
n_k e_{i,j}(\psi,\{\mathcal{R}_n\}).
\end{equation}
From Eqs.~(\ref{eq:newform}) and~(\ref{eq:DoublSum}),
it follows that
\begin{eqnarray}
e_{\mathrm{tot}}(\psi,\{\mathcal{R}_n\})
&=&
\sum_{\substack{ i,j\in[M] \\ i< j }} e_{i,j}(\psi,\{\mathcal{R}_n\})
=
\frac{1}{n_k}
\sum_{\substack{ i,j\in[M] \\ i< j }} \sum_{P_k\in S_k} f_{P_k}(i,j) e_{i,j}(\psi,\{\mathcal{R}_n\}) \\
&=&
\frac{1}{n_k}
\sum_{P_k\in S_k}\sum_{\substack{ i,j\in[M] \\ i< j }}  f_{P_k}(i,j) e_{i,j}(\psi,\{\mathcal{R}_n\})
=
\frac{1}{n_k}
\sum_{P_k\in S_k}e_{P_k}(\psi,\{\mathcal{R}_n\})
\ge
l_k(\psi). \label{eq:Goodequation}
\end{eqnarray}
Here, the last inequality comes from Eq.~(\ref{eq:defBER}) and Lemma~\ref{lem:LBOBER}.
This shows that $r\ge l_k(\psi)$ holds for any achievable total entanglement rate $r$
and any $k$.

\section{Proof of Lemma~\ref{lem:MPMM}} \label{app:PoLMPMM}

Let $\psi_0=\psi$, and for each $i\in[M-1]$,
we define quantum states $\psi_i$ for the quantum state merging tasks
as 
\begin{equation}
\psi_i
=
\left(
\sbigotimes_{j=1}^{i} \mathrm{id}_{\mathcal{L}(A_j)\to \mathcal{L}(A'_j)}
\otimes
\sbigotimes_{j=i+1}^{M} \mathrm{id}_{\mathcal{L}(A_j)}
\otimes
\mathrm{id}_{\mathcal{L}(E)}
\right)
(\psi).
\end{equation}
Note that, for each $i\in[M-1]$, $\psi_i$ is a pure quantum state on the quantum systems
\begin{equation}
\sbigotimes_{j=1}^{i} A'_j \otimes \sbigotimes_{j=i+1}^{M} A_j \otimes E.
\end{equation}
For each $i\in[M-1]$, the $i^{\mathrm{th}}$ user and the $(i+1)^{\mathrm{th}}$ user transform the quantum state $\psi_{i-1}$ into the quantum state $\psi_i$, by means of LOCC and shared entanglement.
To be specific, the quantum state on the quantum system $A_i$ of the $i^{\mathrm{th}}$ user is asymptotically merged to the $(i+1)^{\mathrm{th}}$ user's quantum system $A'_i$ by using the $(i+1)^{\mathrm{th}}$ user's quantum system $A_{i+1}$ as quantum side information. So, in this case, the rest quantum systems of the quantum state $\psi_{i-1}$,
\begin{equation}
E_i
\coloneqq
\sbigotimes_{j\in[M]\setminus[i+1]} A_j \otimes \sbigotimes_{j\in[i-1]} A'_j \otimes E,
\end{equation}
are considered as the parts of the environment system.
From the definition of the OEC of the quantum state merging~\cite{HOW05,HOW06},
for each $i\in[M-1]$, there is a sequence $\{\mathcal{M}_n^{(i)}\}_{n\in\mathbb{N}}$
of LOCC 
\begin{equation}
\mathcal{M}_n^{(i)}
\colon
\mathcal{L}\left(
A_i^{\otimes n} \otimes B^{(n)}_{i,i+1} \otimes A_{i+1}^{\otimes n}\otimes B^{(n)}_{i+1,i} \right)
\longrightarrow
\mathcal{L}\left(
{A'}_{i}^{\otimes n} \otimes C^{(n)}_{i,i+1} \otimes A_{i+1}^{\otimes n} \otimes C^{(n)}_{i+1,i} \right)
\end{equation}
of $\psi_{i-1}^{\otimes n}$ with error $\varepsilon_n^{(i)}$
which merges the part $A_i$ from the $i^{\mathrm{th}}$ user to the $(i+1)^{\mathrm{th}}$ user
and satisfies $\lim_{n\to\infty} \varepsilon_n^{(i)} = 0$,
\begin{eqnarray}
\left\|
\left( \mathcal{M}_n^{(i)} \otimes \mathrm{id}_{\mathcal{L}(E_i^{\otimes n})} \right)
\left( \psi_{i-1}^{\otimes n} \otimes \Psi_{n}^{(i)} \right)
-
\psi_{i}^{\otimes n} \otimes \Phi_{n}^{(i)}
\right\|_{1}
&\le&
\varepsilon_n^{(i)}, \label{eq:errern} \\
\lim_{n\to\infty}
\frac{1}{n}
\left(\log d_{B^{(n)}_{i,i+1}} - \log d_{C^{(n)}_{i,i+1}}
\right)
&=&H(A_i|A_{i+1}), 
\end{eqnarray}
where $\Psi_{n}^{(i)}$ and $\Phi_{n}^{(i)}$ are
pure maximally entangled states
on quantum systems $B^{(n)}_{i,i+1}B^{(n)}_{i+1,i}$ and $C^{(n)}_{i,i+1}C^{(n)}_{i+1,i}$
shared by the $i^{\mathrm{th}}$ user and the $(i+1)^{\mathrm{th}}$ user
with Schmidt rank $d_{B^{(n)}_{i,i+1}}$ and $d_{C^{(n)}_{i,i+1}}$,
respectively.
In addition,
from the Schumacher compression~\cite{S95,W13} together with 
the quantum teleportation~\cite{BBCJPW93},
there exists a sequence $\{\mathcal{S}_n\}_{n\in\mathbb{N}}$
of LOCC 
\begin{equation}
\mathcal{S}_n
\colon
\mathcal{L} \left(
A_M^{\otimes n} \otimes B^{(n)}_{M,1} \otimes B^{(n)}_{1,M} \right)
\longrightarrow
\mathcal{L} \left(
{A'}_{M}^{\otimes n}\otimes C^{(n)}_{M,1} \otimes C^{(n)}_{1,M} \right)
\end{equation}
of $\psi_{M-1}^{\otimes n}$ with error $\varepsilon_n^{(M)}$,
which transfers the part $A_M$ from the $M^{\mathrm{th}}$ user to the $1^{\mathrm{st}}$ user
and satisfies $\lim_{n\to\infty} \varepsilon_n^{(M)} = 0$,
\begin{eqnarray}
\left\|
\left( \mathcal{S}_n \otimes \mathrm{id}_{\mathcal{L}(E_M^{\otimes n})} \right)
\left( \psi_{M-1}^{\otimes n} \otimes \Psi_n^{(M)} \right)
-
\psi_{\mathrm{f}}^{\otimes n} \otimes \Phi_n^{(M)}
\right\|_{1}
&\le&
\varepsilon_n^{(M)}, \label{eq:errerM} \\
\lim_{n\to\infty}
\frac{1}{n}
\left(\log d_{B^{(n)}_{M,1}} - \log d_{C^{(n)}_{M,1}}
\right)
&=&H(A_M),
\end{eqnarray}
where $E_M = \sbigotimes_{j=1}^{M-1} A'_j \otimes E$,
and
$\Psi_n^{(M)}$ and $\Phi_n^{(M)}$ are
pure maximally entangled states
on quantum systems $B^{(n)}_{M,1}B^{(n)}_{1,M}$ and $C^{(n)}_{M,1}C^{(n)}_{1,M}$
shared by the $1^{\mathrm{st}}$ user and the $M^{\mathrm{th}}$ user
with Schmidt rank $d_{B^{(n)}_{M,1}}$ and $d_{C^{(n)}_{M,1}}$,
respectively.
For each $n\in\mathbb{N}$,
we define LOCC 
$\mathcal{R}_n$ 
as
\begin{equation}
\mathcal{R}_n
=\mathcal{S}_n \circ \mathcal{M}_n^{(M-1)} \circ\mathcal{M}_n^{(M-2)} \circ
\cdots \circ \mathcal{M}_n^{(1)}.
\end{equation}
We also define quantum states $\tilde{\Psi}_n$, $\tilde{\Phi}_n$, and $\tilde{\Omega}_n^{(i)}$ for each $i\in[M]$ as
\begin{equation}
\tilde{\Psi}_n = \sbigotimes_{i\in[M]} \Psi_n^{(i)},
\quad
\tilde{\Phi}_n = \sbigotimes_{i\in[M]} \Phi_n^{(i)},
\quad \mathrm{and} \quad
\tilde{\Omega}_n^{(i)}
=\sbigotimes_{j=i+1}^{M} \Psi_n^{(j)}
\otimes \sbigotimes_{j=1}^{i-1} \Phi_n^{(j)}.
\end{equation}
Observe that, for $i=2,\ldots,M-1$, the inequalities
\begin{eqnarray} \label{eq:triangle}
&&
\left\|
\left( \mathcal{M}_n^{(i)} \circ \cdots \circ \mathcal{M}_n^{(1)} \right)
\left( \psi^{\otimes n} \otimes \tilde{\Psi}_n \right)
-
\psi_{i}^{\otimes n} \otimes \Psi_n^{(i+1)}\otimes\tilde{\Omega}_n^{(i+1)}
\right\|_{1} \nonumber \\
&&\le
\left\|
\left( \mathcal{M}_n^{(i)} \circ \cdots \circ \mathcal{M}_n^{(1)} \right)
\left( \psi^{\otimes n} \otimes \tilde{\Psi}_n \right)
-
\left( \mathcal{M}_n^{(i)} \otimes \mathrm{id}_{\mathcal{L}(E_{i}^{\otimes n})} \right)
\left( \psi_{i-1}^{\otimes n} \otimes \Psi_n^{(i)} \right)
\otimes\tilde{\Omega}_n^{(i)}
\right\|_{1} \\
&&\quad
+
\left\|
\left( \mathcal{M}_n^{(i)} \otimes \mathrm{id}_{\mathcal{L}(E_{i}^{\otimes n})} \right)
\left( \psi_{i-1}^{\otimes n} \otimes \Psi_n^{(i)} \right)
\otimes\tilde{\Omega}_n^{(i)}
-
\psi_{i}^{\otimes n} \otimes \Psi_n^{(i+1)}
\otimes\tilde{\Omega}_n^{(i+1)}
\right\|_{1} \nonumber \\
&&\le
\left\|
\left( \mathcal{M}_n^{(i-1)} \circ \cdots \circ \mathcal{M}_n^{(1)} \right)
\left( \psi^{\otimes n} \otimes \tilde{\Psi}_n \right)
-
\psi_{i-1}^{\otimes n} \otimes \Psi_n^{(i)}
\otimes\tilde{\Omega}_n^{(i)}
\right\|_{1}
+
\left\|
\left( \mathcal{M}_n^{(i)} \otimes \mathrm{id}_{\mathcal{L}(E_{i}^{\otimes n})} \right)
\left( \psi_{i-1}^{\otimes n} \otimes \Psi_n^{(i)} \right)
-
\psi_{i}^{\otimes n} \otimes \Phi_n^{(i)}
\right\|_{1}
\end{eqnarray}
hold,
where the first inequality and the second inequality come from the triangle property
and the monotonicity of the trace distance~\cite{W13},
and other identity maps $\mathrm{id}_{E^{\otimes n}}$,
and $\mathrm{id}_{E_i^{\otimes n}}$ are omitted for convenience.
Then we have
\begin{eqnarray}
&&
\left\|
\left( \mathcal{R}_n \otimes \mathrm{id}_{E^{\otimes n}} \right)
\left( \psi^{\otimes n} \otimes \tilde{\Psi}_n \right)
-
\psi_{\mathrm{f}}^{\otimes n} \otimes \tilde{\Phi}_n
\right\|_{1} \nonumber \\
&&\le
\left\|
\left( \mathcal{R}_n \otimes \mathrm{id}_{E^{\otimes n}} \right)
\left( \psi^{\otimes n} \otimes \tilde{\Psi}_n \right)
-
\left( \mathcal{S}_n \otimes \mathrm{id}_{\mathcal{L}(E_M^{\otimes n})} \right)
\left( \psi_{M-1}^{\otimes n} \otimes \Psi_n^{(M)} \right)
\otimes\tilde{\Omega}_n^{(M)}
\right\|_{1} \\
&&\quad+
\left\|
\left( \mathcal{S}_n \otimes \mathrm{id}_{\mathcal{L}(E_M^{\otimes n})} \right)
\left( \psi_{M-1}^{\otimes n} \otimes \Psi_n^{(M)} \right)
\otimes\tilde{\Omega}_n^{(M)}
-
\psi_{\mathrm{f}}^{\otimes n} \otimes \tilde{\Phi}_n
\right\|_{1} \nonumber \\
&&\le
\left\|
\left( \mathcal{M}_n^{(M-1)} \circ \cdots \circ \mathcal{M}_n^{(1)} \right)
\left( \psi^{\otimes n} \otimes \tilde{\Psi}_n \right)
-
\psi_{M-1}^{\otimes n} \otimes \Psi_n^{(M)}
\otimes\tilde{\Omega}_n^{(M)}
\right\|_{1}
+
\left\|
\left( \mathcal{S}_n \otimes \mathrm{id}_{\mathcal{L}(E_M^{\otimes n})} \right)
\left( \psi_{M-1}^{\otimes n} \otimes \Psi_n^{(M)} \right)
-
\psi_{\mathrm{f}}^{\otimes n} \otimes \Phi_n^{(M)}
\right\|_{1} \\
&&\le
\left\|
\mathcal{M}_n^{(1)}
\left( \psi^{\otimes n} \otimes \tilde{\Psi}_n \right)
-
\psi_1^{\otimes n} \otimes \Psi_n^{(2)}
\otimes\tilde{\Omega}_n^{(2)}
\right\|_{1}
+
\sum_{i=2}^{M-1}
\left\|
\left( \mathcal{M}_n^{(i)} \otimes \mathrm{id}_{\mathcal{L}(E_{i}^{\otimes n})} \right)
\left( \psi_{i-1}^{\otimes n} \otimes \Psi_n^{(i)} \right)
-
\psi_{i}^{\otimes n} \otimes \Phi_n^{(i)}
\right\|_{1} \\
&&\quad
+
\left\|
\left( \mathcal{S}_n \otimes \mathrm{id}_{\mathcal{L}(E_M^{\otimes n})} \right)
\left( \psi_{M-1}^{\otimes n} \otimes \Psi_n^{(M)} \right)
-
\psi_{\mathrm{f}}^{\otimes n} \otimes \Phi_n^{(M)}
\right\|_{1} \nonumber \\
&&=
\sum_{i=1}^{M-1}
\left\|
\left( \mathcal{M}_n^{(i)} \otimes \mathrm{id}_{\mathcal{L}(E_{i}^{\otimes n})} \right)
\left( \psi_{i-1}^{\otimes n} \otimes \Psi_n^{(i)} \right)
-
\psi_{i}^{\otimes n} \otimes \Phi_n^{(i)}
\right\|_{1}
+
\left\|
\left( \mathcal{S}_n \otimes \mathrm{id}_{\mathcal{L}(E_M^{\otimes n})} \right)
\left( \psi_{M-1}^{\otimes n} \otimes \Psi_n^{(M)} \right)
-
\psi_{\mathrm{f}}^{\otimes n} \otimes \Phi_n^{(M)}
\right\|_{1}
\le
\sum_{i=1}^{M} \varepsilon_n^{(i)}.
\end{eqnarray}
Here, the first inequality and the second inequality hold
from the triangle property and the monotonicity of the trace distance again.
The third inequality is obtained 
by repeatedly applying the inequality in Eq.~(\ref{eq:triangle}).
Since $\psi=\psi_0$,
the last equality holds.
The last inequality comes from Eqs.~(\ref{eq:errern}) and~(\ref{eq:errerM}).
Set $\varepsilon_n=\sum_{i=1}^{M} \varepsilon_n^{(i)}$.
Then $\lim_{n\to\infty} \varepsilon_n = 0$,
since $\lim_{n\to\infty} \varepsilon_n^{(i)} = 0$ holds for each $i\in[M]$.
It follows that
there is a sequence $\{\mathcal{R}_n\}_{n\in\mathbb{N}}$
of QSR protocols $\mathcal{R}_n$ of $\ket{\psi}^{\otimes n}$
with error $\varepsilon_n$ such that $\lim_{n\to\infty} \varepsilon_n = 0$,
\begin{eqnarray}
e_{i,j}(\psi,\{\mathcal{R}_n\})
&=&
\begin{cases}
H(A_i|A_{i+1}) &\text{if $i\in[M-1]$ and $j=i+1$} \\
H(A_M) &\text{if $i=M$ and $j=1$} \\
0 &\text{otherwise},
\end{cases} \\
e_{\mathrm{tot}}(\psi,\{\mathcal{R}_n\})
&=&
H(A_M)+\sum_{i=1}^{M-1}H(A_i|A_{i+1}).
\end{eqnarray} 

\section{Proof of Theorem~\ref{thm:SCOFC}} \label{app:PoTsc}

To prove Theorem~\ref{thm:SCOFC},
we use the following lemma.

\begin{Lem} \label{lem:represent}
The lower bound $l_1(\psi)$ shown in Theorem~\ref{thm:lb} is lower bounded by
\begin{equation}
l_1(\psi)
\ge
\frac{1}{2} \max_{D\subseteq[M]}
\left|
\sum_{i_j\in D} (-1)^j H(E|A_{i_j})_{\psi}
\right|,
\end{equation}
where $D$ denotes a subset $\{{i_1},{i_2},\ldots,{i_{2k}}\}$
of the set $[M]$
with $k = 1,\ldots,\lfloor M/2 \rfloor$ and $i_1<i_2<\cdots<i_{2k}$,
and the maximum is taken over all possible subsets $D$ whose sizes are even.
\end{Lem}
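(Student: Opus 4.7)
The plan is to exploit the freedom of choosing the isometry $U$ independently in each term of the identity $l_1(\psi) = \tfrac{1}{2}\sum_{i=1}^M l_{\{i\}}(\psi)$. Two extremal choices are especially useful: taking $U = \mathrm{id}_E$ (so $V = E$ and $W$ is trivial) gives the bound $l_{\{i\}}(\psi) \geq H(A_{i-1}E)_\psi - H(A_i E)_\psi$, while taking $V$ one-dimensional (so $U$ embeds $E$ entirely into $W$) gives $l_{\{i\}}(\psi) \geq H(A_{i-1})_\psi - H(A_i)_\psi$. Either choice used uniformly over all $i$ produces a cyclic telescoping sum that vanishes, so the nontrivial content must come from mixing the two choices over different indices.

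Given a subset $D = \{i_1 < i_2 < \cdots < i_{2k}\}$ of $[M]$, I would set $S = \bigcup_{l=1}^{k}\{i_{2l-1}+1, i_{2l-1}+2, \ldots, i_{2l}\}$, apply the $V = E$ bound for each $i \in S$, and the trivial-$V$ bound for each $i \in S^{\mathsf{c}}$. The sum over $S$ telescopes to $\sum_{l=1}^{k}\bigl[H(A_{i_{2l-1}}E)_\psi - H(A_{i_{2l}}E)_\psi\bigr]$, while the sum over $S^{\mathsf{c}}$, after invoking the cyclic identity $\sum_{i=1}^M\bigl[H(A_{i-1})_\psi - H(A_i)_\psi\bigr] = 0$, telescopes to $\sum_{l=1}^{k}\bigl[H(A_{i_{2l}})_\psi - H(A_{i_{2l-1}})_\psi\bigr]$. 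Combining the two and using $H(E|A_i)_\psi = H(A_iE)_\psi - H(A_i)_\psi$ yields
\begin{equation}
\sum_{i=1}^M l_{\{i\}}(\psi) \geq \sum_{l=1}^{k}\bigl[H(E|A_{i_{2l-1}})_\psi - H(E|A_{i_{2l}})_\psi\bigr] = -\sum_{j=1}^{2k}(-1)^j H(E|A_{i_j})_\psi.
\end{equation}
Swapping the roles of $S$ and $S^{\mathsf{c}}$ produces the opposite-sign bound, so combining the two gives the absolute value. Dividing by $2$ and maximizing over $D$ produces the claimed inequality.

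The main obstacle I expect is making sure the chosen partition $S$ aligns the cyclic telescoping with the alternating-sign pattern in the target expression; in particular, one must be careful with the mod-$M$ convention for $A_{i-1}$ when $i=1$. Once $S$ is fixed as above, everything reduces to elementary telescoping and the identity relating joint and conditional entropies. A minor but necessary check is that the trivial-$V$ choice is admissible in the optimization defining $l_{\{i\}}(\psi)$: it corresponds to an isometry $U\colon E \to V\otimes W$ with a one-dimensional $V$, which is permitted since $V$ and $W$ are arbitrary quantum systems in Eq.~(\ref{eq:lP}).
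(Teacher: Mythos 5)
Your proof is correct and follows essentially the same route as the paper: the paper also restricts the optimization in each $l_{\{i\}}(\psi)$ to the two extremal isometries (trivial $V$ versus $V=E$), assigns one or the other per index according to the alternating pattern determined by $D$ (encoded there via a bit string), telescopes, and obtains the opposite sign by complementing the assignment. The paper additionally proves the reverse inequality, i.e.\ that $\sum_i\max\{H(A_{i-1})_\psi-H(A_i)_\psi,\,H(A_{i-1}E)_\psi-H(A_iE)_\psi\}$ \emph{equals} the maximized alternating sum, but that extra direction is not needed for the stated lower bound, so your argument suffices.
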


\begin{proof}
It is easy to check that
$l_1(\psi)$ is lower bounded by
\begin{equation}
\frac{1}{2}
\sum_{i=1}^{M}
\max 
\left\{
H(A_{i-1})_{\psi}-H(A_{i })_{\psi},
H(A_{i-1}E)_{\psi}-H(A_{i }E)_{\psi}
\right\},
\end{equation}
by using the definition of the lower bound $l_i(\psi)$ in Eq.~(\ref{eq:defLK}).
So it suffices to show the equality
$\mathrm{LHS}=\mathrm{RHS}$,
where $\mathrm{LHS}$ and $\mathrm{RHS}$ are defined as
\begin{equation}
\mathrm{LHS}
=
\sum_{i=1}^{M}
\max 
\left\{
\alpha_i,
\beta_i
\right\}
\quad \mathrm{and} \quad
\mathrm{RHS}
=
\max_{D\subseteq[M]}
\left|
\sum_{i_j\in D}(-1)^j H(E|A_{i_j})_{\psi}
\right|,
\end{equation}
with $\alpha_i=H(A_{i-1})_{\psi}-H(A_{i })_{\psi}$
and $\beta_i=H(A_{i-1}E)_{\psi}-H(A_{i }E)_{\psi}$.

(i) To show $\mathrm{LHS} \le \mathrm{RHS}$,
we use functions $f_i\colon\{0,1\}\to\mathbb{R}$ defined as
$f_i(x) = (1-x) \alpha_i + x \beta_i$.
Let $\bold{b}$ be an $M$-bit string $\bold{b}=b_1b_2\cdots b_M$ such that 
$b_i\in\{0,1\}$ for each $i$.
Then LHS is represented as
\begin{equation}
\mathrm{LHS}
=\max_{\bold{b}}\sum_{i=1}^M f_i(b_i),
\end{equation}
where the maximum is taken over all $M$-bit strings.
In addition,
we observe that the equalities
\begin{eqnarray} \label{eq:relation}
\sum_{i=1}^M f_i(b_i)
&=&
\sum_{i=1}^M
\left[ (1-b_i)\alpha_i + b_i\beta_i \right]
=
\sum_{i=1}^M
b_i \left(\beta_i -\alpha_i\right)
=
\sum_{i=1}^M (b_i-1)\beta_i - \sum_{i=1}^M b_i\alpha_i
=
-\sum_{i=1}^M
\left[
(1-(1-b_i))\alpha_i + (1-b_i)\beta_i
\right] \\
&=&
-\sum_{i=1}^M f_i(1-b_i)
\end{eqnarray}
hold
for any $M$-bit string $\bold{b}$,
where
the second equality and the third equality come from
equalities
$\sum_{i=1}^M \alpha_i=\sum_{i=1}^M \beta_i=0$.
This implies
\begin{equation}
\mathrm{LHS}
=\max_{\bold{b}}
\left| \sum_{i=1}^M f_i(b_i) \right|,
\end{equation}
where the maximum is taken over all $M$-bit strings
having $k$ zero bits with $1\le k \le \lfloor M/2 \rfloor$.
For any $M$-bit string $\bold{b}=b_1b_2\cdots b_M$ with $k$ bits in state zero,
we can express $k$ zero bits and the other bits in state one 
using two functions $g_\bold{b}\colon[k]\to[M]$ and $h_\bold{b}\colon[M-k]\to[M]$
satisfying $b_{g_\bold{b}(i)}=0$ and $b_{h_\bold{b}(i)}=1$,
respectively.
Observe that
\begin{eqnarray}
\sum_{i=1}^M f_i(b_i)
&=&
\sum_{i=1}^k f_{g_\bold{b}(i)}(0) + \sum_{i=1}^{M-k} f_{h_\bold{b}(i)}(1)
=
\sum_{i=1}^k f_{g_\bold{b}(i)}(0) - \sum_{i=1}^k f_{g_\bold{b}(i)}(1)
=
\sum_{i=1}^k \left[
H(E|A_{g_\bold{b}(i) })_{\psi}-H(E|A_{g_\bold{b}(i)-1})_{\psi}
\right] \\
&=&
\sum_{i\in X\setminus Y} H(E|A_{i })_{\psi}
-\sum_{i\in Y\setminus X} H(E|A_{i})_{\psi},
\end{eqnarray}
where $X=\{g_\bold{b}(i):i\in[k]\}$ and $Y=\{g_\bold{b}(i)-1:i\in[k]\}$.
The second equality comes from the simple fact
\begin{equation}
\sum_{i=1}^k f_{g_\bold{b}(i)}(1) + \sum_{i=1}^{M-k} f_{h_\bold{b}(i)}(1)
=
\sum_{i=1}^M f_i(1)
=0.
\end{equation}
Since $1\le k \le \lfloor M/2 \rfloor$, the set $X$ is non-empty.
Let $l_X$ be the largest element of the set $X$.
Then $l_X\notin Y$, by the definition of the set $Y$, and so $X\setminus Y$ is non-empty.
Assume that $|X|=|Y|=s>0$ and $|X\setminus Y|=|X\setminus Y|=t>0$ for some natural numbers $s$ and $t$ with $t\le s$.
Then we can represent the sets $X$, $Y$, $X\setminus Y$, and $X\setminus Y$ as
\begin{equation}
X=\{x_1,x_2,\ldots,x_s\},\quad
Y=\{y_1,y_2,\ldots,y_s\},\quad
X\setminus Y= \{a_1,a_2,\ldots,a_t \},\quad
Y\setminus X= \{b_1,b_2,\ldots,b_t \},
\end{equation}
where $x_i<x_j$ and $y_i<y_j$ for each $i,j\in[s]$ with $i<j$,
and $a_k<a_l$ and $b_k<b_l$ for each $k,l\in[t]$ with $k<l$.

For each $i\in[s-1]$, we consider two consecutive elements $x_i$ and $x_{i+1}$ of the set $X$.
By the definition of the set $Y$, $x_i-1\in Y$ and $x_{i+1}-1\in Y$.
If $x_i+1=x_{i+1}$, then $x_i=x_{i+1}-1\in Y$, and so $x_i\notin X\setminus Y$.
Conversely, if $x_i\notin X\setminus Y$, then $x_i\in Y$.
By the definition of the set $Y$, $x_i+1\in X$.
Since $x_i<x_{i+1}$, we have $x_{i+1}=x_i+1$.
Thus, we obtain that, for each $i\in[s-1]$,
\begin{equation} \label{eq:EQUIV1}
x_i+1=x_{i+1}
\quad
\text{if and only if}
\quad
x_i\notin X\setminus Y.
\end{equation}
Similarly to the above equivalence, we also obtain that, for each $i\in[s-1]$,
\begin{equation} \label{eq:EQUIV2}
y_i+1=y_{i+1}
\quad
\text{if and only if}
\quad
y_{i+1}\notin Y\setminus X.
\end{equation}
Note that $x_1\le a_1$ holds in general, and equalities $x_1-1=y_1=b_1$ also hold, by the definition of the set $Y$.
Thus, $b_1<a_1$.

For the case that $x_1=a_1$, we have $a_1=b_1+1$. If $b_2=b_1+1$, then $a_1+1=b_2+1\in X$, by the definition of the set $Y$.
From Eq.~(\ref{eq:EQUIV1}), $a_1\in X$ and $a_1+1\in X$ means $a_1\notin X\setminus Y$, which contradicts to $a_1\in X\setminus Y$.
Thus, $a_1=b_1+1<b_2$.
For the case that $x_1<a_1$, we have $x_1\notin X\setminus Y$, since $a_1$ is the smallest element of $X\setminus Y$.
From Eq.~(\ref{eq:EQUIV1}), this means that $x_1+1$ is an element of the set $X$. 
If $x_1+1<a_1$, then $x_1\notin X\setminus Y$, since $a_1$ is the smallest element of $X\setminus Y$, and Eq.~(\ref{eq:EQUIV1}) implies that $x_1+2$ is an element of the set $X$.
In this way, we find a subset $\{x_1,x_1+1,\ldots,a_1\}$ of the set $X$, and so a set $\{x_1-1,x_1,\ldots,a_1-1\}$ is a subset of the set $Y$, by the definition of the set $Y$.
From Eq.~(\ref{eq:EQUIV2}), we obtain $a_1\le b_2$.
In addition, since $a_1$ is the element of the set $X\setminus Y$, $b_2$ can not be equal to $a_1$.
Thus, $a_1<b_2$.

If $a_2\le b_2$, then $a_2-1\in Y$, by the definition of the set $Y$.
Since $b_2$ is the second smallest element of the set $Y\setminus X$, $a_2-1\notin Y\setminus X$.
From Eq.~(\ref{eq:EQUIV2}), $a_2-1\notin Y\setminus X$ implies $a_2-2\in Y$.
If $b_1<a_2-2$, then $a_2-2\notin Y\setminus X$, since $b_2$ is the second smallest element of the set $Y\setminus X$, and so $a_2-3\in Y$ from Eq.~(\ref{eq:EQUIV2}).
In this way, we find a subset $\{b_1,b_1+1,\ldots,a_2-1\}$ of the set $Y$, and so we obtain that  $a_1\in\{b_1+1,b_1+2,\ldots,a_2\}\subset X$, by the definition of the set $Y$.
From Eq.~(\ref{eq:EQUIV1}), $a_1\notin X\setminus Y$.
In addition, since $a_1$ is the element of the set $X\setminus Y$, $b_2$ can not be equal to $a_1$.
Thus, $a_1<b_2$, which is a contradiction. Thus, $b_2<a_2$.

Consequently, we have shown that $b_1<a_1<b_2<a_2$. By repeatedly applying the above process,
we obtain that $b_i<a_i<b_{i+1}<a_{i+1}$ for each $i\in[t-1]$.
This shows that 
there is a subset $D = \{{i_1},{i_2},\ldots,{i_{2k}}\}$ of $[M]$
with $k \in \{1,\ldots,\lfloor M/2 \rfloor\}$
such that
$i_1<i_2<\cdots<i_{2k}$,
for each $j\in[k]$,
$i_{2j-1}\in Y\setminus X$ and $i_{2j}\in X\setminus Y$,
\begin{equation}
\sum_{i=1}^M f_i(b_i)
=\sum_{i_j\in D}(-1)^j H(E|A_{i_j})_{\psi}.
\end{equation}
Thus, $\mathrm{LHS} \le \mathrm{RHS}$ holds,
since the $M$-bit string $\bold{b}$ with $k$ zero bits is arbitrary.

(ii) We show $\mathrm{LHS} \ge \mathrm{RHS}$.
Let $D = \{{i_1},{i_2},\ldots,{i_{2k}}\}$ be a subset of $[M]$ with $k \in \{1,\ldots,\lfloor M/2 \rfloor\}$ and $i_1<i_2<\cdots<i_{2k}$
satisfying
$
\mathrm{RHS}
=
\left|
\sum_{i_j\in D}(-1)^j H(E|A_{i_j})_{\psi}
\right|$.
Set an $M$-bit string $\bold{b}$ as follows:
\begin{equation} \label{eq:Defbj}
b_j=
\begin{cases}
1 &\text{if $j\in\{ i_1,i_3,\ldots,i_{2k-1}\}$} \\
0 &\text{if $j\in\{ i_2,i_4,\ldots,i_{2k  }\}$} \\
b_{j+1} &\text{otherwise},
\end{cases}
\end{equation}
where $b_M$ is defined as $b_1$ when $M\notin D$.
We obtain the following equalities:
\begin{eqnarray}
\sum_{j=1}^M f_j(b_j)
&=&
 \sum_{j=1}^{i_1} f_{j}(1)
+\sum_{j=1}^k \sum_{l=i_{(2j-1)}+1}^{i_{2j}} f_{l}(0)
+\sum_{j=1}^{k-1} \sum_{l=i_{2j}+1}^{i_{(2j+1)}} f_{l}(1)
+\sum_{j=i_{2k}+1}^{M} f_{j}(1) \\
&=&
 \sum_{j=1}^{i_1} \beta_{j}
+\sum_{j=1}^k \sum_{l=i_{(2j-1)}+1}^{i_{2j}} \alpha_{l}
+\sum_{j=1}^{k-1} \sum_{l=i_{2j}+1}^{i_{(2j+1)}} \beta_{l}
+\sum_{j=i_{2k}+1}^{M} \beta_{j} \\
&=&
\left(
 \sum_{j=i_{2k}+1}^{M} \beta_{j}
+\sum_{j=1}^{i_1} \beta_{j} \right)
+\sum_{j=1}^k \sum_{l=i_{(2j-1)}+1}^{i_{2j}} \alpha_{l}
+\sum_{j=1}^{k-1} \sum_{l=i_{2j}+1}^{i_{(2j+1)}} \beta_{l} \\
&=&
\left(
 H(A_{i_{2k}}E)_\psi-H(A_{i_1}E)_\psi \right)
+\sum_{j=1}^k \left[ H(A_{i_{(2j-1)}})_\psi-H(A_{i_{2j}})_\psi \right]
+\sum_{j=1}^{k-1} \left[ H(A_{i_{2j}}E)_\psi-H(A_{i_{(2j+1)}}E)_\psi \right] \\
&=&
 H(A_{i_{2k}}E)_\psi-H(A_{i_1}E)_\psi
+H(A_{i_{(2k-1)}})_\psi-H(A_{i_{2k}})_\psi \\
&& +\sum_{j=1}^{k-1} \left[ H(A_{i_{(2j-1)}})_\psi-H(A_{i_{2j}})_\psi \right]
+\sum_{j=1}^{k-1} \left[ H(A_{i_{2j}}E)_\psi-H(A_{i_{(2j+1)}}E)_\psi \right] \nonumber \\
&=&
 H(E|A_{i_{2k}})_\psi-H(A_{i_1}E)_\psi
+H(A_{i_{(2k-1)}})_\psi +\sum_{j=1}^{k-1} H(A_{i_{(2j-1)}})_\psi
+\sum_{j=1}^{k-1} H(E|A_{i_{2j}})_\psi
-\sum_{j=1}^{k-1} H(A_{i_{(2j+1)}}E)_\psi  \\
&=&
\left( H(E|A_{i_{2k}})_\psi +\sum_{j=1}^{k-1} H(E|A_{i_{2j}})_\psi \right)
+\left(
H(A_{i_{(2k-1)}})_\psi +\sum_{j=1}^{k-1} H(A_{i_{(2j-1)}})_\psi
-H(A_{i_1}E)_\psi-\sum_{j=2}^{k} H(A_{i_{(2j-1)}}E)_\psi
\right) \\
&=&
\sum_{j=1}^{k} H(E|A_{i_{2j}})_\psi-\sum_{j=1}^{k} H(E|A_{i_{(2j-1)}})_\psi \\
&=&\sum_{j = 1}^{2k}(-1)^j H(E|A_{i_j})_{\psi},
\end{eqnarray}
where the fourth equality comes from the fact that
\begin{equation}
\sum_{i=n}^m \beta_i=H(A_{n-1}E)_\psi-H(A_{m}E)_\psi,
\quad \sum_{i=n}^m \alpha_i=H(A_{n-1})_\psi-H(A_{m})_\psi.
\end{equation}
In the case that the sum $\sum_{i_j\in D}(-1)^j H(E|A_{i_j})_{\psi}$ is negative,
we can find another $M$-bit string $\bold{b}'$ satisfying
$b'_i=1-b_i$, 
where $b_i$ is the $i^{\mathrm{th}}$ bit of the $M$-bit string $\bold{b}$
defined in Eq.~(\ref{eq:Defbj}).
By using the relation in Eq.~(\ref{eq:relation}),
we obtain
\begin{equation}
\sum_{j=1}^M f_j(b'_j)
=
-\sum_{i_j\in D}(-1)^j H(E|A_{i_j})_{\psi}.
\end{equation}
It follows that $\mathrm{LHS} \ge \mathrm{RHS}$.
\end{proof}

\begin{proof}[Proof of Theorem~\ref{thm:SCOFC}]
We prove the contrapositive of Theorem~\ref{thm:SCOFC}.
Suppose that $e_{\mathrm{opt}}(\psi)=0$.
Then $l_1(\psi)=0$,
since the lower bound $l_i$ on the OEC is non-negative.
So Lemma~\ref{lem:represent} implies 
\begin{equation}
\max_{D\subseteq[M]}
\left|
\sum_{i_j\in D}(-1)^j H(E|A_{i_j})_{\psi}
\right|=0,
\end{equation}
where $D$ is a subset $\{{i_1},{i_2},\ldots,{i_{2k}}\}$
of the set $[M]$
with $k = 1,\ldots,\lfloor M/2 \rfloor$ and $i_1<i_2<\cdots<i_{2k}$.
By choosing $D$ as a set $\{i,j\}$ with $i\neq j$,
we obtain
$H(E|A_{i})_{\psi}= H(E|A_{j})_{\psi}$ for any $i,j$.
\end{proof}

\section{Proof of Theorem~\ref{thm:necessary}} \label{app:PoTnecessary}

To prove Theorem~\ref{thm:necessary},
we use the following lemma.

\begin{Lem} \label{lem:equality}
Let $\ket{\psi}_{AE}$ be the initial state of the QSR task,
and let $\{\mathcal{R}_n\}_{n\in\mathbb{N}}$ be a sequence of
QSR protocols $\mathcal{R}_n$ of $\ket{\psi}^{\otimes n}$
with error $\varepsilon_n$ whose total entanglement rate $r$ is achievable.
If $r=0$,
then 
\begin{equation}
e_P(\psi,\{\mathcal{R}_n\})=l_P(\psi) 
\end{equation}
holds for any non-empty proper subset $P$ of $[M]$,
where $e_{P}$ and $l_P$ are defined in Eq.~(\ref{eq:defBER}) and Eq.~(\ref{eq:lP}), respectively.
\end{Lem}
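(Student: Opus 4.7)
The plan is to combine three ingredients already in hand: the per-partition lower bound from Lemma~\ref{lem:LBOBER}, the averaging identity from the proof of Theorem~\ref{thm:lb}, and the non-negativity of the averaged lower bound from Remark~\ref{rem:NNOEC}. Since Lemma~\ref{lem:LBOBER} already gives $e_P(\psi,\{\mathcal{R}_n\}) \ge l_P(\psi)$ for every non-empty proper $P \subseteq [M]$, the only work is to establish the reverse inequality under the hypothesis $r = 0$.

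First I would recall the combinatorial identity derived inside the proof of Theorem~\ref{thm:lb} (see the chain ending at Eq.~(\ref{eq:Goodequation})): for every $k$ with $1 \le k \le M-1$,
\begin{equation}
e_{\mathrm{tot}}(\psi,\{\mathcal{R}_n\})
=
\frac{1}{2\binom{M-2}{k-1}}
\sum_{P\in S_k} e_P(\psi,\{\mathcal{R}_n\}),
\end{equation}
which follows from the symmetry $e_{i,j}=e_{j,i}$ together with the count that every pair $\{i,j\}$ belongs to exactly $2\binom{M-2}{k-1}$ cuts $(P,P^{\mathsf{c}})$ with $|P|=k$. Setting $r = e_{\mathrm{tot}}(\psi,\{\mathcal{R}_n\}) = 0$ yields
\begin{equation}
\sum_{P\in S_k} e_P(\psi,\{\mathcal{R}_n\}) = 0.
\end{equation}

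Next, by Lemma~\ref{lem:LBOBER}, $e_P(\psi,\{\mathcal{R}_n\}) - l_P(\psi) \ge 0$ for every $P \in S_k$, and the non-negativity argument in Remark~\ref{rem:NNOEC} shows $\sum_{P\in S_k} l_P(\psi) \ge 0$. Combining these gives
\begin{equation}
0 \;\le\; \sum_{P\in S_k} l_P(\psi) \;\le\; \sum_{P\in S_k} e_P(\psi,\{\mathcal{R}_n\}) = 0,
\end{equation}
so $\sum_{P\in S_k} \bigl(e_P(\psi,\{\mathcal{R}_n\}) - l_P(\psi)\bigr) = 0$. Since each summand is non-negative, every summand vanishes, i.e., $e_P(\psi,\{\mathcal{R}_n\}) = l_P(\psi)$ for every $P \in S_k$. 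Letting $k$ range over $\{1,\dots,M-1\}$ covers all non-empty proper subsets of $[M]$, finishing the proof.

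There is no genuine obstacle here; the argument is essentially a sandwich trick, and the only thing to be careful about is invoking the averaging identity in the form used inside the proof of Theorem~\ref{thm:lb} (using $e_{i,j}=e_{j,i}$ so that the identity indeed involves the unordered-pair sum). The role of the hypothesis $r=0$ is exactly to make the upper and lower sums collapse together and force term-by-term equality in the chain of bounds.
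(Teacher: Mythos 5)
Your proof is correct and follows essentially the same route as the paper: the paper likewise combines Lemma~\ref{lem:LBOBER}, the averaging identity of Eq.~(\ref{eq:Goodequation}), and the non-negativity of $l_k(\psi)$ from Remark~\ref{rem:NNOEC} (together with $r\ge l_k(\psi)$ from Theorem~\ref{thm:lb}, so $l_k(\psi)=0$). The only difference is presentational—the paper argues by contradiction from a single partition $Q$ with $e_Q>l_Q$, whereas you run the same sandwich directly and force term-by-term equality, which is equivalent.
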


\begin{proof}
Since $r=0$,
Theorem~\ref{thm:lb} and Remark~\ref{rem:NNOEC} imply
that the lower bound $l_i(\psi)$ in Eq.~(\ref{eq:defLK}) is zero for each $i\in[M]$.

Suppose that there exists a non-empty proper subset $Q$ of $[M]$
such that $e_{Q}(\psi,\{\mathcal{R}_n\})\neq l_{Q}(\psi)$.
Then Lemma~\ref{lem:LBOBER} implies
\begin{equation} \label{eq:suppose}
e_{Q}(\psi,\{\mathcal{R}_n\}) > l_Q(\psi).
\end{equation}
If the size of the set $Q$ is $k$,
then we consider the set $S_k$ of subsets $P_k$ of $[M]$ whose size is $k$,
so that $Q\in S_k$.
Then we can obtain
\begin{equation}
0
=e_{\mathrm{tot}}(\psi,\{\mathcal{R}_n\})
=\frac{1}{n_k}
\sum_{P_k\in S_k}e_{P_k}(\psi,\{\mathcal{R}_n\})
>
\frac{1}{n_k}
\sum_{P_k\in S_k}l_{P_k}(\psi)
=
l_k(\psi)=0,
\end{equation}
where $n_k=2{M-2 \choose k-1}$,
which is a contradiction.
Here,
the second equality and the inequality come from Eqs.~(\ref{eq:Goodequation})
and~(\ref{eq:suppose}),
respectively.
Consequently,
$e_P(\psi,\{\mathcal{R}_n\})=l_P(\psi)$
holds for any non-empty proper subset $P$ of $[M]$.
\end{proof}

\begin{proof}[Proof of Theorem~\ref{thm:necessary}]
Since $r=0$,
Lemma~\ref{lem:equality} implies that
\begin{equation} \label{eq:LEQIN}
\sum_{i\in P} \sum_{j\in P^{\mathsf{c}}} e_{i,j}(\psi,\{\mathcal{R}_n\})
= l_P(\psi)
\end{equation}
holds
for any non-empty proper subset $P$ of $[M]$.
This can be interpreted as the following linear equation,
if we consider the segment entanglement rates $e_{i,j}(\psi,\{\mathcal{R}_n\})$
and the lower bounds $l_P(\psi)$ as unknowns and coefficients:
\begin{equation}
\sum_{i=1}^M\sum_{j=1}^M  c_{i,j}(P) e_{i,j}(\psi,\{\mathcal{R}_n\})
=
l_P(\psi),
\end{equation}
where the coefficient $c_{i,j}(P)$ is defined as
\begin{equation}
c_{i,j}(P)=
\begin{cases}
\frac{1}{2} &\text{if ($i\in P$, $j\in P^{\mathsf{c}}$) or ($j\in P$, $i\in P^{\mathsf{c}}$)} \\
0 &\text{otherwise}.
\end{cases}
\end{equation}
Note that $e_{i,i}(\psi,\{\mathcal{R}_n\})=0$ and $e_{j,i}(\psi,\{\mathcal{R}_n\})=e_{i,j}(\psi,\{\mathcal{R}_n\})$ for each $i,j\in[M]$.
In this way,
we construct a system of linear equations for each case as follows.

(i) For $M=3$,
there exist three unknowns of $e_{i,j}(\psi,\{\mathcal{R}_n\})$.
Consider the sets $P\subseteq[3]$ whose sizes are one. Then, from Eq.~(\ref{eq:LEQIN}), we obtain that
\begin{equation}
 e_{i,i+1}(\psi,\{\mathcal{R}_n\})
+e_{i,i+2}(\psi,\{\mathcal{R}_n\})
=
l_{\{i\}}(\psi)
\end{equation}
for each $i$.
This can be expressed as a system of linear equations as follows:
\begin{equation} \label{eq:LEsPO}
\begin{pmatrix}
1 & 0 & 1 \\
1 & 1 & 0 \\
0 & 1 & 1 \\
\end{pmatrix}
\begin{pmatrix}
e_{1,2}(\psi,\{\mathcal{R}_n\}) \\
e_{2,3}(\psi,\{\mathcal{R}_n\}) \\
e_{1,3}(\psi,\{\mathcal{R}_n\}) \\
\end{pmatrix}
=
\begin{pmatrix}
l_{\{1\}}(\psi) \\
l_{\{2\}}(\psi) \\
l_{\{3\}}(\psi) \\
\end{pmatrix}
.
\end{equation}
Note that if we consider other sets $P$ whose sizes are $k>1$, then we can have a different representation of the linear equations in Eq.~(\ref{eq:LEsPO}).
By simply solving this system,
we obtain
\begin{equation}
e_{i,j}(\psi,\{\mathcal{R}_n\})
=\frac{1}{2}\left(l_{\{i\}}(\psi)+l_{\{j\}}(\psi)-l_{\{k\}}(\psi)\right)\,
\end{equation}
where $\{i,j,k\}=[3]$,
which becomes
$e_{i,j}(\psi,\{\mathcal{R}_n\})=-l_{\{k\}}(\psi)=-l_{\{i,j\}}(\psi)$,
since $l_1(\psi)=0$.

(ii)
Similarly,
the system of linear equations corresponding to the case of $M=4$
can be represented as

\begin{equation}
\begin{pmatrix}
1 & 0 & 0 & 1 & 1 & 0 \\
1 & 1 & 0 & 0 & 0 & 1 \\
0 & 1 & 1 & 0 & 1 & 0 \\
0 & 0 & 1 & 1 & 0 & 1 \\
0 & 1 & 0 & 1 & 1 & 1 \\
1 & 0 & 1 & 0 & 1 & 1 \\
\end{pmatrix}
\begin{pmatrix}
e_{1,2}(\psi,\{\mathcal{R}_n\}) \\
e_{2,3}(\psi,\{\mathcal{R}_n\}) \\
e_{3,4}(\psi,\{\mathcal{R}_n\}) \\
e_{1,4}(\psi,\{\mathcal{R}_n\}) \\
e_{1,3}(\psi,\{\mathcal{R}_n\}) \\
e_{2,4}(\psi,\{\mathcal{R}_n\}) \\
\end{pmatrix}
=
\begin{pmatrix}
l_{\{1\}}(\psi) \\
l_{\{2\}}(\psi) \\
l_{\{3\}}(\psi) \\
l_{\{4\}}(\psi) \\
l_{\{1,2\}}(\psi) \\
l_{\{2,3\}}(\psi) \\
\end{pmatrix}
,
\end{equation}
and its solution is given by
\begin{equation}
e_{i,j}(\psi,\{\mathcal{R}_n\})
=\frac{1}{2}
\left( l_{\{i\}}(\psi) + l_{\{j\}}(\psi) - l_{\{i,j\}}(\psi) \right).
\end{equation}
Recall that, for any partition $\{P,P^{\mathsf{c}}\}$ of the set $[M]$, $l_P(\psi)=l_{P^{\mathsf{c}}}(\psi)$ holds.
For example, $l_{\{1,2\}}(\psi)=l_{\{3,4\}}(\psi)$ when $M=4$.

(iii)
Set $N=M(M-1)/2$.
If $M>4$,
the number of unknowns of $e_{i,j}(\psi,\{\mathcal{R}_n\})$ is $N$.
In this case,
it suffices to consider subsets $P_2$ of $[M]$
whose size is two
in order to construct a system of linear equations.
To be specific,
there exist $N$ different linear equations
\begin{equation} \label{eq:SpecifiSLE}
\sum_{i=1}^M\sum_{j=1}^M  c_{i,j}(P_2) e_{i,j}(\psi,\{\mathcal{R}_n\})
=
l_{P_2}(\psi),
\end{equation}
so we have a system of $N$ linear equations
with $N$ unknowns.
This system of linear equations can be represented as
a matrix equation of the form
\begin{equation}
D_M \bold{x}_M =\bold{b}_M,
\end{equation}
where the matrix $D_M$ is $N$ by $N$,
and the matrices $\bold{x}_M$ and $\bold{b}_M$ are $N$ by $1$.
To describe entries of these matrices,
we use a bijective function $f_M\colon[N]\to T_M$,
where $T_M$ is the set of all two-element subsets $P_2$ of $[M]$.
Then the entries of the matrices $D_M$, $ \bold{x}_M$, and $\bold{b}_M$
are given by
\begin{eqnarray} \label{eq:DefDM}
\left[ D_M \right]_{s,t}
&=&
\begin{cases}
0 &\text{if $s=t$} \\
1 &\text{if $s\neq t$ and $f_M(s)\cap f_M(t)\neq \emptyset$} \\
0 &\text{if $s\neq t$ and $f_M(s)\cap f_M(t)   = \emptyset$},
\end{cases} \\
\left[ \bold{x}_M \right]_{s,1}
&=&
e_{f_M(s)}(\psi,\{\mathcal{R}_n\}
), \\
\left[ \bold{b}_M \right]_{s,1}
&=&
 l_{f_M(s)}(\psi),
\end{eqnarray}
where $\left[ D_M \right]_{s,t}$ is derived from
the coefficients $c_{i,j}(P_2)$ in Eq.~(\ref{eq:SpecifiSLE}),
and $e_{f_M(s)}(\psi,\{\mathcal{R}_n\})$
indicates
the segment entanglement rate $e_{i_sj_s}(\psi,\{\mathcal{R}_n\})$
if $f_M(s)=\{i_s,j_s\}\subset[M]$.

Now,
we show that the matrix $D_M$ is invertible.
Consider an $N$ by $N$ matrix $D_M^{-1}$ defined as
\begin{equation} \label{eq:DefDMInverse}
\left[ D_M^{-1} \right]_{s,t}
=
\begin{cases}
\beta_M/\alpha_M &\text{if $s=t$} \\
\gamma_M/\alpha_M &\text{if $s\neq t$ and $f_M(s)\cap f_M(t)\neq \emptyset$} \\
-2/\alpha_M &\text{if $s\neq t$ and $f_M(s)\cap f_M(t)   = \emptyset$},
\end{cases}
\end{equation}
where $\alpha_M=2(M-2)(M-4)$, $\beta_M=2-(M-4)^2$, and $\gamma_M=M-4$.
For each $s\neq t$, define subsets $T_{s,s}^{(1)}$ and $T_{s,t}^{(i)}$ of $[N]$  as follows:
\begin{eqnarray}
T_{s,s}^{(1)}
&=&
\left\{ k\in [N] : k\neq s, |f_M(k)\cap f_M(s)|=1 \right\}, \\
T_{s,t}^{(2)}
&=&
\left\{ k\in [N] : s=k, t\neq k, f_M(t)\cap f_M(k)\neq \emptyset \right\}, \\
T_{s,t}^{(3)}
&=&
\left\{ k\in [N] : s\neq k, f_M(s)\cap f_M(k)\neq\emptyset, t\neq k,
f_M(t)\cap f_M(k)\neq \emptyset \right\}, \\
T_{s,t}^{(4)}
&=&
\left\{ k\in [N] : s\neq k, f_M(s)\cap f_M(k)=\emptyset, t\neq k, 
f_M(t)\cap f_M(k)\neq \emptyset \right\}.
\end{eqnarray}
The sizes of these sets are given by
\begin{eqnarray}
\left| T_{s,s}^{(1)} \right|
&=&
2(M-2), \\
\left| T_{s,t}^{(2)} \right|
&=&
\begin{cases}
1 &\text{if $f_M(s)\cap f_M(t)\neq \emptyset$} \\
0 &\text{otherwise},
\end{cases} \\
\left| T_{s,t}^{(3)} \right|
&=&
\begin{cases}
M-2 &\text{if $f_M(s)\cap f_M(t)\neq \emptyset$} \\
4 &\text{otherwise},
\end{cases} \\
\left| T_{s,t}^{(4)} \right|
&=&
\begin{cases}
M-3 &\text{if $f_M(s)\cap f_M(t)\neq \emptyset$} \\
2(M-4) &\text{otherwise}.
\end{cases}
\end{eqnarray}
We obtain that
the diagonal entries of the matrix $D_M^{-1}D_M$ are
\begin{equation}
\left[ D_M^{-1}D_M \right]_{s,s}
= \sum_{k=1}^{N} \left[ D_M^{-1} \right]_{s,k} \left[ D_M \right]_{s,k}
= \frac{\gamma_M}{\alpha_M}
\left| T_{s,s}^{(1)} \right|=\frac{\gamma_M}{\alpha_M}2(M-2) =1.
\end{equation}
Since the matrix $D_M$ is symmetric,
the first equality holds,
and by directly comparing Eqs.~(\ref{eq:DefDM}) and~(\ref{eq:DefDMInverse})
we obtain the second equality.
On the other hand,
observe that the equality
\begin{equation}
\left[ D_M^{-1} \right]_{s,k} \left[ D_M \right]_{k,t}
=
\begin{cases}
\beta_M/\alpha_M &\text{if $k\in T_{s,t}^{(2)}$} \\
\gamma_M/\alpha_M &\text{if $k\in T_{s,t}^{(3)}$} \\
-2/\alpha_M &\text{if $k\in T_{s,t}^{(4)}$} \\
0 &\text{otherwise}
\end{cases}
\end{equation}
holds for any $s,t,k\in[N]$ with $s\neq t$.
From the above equation,
the off-diagonal entries of the matrix $D_M^{-1}D_M$ are calculated as
\begin{equation}
\left[ D_M^{-1}D_M \right]_{s,t}
= \sum_{k=1}^{N} \left[ D_M^{-1} \right]_{s,k} \left[ D_M \right]_{k,t}
= \frac{1}{\alpha_M}
\left(
\beta_M\left| T_{s,t}^{(2)} \right|+\gamma_M\left| T_{s,t}^{(3)} \right|-2\left| T_{s,t}^{(4)} \right|
\right)
=0.
\end{equation}

This shows that the matrix $D_M^{-1}$ is the inverse of the matrix $D_M$,
and so $\bold{x}_M =D_M^{-1}\bold{b}_M$.
\end{proof}

\section{Proof of Lemma~\ref{lem:GHZtoSinglet}} \label{app:PoL}

To prove Lemma~\ref{lem:GHZtoSinglet},
we use the relative entropy of entanglement~\cite{DH99} between the $2^{\mathrm{nd}}$ user and the $3^{\mathrm{rd}}$ user
instead of the entanglement entropy between each user and the other two users,
since the entanglement entropies for the initial and final states are the same.

Suppose that
there exists a sequence $\{\mathcal{T}_n\}_{n\in\mathbb{N}}$
of LOCC $\mathcal{T}_n$ of $\phi^{\otimes n}$
with error $\varepsilon_n $ such that
$e_{i,j}(\phi,\{\mathcal{T}_n\})=0$ for each $i,j$
and $\lim_{n\to\infty} \varepsilon_n=0$,
where the segment entanglement rate $e_{i,j}$ is defined in Eq.~(\ref{eq:SER}).
From the monotonicity of the trace distance~\cite{W13},
we obtain that
\begin{equation}
\delta_n=\left\|
 \Tr_{{A'_1}^{\otimes n}{A'_2}^{\otimes n}F^{(n)}_{1,2}F^{(n)}_{1,3}}
\left[\phi_{\mathrm{f}}^{\otimes n} \otimes \tilde{\Phi}_n\right]
-\Tr_{{A'_1}^{\otimes n}{A'_2}^{\otimes n}F^{(n)}_{1,2}F^{(n)}_{1,3}}
\left[\mathcal{T}_n \left( \phi^{\otimes n} \otimes \tilde{\Psi}_n \right)\right]
\right\|_{1}
\le
\varepsilon_n.
\end{equation}
Let $D(\varrho\|\tau)$ be the quantum relative entropy
between two mixed states $\varrho$ and $\tau$,
i.e., $D(\varrho\|\tau)=\Tr[\varrho(\log \varrho-\log \tau)]$.
Then the relative entropy of entanglement of $\varrho_{XY}$ is defined by
\begin{equation}
E_R(X;Y)_{\varrho}
=\min_{\tau_{XY}\in \mathrm{SEP}(X;Y)}
D(\varrho_{XY}\|\tau_{XY}),
\end{equation}
where $\mathrm{SEP}(X;Y)$ is the set of all separable states on the system $XY$.
From the continuity of the relative entropy of entanglement~\cite{DH99},
if $\delta_n\le 1/3$,
then we have
\begin{eqnarray}
&&2
\left(
\delta_n (n \log d_{B'_1B'_2C'_1C'_2}+\log d_{F^{(n)}_{2,1}F^{(n)}_{2,3}F^{(n)}_{3,1}F^{(n)}_{3,2}})
-\delta_n\log \delta_n
\right)
+
4\delta_n \\
&&\ge
\left|
 E_R({B'_1}^{\otimes n}{B'_2}^{\otimes n}F^{(n)}_{2,1}F^{(n)}_{2,3};{C'_1}^{\otimes n}{C'_2}^{\otimes n}F^{(n)}_{3,1}F^{(n)}_{3,2})_{\Tr_{{A'_1}^{\otimes n}{A'_2}^{\otimes n}F^{(n)}_{1,2}F^{(n)}_{1,3}}
\left[\phi_{\mathrm{f}}^{\otimes n} \otimes \tilde{\Phi}_n\right]} \right. \\
&&\quad
\left.
-E_R({B'_1}^{\otimes n}{B'_2}^{\otimes n}F^{(n)}_{2,1}F^{(n)}_{2,3};{C'_1}^{\otimes n}{C'_2}^{\otimes n}F^{(n)}_{3,1}F^{(n)}_{3,2})_{\Tr_{{A'_1}^{\otimes n}{A'_2}^{\otimes n}F^{(n)}_{1,2}F^{(n)}_{1,3}}
\left[\mathcal{T}_n \left( \phi^{\otimes n} \otimes \tilde{\Psi}_n \right)\right]}
\right| \nonumber \\
&&\ge
 E_R({B'_1}^{\otimes n}{B'_2}^{\otimes n}F^{(n)}_{2,1}F^{(n)}_{2,3};{C'_1}^{\otimes n}{C'_2}^{\otimes n}F^{(n)}_{3,1}F^{(n)}_{3,2})_{\Tr_{{A'_1}^{\otimes n}{A'_2}^{\otimes n}F^{(n)}_{1,2}F^{(n)}_{1,3}}
\left[\phi_{\mathrm{f}}^{\otimes n} \otimes \tilde{\Phi}_n\right]} \label{eq:b1} \\
&&\quad
-E_R({B_1}^{\otimes n}{B_2}^{\otimes n}D^{(n)}_{2,1}D^{(n)}_{2,3};{C_1}^{\otimes n}{C_2}^{\otimes n}D^{(n)}_{3,1}D^{(n)}_{3,2})_{\Tr_{A_1^{\otimes n}A_2^{\otimes n}D^{(n)}_{1,2}D^{(n)}_{1,3}}
\left[\phi^{\otimes n} \otimes \tilde{\Psi}_n\right]}, \nonumber
\end{eqnarray}
where the second inequality comes from the fact that
the relative entropy of entanglement cannot increase under LOCC~\cite{W13}.
It is easy to check that two equalities
\begin{eqnarray}
\Tr_{A_1^{\otimes n}A_2^{\otimes n}D^{(n)}_{1,2}D^{(n)}_{1,3}}
\left[\phi^{\otimes n} \otimes \tilde{\Psi}_n\right]
&=& J_{B_1C_1}^{\otimes n}
\otimes
J_{B_2C_2}^{\otimes n}
\otimes I_n(2,1)_{D^{(n)}_{2,1}}\otimes \Psi^{(n)}_{2,3}
\otimes I_n(3,1)_{D^{(n)}_{3,1}}, \\
\Tr_{{A'_1}^{\otimes n}{A'_2}^{\otimes n}F^{(n)}_{1,2}F^{(n)}_{1,3}}
\left[\phi_{\mathrm{f}}^{\otimes n} \otimes \tilde{\Phi}_n\right]
&=& I_{B'_2}^{\otimes n}
\otimes
\left(\ketbra{\mathrm{ebit}}{\mathrm{ebit}}\right)_{B'_1C'_2}^{\otimes n}
\otimes
I_{C'_1}^{\otimes n}
\otimes I'_n(2,1)_{F^{(n)}_{2,1}}\otimes \Phi^{(n)}_{2,3}
\otimes I'_n(3,1)_{F^{(n)}_{3,1}}
\end{eqnarray}
hold.
Here, the mixed states $J$, $I_n(i,j)$, $I'_n(i,j)$, and $I$ are
\begin{equation}
J = \frac{1}{2}\left( \ketbra{00}{00}+\ketbra{11}{11} \right),
\quad
I_n(i,j)
= \frac{1}{d_{D^{(n)}_{i,j}}}\sum_{j=0}^{d_{D^{(n)}_{i,j}}-1} \ketbra{j}{j},
\quad
I'_n(i,j)
= \frac{1}{d_{F^{(n)}_{i,j}}}\sum_{j=0}^{d_{F^{(n)}_{i,j}}-1} \ketbra{j}{j},
\quad
I = \frac{1}{2}\left( \ketbra{0}{0}+\ketbra{1}{1} \right),
\end{equation}
where $d_{D^{(n)}_{i,j}}$ ($d_{F^{(n)}_{i,j}}$) is the Schmidt rank of the entanglement resource $\Psi^{(n)}_{i,j}$ $(\Phi^{(n)}_{i,j})$ on the quantum systems $D^{(n)}_{i,j}D^{(n)}_{j,i}$ ($F^{(n)}_{i,j}F^{(n)}_{j,i}$)
shared by the $i^{\mathrm{th}}$ user and the $j^{\mathrm{th}}$ user before (after) performing the QSR protocol $\mathcal{T}_n$.
So we obtain 
\begin{eqnarray} \label{eq:b2}
&&E_R({B_1}^{\otimes n}{B_2}^{\otimes n}D^{(n)}_{2,1}D^{(n)}_{2,3};{C_1}^{\otimes n}{C_2}^{\otimes n}D^{(n)}_{3,1}D^{(n)}_{3,2})_{\Tr_{A_1^{\otimes n}A_2^{\otimes n}D^{(n)}_{1,2}D^{(n)}_{1,3}}
\left[\phi^{\otimes n} \otimes \tilde{\Psi}_n\right]} \\
&&\le
n E_R(B_1B_2;C_1C_2)_{\Tr_{A_1A_2}\left[\phi\right]}
+
E_R(D^{(n)}_{2,1}D^{(n)}_{2,3};D^{(n)}_{3,1}D^{(n)}_{3,2})_{\Tr_{D^{(n)}_{1,2}D^{(n)}_{1,3}}
\left[\tilde{\Psi}_n\right]} \\
&&=E_R(D^{(n)}_{2,1}D^{(n)}_{2,3};D^{(n)}_{3,1}D^{(n)}_{3,2})_{\Tr_{D^{(n)}_{1,2}D^{(n)}_{1,3}}
\left[\tilde{\Psi}_n\right]}.
\end{eqnarray}
In the above,
the first inequality comes from the subadditivity~\cite{VW01} of the relative entropy of entanglement.
The last equality holds,
since $\Tr_{A_1A_2}\left[\phi\right]$ is separable.

By discarding systems ${B'_2}^{\otimes n}F^{(n)}_{2,1}$
and ${C'_1}^{\otimes n}F^{(n)}_{3,1}$,
we have
\begin{equation}
E_R({B'_1}^{\otimes n}{B'_2}^{\otimes n}F^{(n)}_{2,1}F^{(n)}_{2,3};{C'_1}^{\otimes n}{C'_2}^{\otimes n}F^{(n)}_{3,1}F^{(n)}_{3,2})_{\Tr_{{A'_1}^{\otimes n}{A'_2}^{\otimes n}F^{(n)}_{1,2}F^{(n)}_{1,3}}
\left[\phi_{\mathrm{f}}^{\otimes n} \otimes \tilde{\Phi}_n\right]}
\ge
E_R({B'_1}^{\otimes n}F^{(n)}_{2,3};{C'_2}^{\otimes n}F^{(n)}_{3,2})_{\ket{\mathrm{ebit}}_{B'_1C'_2}^{\otimes n}\otimes\Phi^{(n)}_{2,3}}.
\end{equation}
In addition,
Bob and Charlie can locally prepare the quantum states $I_{B'_2}^{\otimes n} \otimes I'_n(2,1)_{F^{(n)}_{2,1}}$
and $I_{C'_1}^{\otimes n} \otimes I'_n(3,1)_{F^{(n)}_{3,1}}$,
respectively.
It follows that
\begin{equation}
E_R({B'_1}^{\otimes n}F^{(n)}_{2,3};{C'_2}^{\otimes n}F^{(n)}_{3,2})_{\ket{\mathrm{ebit}}_{B'_1C'_2}^{\otimes n}\otimes\Phi^{(n)}_{2,3}}
\ge
E_R({B'_1}^{\otimes n}{B'_2}^{\otimes n}F^{(n)}_{2,1}F^{(n)}_{2,3};{C'_1}^{\otimes n}{C'_2}^{\otimes n}F^{(n)}_{3,1}F^{(n)}_{3,2})_{\Tr_{{A'_1}^{\otimes n}{A'_2}^{\otimes n}F^{(n)}_{1,2}F^{(n)}_{1,3}}
\left[\phi_{\mathrm{f}}^{\otimes n} \otimes \tilde{\Phi}_n\right]}.
\end{equation}
From the fact that $E_R(X;Y)_{\varrho}=H(X)_{\varrho}$ holds for any pure state $\varrho_{XY}$,
we have
\begin{equation} \label{eq:b3}
E_R({B'_1}^{\otimes n}F^{(n)}_{2,3};{C'_2}^{\otimes n}F^{(n)}_{3,2})_{\ket{\mathrm{ebit}}_{B'_1C'_2}^{\otimes n}\otimes\Phi^{(n)}_{2,3}}
=H({B'_1}^{\otimes n}F^{(n)}_{2,3})_{I_{{B'_1}^{\otimes n}}^{\otimes n}\otimes{I'_n(2,3)}_{F_{2,3}}}
=n+\log d_{F^{(n)}_{2,3}}.
\end{equation}
Similarly,
we have
\begin{equation} \label{eq:b4}
E_R(D^{(n)}_{2,1}D^{(n)}_{2,3};D^{(n)}_{3,1}D^{(n)}_{3,2})_{\Tr_{D^{(n)}_{1,2}D^{(n)}_{1,3}}
\left[\tilde{\Psi}\right]}
=\log d_{D^{(n)}_{2,3}}.
\end{equation}
By using Eqs.~(\ref{eq:b2}),~(\ref{eq:b3}), and~(\ref{eq:b4}),
Eq.~(\ref{eq:b1}) becomes
\begin{equation}
2\delta_n \log d_{B'_1B'_2C'_1C'_2}
+
\frac{\delta_n}{n}\left(2\log d_{F^{(n)}_{2,1}F^{(n)}_{2,3}F^{(n)}_{3,1}F^{(n)}_{3,2}}-2\log \delta_n + 4 \right)
+
\frac{1}{n}\left(\log d_{D^{(n)}_{2,3}}-\log d_{F^{(n)}_{2,3}} \right)
\ge 1.
\end{equation}
As $n\rightarrow\infty$,
this inequality becomes $0=e_{2,3}(\phi,\{\mathcal{T}_n\})\ge1$,
which is a contradiction.
Therefore,
it is impossible to transform
two GHZ states shared by Alice, Bob, and Charlie
into three ebits symmetrically shared among them via LOCC,
even under the catalytic use of entanglement resource.

If we consider a non-asymptotic scenario in which users begin this transformation with finite copies of the initial states, whether the transformation is possible or not under the non-asymptotic scenario with the catalytic use of entanglement is unknown, to the best of our knowledge.

\vfill \clearpage
\end{widetext}

\end{document}